\definecolor{NiColor}{RGB}{77,77,255}
\definecolor{NiColoRed}{RGB}{255,77,77}
\definecolor{NiCitation}{RGB}{77,255,77}
\def\sp{\hskip -5pt}
\def\spa{\hskip -3pt}
\def\emptyset{\varnothing} 
\def\b1{{1\!\!1}}
\def\cB{\mathscr{B}}
\def\cD{\mathscr{D}}
\def\cF{{\ca F}}
\def\cH{{\ca H}}
\def\cL{\mathscr{L}}
\def\cS{\mathscr{S}}
\def\sA{{\mathsf A}}
\def\sM{{\mathsf M}}
\def\sO{{\mathsf O}}
\def\sH{{\mathsf H}}
\def\sP{{\mathsf P}}
\def\sQ{{\mathsf Q}}
\def\sV{\mathsf{V}}
\def\sT{{\mathsf T}}
\def\sE{{\mathsf E}}
\def\bC{{\mathbb C}}           
\def\bN{{\mathbb N}}
\def\bM{{\mathbb M}}
\def\bR{{\mathbb R}}
\def\bZ{{\mathbb Z}}
\def\gB{{\mathfrak B}}
\def\gE{{\mathfrak E}}
\def\gF{{\mathfrak F}}
\def\beq{\begin{eqnarray}}
\def\eeq{\end{eqnarray}}
\newcommand{\ca}[1]{{\cal #1}}         
\newtheoremstyle{TheoremStyle}
{3pt}
{3pt}
{\slshape}
{}
{\bf}
{:}
{.5em}
{}
\theoremstyle{TheoremStyle}
\newtheorem{theorem}{Theorem}
\newtheorem{corollary}[theorem]{Corollary}
\newtheorem{proposition}[theorem]{Proposition}
\newtheorem{lemma}[theorem]{Lemma}
\newtheorem{definition}[theorem]{Definition}
\newtheorem{remark}[theorem]{Remark}
\begin{document}


\hfill{\sl May  2023} 
\par 
\bigskip 
\par 
\rm


\par
\bigskip
\large
\noindent
{\bf  On the Relativistic Spatial Localization  for  massive real scalar Klein-Gordon quantum particles}
\bigskip
\par
\rm
\normalsize 


\noindent  {\bf Valter Moretti$^{a}$ }\\
\par

\noindent 
  Department of  Mathematics, University of Trento, and INFN-TIFPA \\
 via Sommarive 14, I-38123  Povo (Trento), Italy.\\
 $^a$valter.moretti@unitn.it\\

 \normalsize

\par

\rm\normalsize

\rm\normalsize


\par

\begin{abstract}
I rigorously analyze a proposal,  introduced by D.R.Terno, about a   spatial localization observable for a Klein-Gordon massive real particle  in terms of a Poincar\'e-covariant  family of POVMs. I prove that these POVMs are actually a kinematic deformation of the Newton-Wigner PVMs.  The first moment of one of these POVMs however  exactly  coincides  with a restriction (on a core) of the Newton-Wigner selfadjoint  position operator, though the second moment does not.
This fact permits to preserve all nice properties of the Newton-Wigner position observable, dropping the unphysical features arising from  the Hegerfeldt theorem. The considered POVM does not permit spatially sharply   localized states, but it admits families of almost   localized states with arbitrary precision. Next, I establish that the Terno localization observable satisfies part of a  requirement introduced by D.P.L.Castrigiano about causal temporal evolution concerning the Lebesgue measurable spatial regions of any Minkowskian reference frame. The validity of the complete  Castrigiano's causality requirement is also proved  for  a notion of spatial localization which generalizes Terno's one in a natural way.
\end{abstract}
\tableofcontents

\section{Introduction}

A long-standing puzzling  issue of theoretical and mathematical physics  concerns the notion of {\em spatial localization of  a relativistic particle at given time}. The problem is difficult because of  a number of no-go results popped out over the years,  after the seminal  work of Newton and Wigner \cite{NW}. 
These theoretical snags  establish  that  apparently natural proposals to define a spatial observable  of a relativistic free particle (for a given Minkowski reference frame at a certain time) are actually forbidden by  general requirements concerning causal locality and positivity of the energy. The first victim of these no-go results is the very  Newton-Wigner localization notion.

My opinion  is that this issue has been  quite overlooked in spite of being  urgent: after all, experimental physicists  can assert, with a certain approximation,   where a relativistic  particle has been detected at a certain time in laboratories.
{\em What theoretical notion describes these kinds of claims by our colleagues?}

 The notion of position observable not only is perfectly defined in the non-relativistic regime, but it plays a very  central role in the  theoretical construction of the {\em corpus} itself  of the quantum theory.  The notion of position is involved in the first version of the {\em canonical commutation relations} and the theoretical explanation of the {\em Heisenberg principle}.
{\em How  is it possible that a so crucial theoretical notion  simply fades out when we pass to the relativistic regime? }

The situation is very delicate from the physical perspective. First of all,  we know that, trying to localize a particle under its Compton length,  gives rise to a pair of particles, so that a sharp localization seems not possible. In this sense the detectors should play an {\em active role} \cite{tesi}. However that is a physical fact which is predicted by {\em interacting QFT}. It is not clear how such an  obstruction should take place in an elementary (perhaps naive) mathematical description that disregard  the effects of Quantum Field Theory.

 In author's view, however, the  intricate nature of the problem  is also due to a frequent  confusion in the literature concerning  two entangled,  but actually physically distinct  issues.

\begin{itemize} 
\item[(I1)]  On the one hand, one can  focus on the properties and theoretical assumptions on the probability  of spatial localization,  {\em without paying attention to the post-measurement state}.  In that case, the major obstructions against apparently natural definitions of localization observables arise from a class of theoretical  results cumulatively called {\em Hegerfeldt's theorem} \cite{Hegerfeldt,Hegerfeldt2} and their more advanced re-formulations \cite{Castrigiano1,Castrigiano2}. They at least   prove that no sharp localization is possible if the generator of time evolution is bounded below. Sharp localization would imply non local features of the (time evolution of the) position probability distributions: a  superluminal spread of the probability distribution \cite{Ruijsenaars}. 
These no-go results concern any general description of the spatial  localization observable (at a given time)  in terms of {\em positive-operator valued measures}  (POVMs) and not only  {\em projection-valued measures} (PVMs).  Castrigiano \cite{Castrigiano2}  formulated  a precise  causality condition
((b) in Definition \ref{LCR})
 that every physically acceptable POVM (or PVM) --  which describes spatial localization -- should  satisfy independently of the issue of the post-measurement state.
 
\item[(I2)]  On the other hand, one can (also) focus on the {\em post-measurement state} arising {\em after} a position measurement. In that case, a list of no-go results has been accumulated over the years starting from the so called {\em Malament theorem}. It  in particular establishes that localization cannot be described in terms of PVMs -- i.e., not even   in terms of self-adjoint operators. It happens   when (a) the post-measurement state is produced by a projective measurement,  (b) the PVM satisfies natural requirements of locality (according to Hellwig-Kraus' analysis \cite{HK}), and (c) the generator of time evolution is positive or bounded below. Reinforcing the hypotheses of Malament statement, the no-go result can be extended to localization observables in terms of POVMs as established first  by Busch \cite{Buschloc}  and by Halvorson and Clifton later \cite{HC},  when a suitable post measurement procedure has been chosen (essentially an ideal L\"uders measurement).
\end{itemize}

However, there is no automatic way to pass from (I1) to (I2), especially  when the position observable is described in terms of a POVM. There are infinitely many measurement schemes (based on completely positive maps) which give rise to the same PVM or POVM while the post measurement states are completely different.  This arbitrariness was already noticed by von Neumann in his seminal book on the mathematical foundations of Quantum Mechanics and it is a fundamental tool in the modern theory of quantum measurement \cite{Buschbook}. The fact that the values of a position observable are continuous is a further source of problems.  Continuity of outcomes   rules out all naive state-updating procedures  on account  of the crucial  {\em Ozawa theorem} \cite{Ozawa}. The standard projective L\"uders scheme is physically untenable in this case, even if it is always  described as the prototype of  all state updating processes in many  textbooks of quantum mechanics.

Referring to (I2), it seems to me  that these no-go results against {\em every} notion of spatial localization observable  always rely on a precise choice of the description of the post-measurement state in terms of Kraus operators. (E.g.,  they are the square root of the effects of the POVM).  In my opinion,  this choice appears to oscillate between being too naive or too arbitrary. Therefore some apparently definite claims, relying upon the issue (I2),  about the  non-existence of any spatial localization observable \cite{HC} do not seem really motivated up to now. Even if they impose some severe constraint on the measurement scheme, the last word has not been said in my view.

Both issues (I1) and (I2)  rule out in particular  the already cited  {\em Newton-Wigner position  observable} \cite{NW} of a quantum relativistic particle. 

The  Newton-Wigner position  observable is described in terms of a family of  PVMs  $\sQ_{n,t}=\sQ_{n,t}(\Delta)$ -- where $\Delta$ ranges in the measurable sets of the rest 3-space $\Sigma_{n,t}$
of every given Minkowski reference frame $n$ at every given time $t$. 
This family of PVMs is covariant with respect to the Poincar\'e group. It is worth stressing that covariance with respect to spatial Euclidean subgroup (and some further technical hypotheses) uniquely determine the   family of $\sQ_{n,t}$ as a consequence of Mackay imprimitivity theory as proved by Wightman \cite{Wightman}. This is one of the theoretical  motivations which make the NW position observable quite appealing.

In view of the spectral theorem, the information of the family of PVMs $\sQ_{n,t}$ is completely encapsulated in the assignment of a set of selfadjoint operators, the {\em Newton-Wigner position operators} 
$$ N_{n,t}^\alpha := \int_{\Sigma_{n,t}} x^\alpha d\sQ_{n,t}(x)\quad \alpha = 0,1,2,3\:,$$
 where  $x^0=t$ and the Minkowski coordinates (selfadjoint operators)  $N_{n,t}^1,N_{n,t}^2,N_{n,t}^3$ of a particle are  co-moving with $n$. Obviously $N^0_{n,t}=tI$.

To make more intricated the issue, the Newton-Wigner position selfadjoint operator $N^\mu_{n,t}$ possesses quite natural and appealing properties in spite of the fact that the associated PVM  violates basic local-causality principles. 
In particular (see Section \ref{NWsec}), explicitly referring to the case of a scalar massive particle:

(i) natural covariance properties with respect to the Lorentz   (and Poincar\'e) group take place (on a suitable domain):
 $$U_\Lambda N_{n,t}^\alpha U_\Lambda^{-1} = (\Lambda^{-1})^\alpha_\beta N^{\beta}_{\Lambda n, t_{\Lambda}}\:, \quad \forall \Lambda \in O(1,3)_+\:;$$

(ii) a quite natural relativistic version of {\em Ehrenfest's theorem} is valid for $k=1,2,3$:
$$ U^{(n)\dagger}_t  N_{n,0}^k U^{(n)}_t  =   N_{n,t}^k =   N_{n,0}^k   + t\frac{P_{nk}}{P_{n0}}\:;$$

(iii) the worldline determined by the expectation values $\langle \psi |N^\mu_{t,n} \psi \rangle$ is timelike as is expected by a massive particle:
$$ \sum_{k=1}^3 \left(\frac{d}{dt} \langle \psi|  N^{k}_{n,t} \psi\rangle\right)^2 < 1 \:;$$

 (iv) Heisenberg's commutation relations are satisfied on a suitable dense invariant  domain (a core)
 $$[N_{n,t}^k, N_{n,t}^h]=[  P_{n h}, P_{n k}]=0\:,  \qquad 
[ N_{n,t}^k, P_{n h}]= i\hbar\delta^k_hI \:,$$

(v) this in particular 
 produces the standard statement of the {\em Heisenberg principle};

(vi) when the energy content of a  state vector $\psi$ is small if compared with the $mc^2$ of the particle, then the $N^k_{n,0}\psi$ tends to become $X^k\psi$, where $X^k$ is the non-relativistic position operator.\\

This paper is devoted to address the issue (I1) for a scalar Klein-Gordon particle with mass $m>0$.  To this end, a recent proposal of (non-commutative) POVM localization observable $\sA_{n,t}(\Delta)$  will be considered for massive  spin-$0$ particles. This proposal was  due to Terno \cite{Terno}.
This notion of localization, contrarily to the Newton-Wigner notion of  localization does not admit sharply localized states (Proposition \ref{PROPNL}), so that it is not in automatic conflict with the Hegerfeldt theorem. 
However it admits states which resemble  localized states with arbitrarily fine approximation (Proposition \ref{PROPNL}).    
An idea of  proof that  the spatial decay of the Terno  probabilities does not trigger the Hegerfeldt's superluminal phenomena appears in \cite{Terno}. We shall rigorously prove this fact  as a byproduct of the  achievement (B) below.

We shall show (Theorem \ref{TEO0}) that the POVM $\sA_{n,t}(\Delta)$ is actually a kinematic deformation of the PVM $\sQ_{n,t}(\Delta)$ in terms of the components of the four-momentum $P_n^\mu$ in the used Minkowski reference frame $n$:
$$\sA_{t,n}(\Delta) = \sQ_{t,n}(\Delta) + \frac{1}{2}\left( \frac{P_{n\mu}}{P_{n0}}  \sQ_{n,t}(\Delta) \frac{P_{n}^\mu}{P_{n0}}  + \frac{m}{P_{n0}} \sQ_{n,t}(\Delta)  \frac{m}{P_{n0}} \right)\:.$$
This relation implies in particular that the family of POVMs $\sA_{n,t}$ satisfies a covariance property with respect to the Poincar\'e group analogous to the one satisfied by $\sQ_{t,n}$.

 Three main results are next achieved in this paper by expanding and making mathematically rigorous some definitions and results discussed  in \cite{Terno}
and referring to some  ideas introduced in \cite{Castrigiano2}.\

(A)   Theorem \ref{TEO1} proves that, in spite of the difference of the two POVMs,  the first-moment operator $X^\alpha_{n,t}$ of Terno's POVM  coincides with the Newton-Wigner position operator. Therefore  $X^\alpha_{n,t}$ preserves  all good properties (i)-(vi) of that operator listed  above but (v).  In fact, a corrected version of the Heisenberg inequality will be established
$$\Delta_\psi X^k_{n,t} \Delta_\psi P_{nk} \geq  \frac{\hbar}{2} \sqrt{1 + 2\Delta_\psi P_{n,k}^2 
\left\langle \psi \left|\frac{(P_{n0})^2-(P_{nk})^2}{(P_{n0})^{4}}\psi\right. \right\rangle}\:. $$
It evidently reproduces  the standard inequality for large values of the mass. 

(B) Theorem \ref{TEO2} proves  that Terno's notion of spatial localization  satisfies a consequence  of the causality  requirement introduced  by Castrigiano \cite{Castrigiano2} as conjectured by Terno \cite{Terno}.
The validity of this  condition rules out, in particular,  the obstruction represented by  the Hegerfeldt's theorem.

This pair of achievements  promote $\sA_{n,t}(\Delta)$ to be a very good candidate for the relativistic notion of spatial localization of a massive scalar particle from the viewpoint of the issue (I1) at least.

(C) The validity of the complete  Castrigiano causality requirement is finally established (Theorem \ref{LAST}). However this result needs an improved version of the family of POVMs $\sA$ and a delicate discussion about the physical nature  of spatial localization.

In the recent years, several interesting problems related to  the issue (I2) and  local causality have been fruitfully addressed in the setting of  {\em algebraic quantum field theory} by  Fewster,  Verch and collaborators \cite{FewsterVerch,BFR,FJR} in a given  curved (globally hyperbolic) spacetime.
These papers  complete and largely extend the fundamental analysis  by  Hellwig and Kraus \cite{HK}.
 In that case, the relevant notion of localization refers to spacetime regions and to generic local observables in the Haag-Kastler setting. This paper instead deals with single particles (not quantum fields) and the localization refers to the space of a reference frame at a given time in Minkowski spacetime. It is clear that this is an ideal description which perhaps will reveal unphysical eventually, since realistic measurements take a finite lapse of time necessarily.  However, up to now, this type of ideality does not seem a source of the above mentioned obstructions to the definition of  a physically meaningful notion of spatial localization. 
 On the other hand it seems remarkable the fact that the Terno notion of spatial localization is actually a byproduct of QFT, at least from  a heuristic perspective: it arises from the {\em normally-ordered stress-energy tensor operator} whose nature is intrinsically part of basic constructions of QFT.

 This paper is organized as follows. Section 2 contains a quick technical recap on the massive  Klein-Gordon field in Minkowski spacetime,  stressing in particular on the covariance properties with respect to the relevant Poincar\'e unitary representation. Section 3 introduces the Newton-Wigner notion of spatial localization according to Wightman viewpoint. Section 4 illustrates some well-known problems with the NW notion of localization also presenting  general Castrigiano's causality requirement and the notion of causal time evolution, proving that this notion of localization is ruled out by the Hegerfeldt theorem. Section 5 introduces the notion of spatial localization presented by Terno into a rigorous setting and establishes  some important properties of it. Section 6 proves that this notion of spatial localization is in agreement with Castrigiano's notion of causal time evolution. Section 7  focuses on the causality  condition proposed by Castrigiano by introducing a second family of POVMs depending on a pair of reference frames. The final section is devoted to a discussion on the achieved results and possible developments.

 \section{Minkowski spacetime and Klein-Gordon massive particles}
 \subsection{Minkowski spacetime}
 In the rest of the paper, the {\bf Minkowski spacetime} $\bM$ is described as a four-dimensional  real  affine space -- whose vector space of translations is denoted by $\sV$ -- endowed with a Lorentzian metric $g$ in $\sV$ with signature $-,+,+,+$. 
A basis $\{v_0,v_1, v_2,v_3\}\in \sV$ is said to be {\bf pseudo orthonormal} if $g(v_\mu,v_\nu) = \eta_{\mu\nu}$, where $[\eta_{\mu\nu}]= diag(-1,1,1,1)$.

{\bf Causal vectors} $v\in \sV$ satisfy per definition $g(v,v) \leq 0$ and $v\neq 0$. Causal vectors with $g(v,v)=0$ are said {\bf null} or {\bf lightlike}. They are {\bf timelike} if $g(v,v) <0$. Finally,
{\bf spacelike vectors} satisfy $g(v,v)>0$.

 $(\bM,g)$ is {\bf time-oriented},  i.e., we choose a preferred half 
$\sV_+$ of the open cone of the  timelike vectors,   $g(v,v)<0$. The (causal!) vectors in $\overline{V_+}\setminus \{0\}$ are said {\bf future-directed}.  $\sT_+:=\{ v\in \sV_+ \:|\: g(v,v) = -1\}$ is the  set of unit future-directed timelike vectors.  The remaining half of the open cone $\sV$ of timelike vectors is denoted by  $\sV_-$.   The  {\bf past-directed} causal vectors are the elements of 
 $\overline{\sV_-}\setminus \{0\}$. The past-directed timelike and lightlike vectors are analogously the elements of $\sV_-$ and $\partial \sV_-\setminus \{0\}$ respectively.

 $J^+(S) \subset \bM$ denotes the {\bf causal future} of $S\subset \bM$.  It is the set of events $e\in \bM$ such that there is some $e'\in S$ such that
$e-e' \in \overline{\sV_+}$. An analogous definition is valid for the {\bf causal past} $J^-(S)$ of $S$. Notice that $S\subset J^{\pm}(S)$, $A\subset B$
implies $J^\pm(A) \subset J^\pm(B)$,
 and  $J^\pm\left(\bigcup_{\alpha \in A}S_\alpha\right)= \bigcup_{\alpha\in A}J^\pm(S_\alpha)$.\\


\begin{remark}
Throughout  $v\cdot u := g(v,u)$ when $u,v\in \sV$.
The light speed  is $c=1$ and the Planck constant satisfies $\hbar = 1$  unless I will  specify otherwise. \hfill  $\blacksquare$
 \end{remark}  
  
\subsection{Poincar\'e group, reference frames, and all that}
I adopt the conventions of \cite{FKW} regarding the interpretation of the relevant groups of transformations in $\bM$.
The {\bf orthochronous Lorentz group} $O(1,3)_+$ is the group of {\em linear} maps $\Lambda : \sV \to \sV$ which both preserve the metric $g$ and $\sV_+$.  The {\bf orthochronous Poincar\'e group} $IO(1,3)_+$ is the group of {\em affine} maps $\bM\to \bM$ whose associated  linear map belongs to  $O(1,3)_+$.  

If $A\subset \bM$ and $h\in O(1,3)_+$, then  $hA := \{h(e)\:|\: e \in A\}.$

 Every $n\in \sT_+$ defines a corresponding (Minkowskian) {\bf reference frame} in $\bM$. The three-dimensional {\bf rest spaces} of the reference frame $n$
 are the three-planes (pseudo ortho) normal to $n$. To label them,  one chooses a preferred point $o \in \bM$ called {\bf origin}. (Everything is discussed in this paper does not depend on this choice.)  
 A rest space of $n \in \sT _+$ is therefore denoted  by $\Sigma_{n,t}$, where $t\in \bR$ indicates the signed distance (the proper time of $n$)
of $\Sigma_{n,t}$ from $o$:
\beq \Sigma_{n,t} := \{e\in \bM \:|\: -(e-o)\cdot n = t\}\:.\label{Sigma}\eeq

With a choice of the origin $o\in \bM$, the orthochronous Poincar\'e group $IO(1,3)_+$ is isomorphic to the semidirect product of $O(1,3)_+$ and $\sV$ itself and acts as follows \beq (\Lambda, a) : \bM \ni e \mapsto o+ a+ \Lambda (e-o) \in \bM\quad \mbox{for $(\Lambda,a) \in O(1,3)_+\times \sV$}\:.\label{rapPG}\eeq

By construction,  if $h:=(\Lambda_h,a_h) \in O(1,3)_+$, \beq h \Sigma_{n,t} = \Sigma_{\Lambda_h n, t_{h}}\quad \mbox{where $t_{h} := -(he -o) \cdot  \Lambda_h n$ for every $e\in \Sigma_{n,t}$.}
\eeq
Notice that it turns out that $t_{h} = t- a \cdot \Lambda_h n$ does not depend on  the choice of $e\in \Sigma_{n,t}$.

The {\bf Euclidean group} ${\cal E}_n$ of $\Sigma_{n,t}$, i.e., the group of $h_{n,t}$-isometries, coincides with the subgroup of $IO(1,3)$ of elements $(\Lambda,a)$, which preserve $n$:
\beq
{\cal E}_n := \{ h\in O(1,3)_+ \:|\:  \Lambda_h n= n \} \label{EN}\:.
\eeq

With the choice of an origin $o$, $\bM$ is identified to $\sV$ by means of the bijective  map $M \ni e \mapsto e-o \in \sV$.  The choice of a basis  $\{v_1,\ldots, v_4\}\subset \sV$ defines a (global)  {\bf Cartesian coordinate system} of origin $o$ given by
$\bM \ni e \mapsto (x^1(e),\ldots, x^4(e)) \in \bR^4$ where  $e= o+ \sum_{\alpha=1}^4 x^\alpha(e)v_\alpha$.  That system of Cartesian coordinates is said to be {\bf Minkowskian} if the basis is pseudo orthonormal.
A {\bf Minkowskian  coordinate system}, 
with coordinates  $x^0=t,x^1,x^2,x^3$, is {\bf co-moving with $n\in O(1,3)_+$} if $\frac{\partial}{\partial x^0}=n$. Evidently $x^1,x^2,x^3$ define (global) Cartesian  orthonormal coordinates on each $\Sigma_{n,t}$ referring to the Euclidean  metric $h_{n,t}$ induced on it by $g$.

$\cB(\Sigma_{n,t})$ will denote the family of Borel subsets on $\Sigma_{n,t}$. 
Independently of the choice of the coordinates, $h_{n,t}$  induces a positive regular Borel measure  $d\Sigma_{n,t}$ on $\Sigma_{n,t}$. In  the above coordinates $x^1,x^2,x^3$, that measure is the restriction $d^3x=dx^1dx^2dx^3$   of  the Lebesgue measure  on $\bR^3$   to the Borel sets. The completion of $d^3x$  is the Lebesgue measure  itself as a consequence. The corresponding  completion of $d\Sigma_{n,t}$ will be named  {\bf Lebesgue measure} on $\Sigma_{n,t}$. I will make use of the same symbol $d\Sigma_{n,t}$ for a measure and its completion as the difference will be clear from the choice of the used $\sigma$-algebra.
The Lebesgue $\sigma$-algebra on $\Sigma_{n,t}$ will be denoted by $\cL(\Sigma_{n,t})$. 
\subsection{Completion of measures and $L^2$ spaces}\label{L2}
A positive $\sigma$-additive measure $\mu : \Sigma(X) \to [0,+\infty]$ and its completion  $\overline{\mu} : \overline{\Sigma(X)} \to [0,+\infty]$ give rise to the same Hilbert space $L^2(X, \mu)$ since  (see e.g., Proposition 1.57 \cite{Moretti1}), for every 
$\overline{\Sigma(X)} $-measurable function $f$, there is a $\Sigma(X)$-measurable function $g$ such that $f=g$ is true $\mu$-almost everywhere
 and either $\int_X f d\overline{\mu}= \int_X g d\mu$ or both  the integrals do not exist.  The identity evidently  extends  to $L^2$-scalar products of pairs of corresponding functions.  The map  $L^2(\mu) \ni [f]_\mu \mapsto [f]_{\overline{\mu}} \in L^2(\overline{\mu})$
is a Hilbert space isomorphism.

\subsection{Hilbert space and Poincar\'e group representation for the massive Klein-Gordon particle}
In the rest of this work,  I will take advantage of the Einstein convention of summation  over repeated Greek indices, from $0$ to $3$.

Let us consider a Klein-Gordon real particle of mass $m>0$ described by the $C^\infty$ scalar field $\varphi: \bM \to \bR$ satisfying the normally hyperbolic {\bf Klein-Gordon equation}
$$\Box \varphi - m^2 \varphi=0\:, \quad \mbox{where $\Box :=\eta^{\mu\nu} \partial_\mu\partial_\nu$ in every Minkowski coordinate system\:.}$$
As is well-known the quantization of that system, viewed as the restriction to the one-particle space of the second quantization procedure, relies  on the Hilbert space  of pure state vectors
$$\cH := L^2(\sV_{m,+}, \mu_m)\:.$$
Above, if $\sV_{m,+} := \{p\in \sV \:|\: g(p,p) = -m^2\:, \:p\in \sV_+\} $ denotes the {\bf mass shell} of (positive energy) four-momenta of mass $m$,  the Hilbert space inner product reads
\beq \langle \psi| \psi'\rangle := \int_{\sV_{m,+}} \overline{\psi(p)}\psi'(p) d\mu_m(p) \label{PS}\:. \eeq
Above, $\mu_m(p)$ is the Lorentz-invariant (positive Borel regular)  measure which takes the form 
\beq d\mu_m(p) = \frac{d^3p}{E_n(p)} \:, \quad E_n(p):= -n\cdot p\eeq in every Minkowskian reference frame co-moving with $n\in  \sT_+$, $d^3p= dp^1dp^2dp^3$ being the standard Lebesgue measure on $\bR^3$ identified with the rest spaces of $n$ by means of any Minkowskian coordinate system co-moving  with $n$ (that measure  is independent of the chosen Minkowskian coordinate frame co-moving with $n$).
Notice that 
$$E_n(p) = \sqrt{\vec{p}_n^2 + m^2} =p^0\:,  \quad \vec{p}_n:=  p + (n \cdot p)n \equiv  (p^1,p^2,p^3)$$
are respectively  the  $n$-temporal component and  $n$-spatial component  of the four-momentum $p$ 
respectively corresponding to $p^0$ and the triple $(p^1,p^2,p^3)$ in  any Minkowski coordinate system co-moving  with $n$.
As $E_n(p)$ depends only on $\vec{p}_n$, I will occasionally write $E_p(\vec{p}_n)$ in place of $E_n(p)$.

As usual, the (normal pure) {\bf quantum states} of the particle are represented by  the unit vectors $\psi \in {\cal H}$ up to phases.

 The inner product (\ref{PS}) is invariant under the strongly-continuous unitary  (active) action induced by\footnote{It is easy to prove that the result does not depend on the choice of $o$.}
(\ref{rapPG})
of the orthochronous Poincar\'e group $IO(1,3)_+$:
\beq (U_{(\Lambda, a)}\psi)(p) := e^{-i p\cdot a} \psi(\Lambda^{-1}p) \quad \mbox{if $\psi\in {\cal H}$ and  $(\Lambda, a)\in IO(1,3)_+$}\label{ACT1}\eeq This invariance property arises from the $O(1,3)_+$ invariance of  $\mu_m$: 
\beq \mu_m(\Lambda E) = \mu_m(E) \quad \mbox{for every Borel set $E$ in $\sV_{m,+}$.}\label{invM}\:.\eeq

The action of {\bf time translations} subgroup  along the time direction $n\in \sT_+$ reads $$U_{(I, \tau n)}\psi(p) = e^{i \tau E_n(p)}\psi(p)\:,$$
so that the self-adjoint generator of the one-parameter group, the multiplicative operator  \beq (P_{n 0}\psi)(p) := -(H_n\psi)(p):= -E_n(p)\psi(p)\eeq 
$$  D(H_n):= \left\{\psi \in L^2(\sV_{m+}, d\mu_m)\:\left|\: \int_{\sV_{m,+}} E_n(p)^2 |\psi(p)|^2 d\mu_m <+\infty\right. \right\}$$  has negative spectrum
since $\sigma(H_n) = \sigma_c(H_n) = [m,+\infty)$.  In this formalism, the {\bf time evolutor} in $n$
 is \beq U^{(n)}_\tau := U_{(I, -\tau n)} = e^{-i\tau H_n}\:. \label{TE} \eeq
 $H_n$ is the {\bf Hamiltonian operator} in the reference frame $n\in  \sT_+$.
The selfadjoint generators of the spatial  translations
 $$U_{(I,  a v_k)}\psi(p) = e^{-i a p_k}\psi(p)\:,$$
 in $n$ along the spatial unit vectors  $v_k$  of a co-moving Minkowskian coordinate system are therefore the multiplicative operators
 \beq P_{n k} :=  p_k \cdot = \vec{p}_k \cdot,\quad  k=1,2,3.\eeq
$$  D(P_{n k}):= \left\{\psi \in L^2(\sV_{m+}, d\mu_m)\:\left|\: \int_{\sV_{m,+}} (\vec{p}_n)_k^2 |\psi(p)|^2 d\mu_m <+\infty\right. \right\}.$$
Evidently $\sigma(P_{nk})= \sigma_c(P_{nk})= \bR$ for $k=1,2,3$.\\
The operators $(P_{n0}, P_{n1}, P_{n2}, P_{n3})$ define the (covariant) {\bf  components of the four-momentum} in $n$ with respect to the relevant Minkowskian coordinate system co-moving with $n$.  No specification of time $t$ is necessary because $P_{n \alpha}$ is trivially a constant of motion.\\
 
\begin{definition}
We say that $\psi \in \cal H$ is of {\bf Schwartz type} if  there is $n\in  \sT_+$ and  a Minkowski coordinate system co-moving  with $n$ such that 
$\bR^3 \ni \vec{p} \mapsto \psi(E_n(p), \vec{p}_n)\in \bC$ stays in ${\cS}(\bR^3)$ (the Schwartz space on $\bR^3$) when  represented in the spatial coordinates on $\bR^3$. 
The $\cal H$ subspace of vectors of Schwartz type will be denoted by ${\cal S}({\cal H})$.\\
\end{definition}

\begin{proposition}\label{PROPS}  The definition of ${\cal S}({\cal H})$ does not depend of the choice of  $n$ and co-moving Minkowskian coordinates. That is equivalent to saying the   ${\cal S}({\cal H})$ is  invariant under the representation $U$ of $IO(1,3)_+$ in (\ref{ACT1}). Finally, ${\cal S}({\cal H})$  is  dense in ${\cal H}$.
\end{proposition}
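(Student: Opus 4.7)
My plan is to work through the global chart $\pi_n: \sV_{m,+}\to \bR^3$, $\pi_n(p) := \vec{p}_n$, which is a smooth diffeomorphism with inverse $\vec{p}\mapsto \bigl(\sqrt{|\vec{p}|^2+m^2}, \vec{p}\bigr)$ in $n$-Minkowski coordinates. Writing $\psi_n := \psi\circ \pi_n^{-1}$, the condition $\psi \in {\cal S}({\cal H})$ says that $\psi_n \in \cS(\bR^3)$ for some $n$, and both claims (independence of $n$, invariance under $U$) reduce to statements about composition and multiplication in $\cS(\bR^3)$.

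The key technical ingredient is an elementary lemma: \emph{if $\phi:\bR^3\to \bR^3$ is a smooth bijection with smooth inverse, all partial derivatives of $\phi$ and $\phi^{-1}$ are of polynomial growth, and $|\phi(\vec{p})|\geq c|\vec{p}|-C$ for some $c>0$ and $C\geq 0$ (and analogously for $\phi^{-1}$), then $f\circ \phi\in \cS(\bR^3)$ if and only if $f\in \cS(\bR^3)$}. This follows by Fa\`a di Bruno together with the fact that $(\partial^\beta f)(\phi(\vec{p}))$ decays faster than any polynomial in $|\phi(\vec{p})|$ and hence, by the lower bound, faster than any polynomial in $|\vec{p}|$. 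A companion fact is that multiplication by $e^{i g(\vec{p})}$, for $g\in C^\infty(\bR^3)$ with derivatives of polynomial growth, preserves $\cS(\bR^3)$.

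I would then prove $U$-invariance by applying these facts to $U_{(\Lambda,a)}\psi(p) = e^{-ip\cdot a}\psi(\Lambda^{-1}p)$. In the $n$-chart the phase becomes $\exp\bigl(i[E_n(\vec{p}_n)a^0- \vec{p}_n\cdot \vec{a}_n]\bigr)$, smooth with polynomially bounded derivatives because $E_n(\vec{p}_n) = \sqrt{|\vec{p}_n|^2+m^2}$ is; and $\psi\circ \Lambda^{-1}$, read in the $n$-chart, equals $\psi_n\circ \phi_\Lambda$, where the components of $\phi_\Lambda(\vec{p}_n)$ are linear combinations of $E_n(\vec{p}_n)$ and the components of $\vec{p}_n$ dictated by the matrix of $\Lambda^{-1}$. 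Because $E_n$ is asymptotically linear, $\phi_\Lambda$ and its inverse $\phi_{\Lambda^{-1}}$ satisfy the lemma, giving $U_{(\Lambda,a)}{\cal S}({\cal H})\subseteq {\cal S}({\cal H})$, with equality by applying $U$ to $(\Lambda,a)^{-1}$ as well. Independence of $n$ follows immediately: any two $n, n'\in \sT_+$ are related by some $\Lambda\in O(1,3)_+$, and the same reduction shows that $\psi_n \in \cS(\bR^3)$ entails $\psi_{n'}\in \cS(\bR^3)$.

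For density, via $\pi_n$ the measure $\mu_m$ corresponds to the Borel-regular measure $E_n(\vec{p})^{-1}d^3p$ on $\bR^3$, so $\cH \cong L^2(\bR^3, E_n^{-1}d^3p)$. Since $C_c^\infty(\bR^3)\subset \cS(\bR^3)$ and $C_c^\infty(\bR^3)$ is dense in $L^2(\bR^3, E_n^{-1}d^3p)$ by standard approximation for Radon measures, the image of $C_c^\infty(\bR^3)$ under $\pi_n^{-1}$ lies in ${\cal S}({\cal H})$ and is dense in $\cH$; a fortiori so is ${\cal S}({\cal H})$. The main hurdle is the polynomial-growth bookkeeping in the Fa\`a di Bruno estimate, but it is very robust here because the Lorentz action on the mass shell is asymptotically linear in the 3-momentum, so all required bounds on $\phi_\Lambda$, $\phi_{\Lambda^{-1}}$ and their derivatives follow from the explicit formulas.
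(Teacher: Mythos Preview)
Your proof is correct. For the invariance under $U$ and independence of $n$, the paper simply declares these ``evident per direct inspection'' without detail; your argument via the composition lemma (polynomial-growth diffeomorphisms preserve $\cS(\bR^3)$) and the phase-multiplication observation is exactly what a careful verification of that inspection amounts to, so the approaches coincide in spirit, with yours being the fleshed-out version.

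For density, the routes differ slightly. You invoke density of $C_c^\infty(\bR^3)$ directly in the weighted space $L^2(\bR^3, E_n^{-1}d^3p)$, appealing to the Radon property of the measure. The paper instead works in the unweighted $L^2(\bR^3, d^3p)$: given $\psi\in\cH$, it approximates $\psi/\sqrt{E_n}\in L^2(\bR^3,d^3p)$ by Schwartz functions $\phi_k$, and then observes that $\sqrt{E_n}\,\phi_k\in\cS(\bR^3)$ (since $\sqrt{E_n}$ and all its derivatives have polynomial growth) approximates $\psi$ in $\cH$. Both are standard and equally short; your route avoids the multiplication trick at the cost of citing a density result for general Radon measures, while the paper's route reduces everything to the textbook fact that $\cS(\bR^3)$ is dense in ordinary $L^2$.
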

\begin{proof} See Appendix \ref{APPPROOFS}
\end{proof}

\begin{proposition}\label{PA}  ${\cal S}({\cal H})$ is invariant under the components of the four-momentum $P_{n\alpha}$, $\alpha=0,1,2,3$, referred to a reference frame $n\in  \sT_+$. Furthermore,  ${\cal S}({\cal H})$ is a core for each those symmetric operators (i.e., each of them is  essentially selfadjoint thereon).
\end{proposition}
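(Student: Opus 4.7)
The plan is to fix $n\in\sT_+$ with a co-moving Minkowskian coordinate system and exploit the natural unitary identification $\cH\simeq L^2(\bR^3,d^3p/E_n(p))$ obtained by restricting functions on $\sV_{m,+}$ to the $\vec p$-parametrization of the mass shell. Under this identification $\cS(\cH)$ is by definition precisely $\cS(\bR^3)$, and each $P_{n\alpha}$ acts as the multiplication operator by the real function $p_\alpha$, with $p_0=-E_n(p)=-\sqrt{\vec p^{\,2}+m^2}$ and $p_k$ the $k$-th Cartesian component for $k=1,2,3$. Since each $P_{n\alpha}$ is a multiplication operator by a real measurable function, it is automatically self-adjoint on its natural maximal domain $D(P_{n\alpha})$, so both halves of the statement reduce to questions about multiplication operators on $L^2(\bR^3,d^3p/E_n)$.

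For invariance I would verify that each $p_\alpha$ is a $C^\infty$ function on $\bR^3$ all of whose derivatives have at most polynomial growth. For the spatial components this is obvious. For $p_0$ it holds because the hypothesis $m>0$ prevents any singularity of $\sqrt{\vec p^{\,2}+m^2}$ at $\vec p=0$; a straightforward induction on the differentiation order shows $\partial^\beta E_n$ is a rational expression in $\vec p$ and $E_n$ whose growth at infinity is bounded by a polynomial. Multiplication by any such smooth tempered function preserves $\cS(\bR^3)$, so $P_{n\alpha}\cS(\cH)\subset \cS(\cH)$.

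For the core property it suffices to show that $\cS(\cH)$ is dense in $D(P_{n\alpha})$ in the graph norm. Given $\psi\in D(P_{n\alpha})$ I would proceed in two standard approximations. \emph{Truncation:} setting $\psi_R:=\chi_{\{|\vec p|\leq R\}}\psi$, dominated convergence yields $\psi_R\to\psi$ and $p_\alpha\psi_R\to p_\alpha\psi$ in $\cH$ as $R\to+\infty$. \emph{Mollification:} since $1/E_n\leq 1/m$, the compactly supported $\psi_R$ automatically lies in $L^2(\bR^3,d^3p)$, and for a standard mollifier $\eta_\varepsilon\in C_c^\infty(\bR^3)$ one has $\eta_\varepsilon * \psi_R\in C_c^\infty(\bR^3)\subset \cS(\bR^3)$ with $\eta_\varepsilon * \psi_R\to \psi_R$ in $L^2(\bR^3,d^3p)$; because these functions are all supported in a fixed bounded enlargement of $B_R$, convergence transfers to $\cH$ via the same bound $1/E_n\leq 1/m$, and since each $p_\alpha$ is bounded on that bounded set one also gets $p_\alpha(\eta_\varepsilon * \psi_R)\to p_\alpha\psi_R$ in $\cH$. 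A diagonal sequence then approximates $\psi$ in the graph norm by elements of $\cS(\cH)$, so $\cS(\cH)$ is a core and $P_{n\alpha}\rest\cS(\cH)$ is essentially self-adjoint.

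The only place demanding a moment's care is the measure-theoretic switch between $d\mu_m$ and $d^3p$ inside the mollification step; this is entirely controlled by the uniform bound $E_n\geq m>0$. It is thus precisely the positivity of the mass that makes both halves of the proposition work: it guarantees the smoothness and tempered growth of $E_n$ needed for the invariance, and the equivalence of the two $L^2$ norms on every bounded subset of $\bR^3$ needed for the core property.
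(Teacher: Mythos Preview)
Your argument is correct. The invariance step matches the paper's (multiplication by a smooth tempered function preserves Schwartz space; $m>0$ ensures $E_n$ is smooth with polynomially bounded derivatives). One tiny expository slip: the bound $1/E_n\le 1/m$ alone shows $L^2(d^3p)\hookrightarrow L^2(d\mu_m)$, not the converse; to conclude $\psi_R\in L^2(d^3p)$ you are really using that $E_n$ is \emph{bounded above} on the compact support of $\psi_R$, which you correctly invoke later when you speak of the equivalence of norms on bounded sets. This does not affect the validity of the argument.

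For the core property, however, you take a genuinely different route from the paper. You establish graph-norm density of $\cS(\cH)$ in $D(P_{n\alpha})$ by an explicit truncation--mollification argument. The paper instead observes that the one-parameter unitary groups $s\mapsto e^{isP_{n\alpha}}$, being multiplication by $e^{isp_\alpha}$ (a smooth bounded function with polynomially bounded derivatives), leave $\cS(\cH)$ invariant; it then invokes a standard corollary of Stone's theorem asserting that any dense subspace contained in the domain of a self-adjoint generator and invariant under the associated unitary group is automatically a core. The paper's approach is shorter and more structural, and it generalises immediately to any self-adjoint operator whose unitary group preserves a given dense domain. Your approach is more elementary and self-contained, making no appeal to abstract operator-theoretic machinery beyond dominated convergence and basic mollifier facts; it also makes the role of $m>0$ completely explicit.
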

\begin{proof} See Appendix \ref{APPPROOFS}
\end{proof}

 If $\psi \in {\cal S}({\cal H})$, the associated {\bf covariant  wavefunction} (the name is justified by (\ref{COVp})  below) is
\beq
\varphi_\psi(x) :=  \int_{\sV_{m,+}}  \frac{\psi(p)}{(2\pi)^{3/2}} e^{i p\cdot x} d\mu_m(p)\:, \label{wave}
\eeq
where $x(e) = e-o\in \sV$ is the vector representation of the events in $\bM$ with respect to the origin $o$. 
\begin{proposition}
If $\psi \in {\cal S}({\cal H})$, the associated wavefunction $\varphi_\psi$ satisfies the following.
\begin{itemize}
\item[(1)]  $\varphi_\psi\in C^\infty(M; \bC)$ and $\varphi_\psi(t,\cdot)  \in \cS(\bR^3)$ for every $t\in \bR$,  where $\bR^3 \equiv \Sigma_{n,t}$ through the  choice of a Minkowskian coordinate system co-moving  with any chosen $n\in  \sT_+$. 
\item[(2)] The Klein-Gordon equation is valid,
$\Box \varphi_\psi - m^2 \varphi_\psi=0\:.$
\item[(3)] If also $\psi' \in {\cal S}({\cal H})$, then
\beq\langle \psi |\psi' \rangle = \frac{i}{2}\int_{\Sigma_{n,t}}  \left(\overline{\varphi_\psi} \partial_n \varphi_{\psi'} -   \overline{\varphi_{\psi'}}  \partial_n \varphi_\psi \right) d\Sigma_{n,t} \label{SP}\eeq
where the  the right-hand side does not depend on the choice of both  $n\in  \sT_+$ and $t\in \bR$ since the left-hand side does not. 
\item[(4)]  The action (\ref{ACT1}) of  $IO(1,3)_+$ induces the standard active action on scalar fields in $\bM$,
\beq
\varphi_{U_{(\Lambda, a)}\psi}(x) = \varphi_\psi\left(\Lambda^{-1}(x-a)\right)\:. \label{COVp}
\eeq
\end{itemize}
Finally,  ${\cal H}$  coincides with  the completion of  ${\cal S}({\cal H})$  equipped with the inner product provided by the right-hand side of  (\ref{SP}). \\
\end{proposition}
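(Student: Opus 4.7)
The plan is to fix a reference frame $n\in\sT_+$ with co-moving Minkowskian coordinates $(t,\vec{x})$, parametrize the mass shell by $\vec{p}\in\bR^3$ via $p^0=E_n(p)=\sqrt{|\vec{p}|^2+m^2}$, so that $d\mu_m(p)=d^3p/E_n(p)$ and (\ref{wave}) reads
$$\varphi_\psi(t,\vec{x})=\frac{1}{(2\pi)^{3/2}}\int_{\bR^3}\frac{\psi(E_n(p),\vec{p})}{E_n(p)}\,e^{-iE_n(p)t+i\vec{p}\cdot\vec{x}}\,d^3p.$$
Thus $\varphi_\psi(t,\cdot)$ is the inverse spatial Fourier transform of $\vec{p}\mapsto\psi(E_n(p),\vec{p})e^{-iE_n(p)t}/E_n(p)$. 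Since $\psi\in\cS(\cH)$ makes $\vec{p}\mapsto\psi(E_n(p),\vec{p})$ a Schwartz function on $\bR^3$, while $1/E_n$ and $e^{-iE_n(\cdot)t}$ are $C^\infty$ with all $\vec{p}$-derivatives polynomially bounded, the integrand lies in $\cS(\bR^3)$; since the Fourier transform preserves $\cS(\bR^3)$ this proves $\varphi_\psi(t,\cdot)\in\cS(\bR^3)$, and joint smoothness in $(t,\vec{x})$ follows by differentiation under the integral (each $t$- or $x^k$-derivative only inserts $-iE_n(p)$ or $ip_k$, preserving Schwartz decay in $\vec{p}$). Part (2) is then immediate from the same differentiation: on $\sV_{m,+}$ one has $\eta^{\mu\nu}p_\mu p_\nu=-m^2$, whence $\Box e^{ip\cdot x}=m^2 e^{ip\cdot x}$ and $(\Box-m^2)\varphi_\psi=0$.

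For (3) I would avoid a direct Fubini argument, since the $\vec{x}$-integrand in the RHS of (\ref{SP}) has no spatial decay, and invoke Plancherel instead. In co-moving coordinates $\partial_n=\partial_t$; from the representation above, the spatial Fourier transforms of $\varphi_\psi(t,\cdot)$ and $\partial_t\varphi_\psi(t,\cdot)$ at $\vec p$ are $\psi(E_n(p),\vec p)e^{-iE_n(p)t}/E_n(p)$ and $-i\psi(E_n(p),\vec p)e^{-iE_n(p)t}$ respectively, both in $\cS(\bR^3)\subset L^2(\bR^3)$. Applying Plancherel to each of the two terms on the RHS of (\ref{SP}) collapses the $\vec{x}$-integration and cancels the $t$-dependent exponentials; combining the two contributions with the prefactor $i/2$ yields $\int\overline{\psi}\psi'\,d\mu_m=\langle\psi|\psi'\rangle$. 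Independence of the RHS from $(n,t)$ is then automatic from this identification with a manifestly frame- and time-independent object; alternatively it is Stokes' theorem for the divergence-free KG current $j^\mu=(i/2)(\overline{\varphi_\psi}\partial^\mu\varphi_{\psi'}-\varphi_{\psi'}\partial^\mu\overline{\varphi_\psi})$, whose conservation follows from~(2).

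For (4), substituting (\ref{ACT1}) into (\ref{wave}), changing variable $q=\Lambda^{-1}p$, and using the $O(1,3)_+$-invariance (\ref{invM}) of $\mu_m$ together with the isometry identity $\Lambda q\cdot y=q\cdot\Lambda^{-1}y$ give
$$\varphi_{U_{(\Lambda,a)}\psi}(x)=\frac{1}{(2\pi)^{3/2}}\int\psi(q)\,e^{iq\cdot\Lambda^{-1}(x-a)}\,d\mu_m(q)=\varphi_\psi\bigl(\Lambda^{-1}(x-a)\bigr).$$
The completion statement is then immediate: $\cS(\cH)$ is dense in $\cH$ by Proposition~\ref{PROPS}, and (\ref{SP}) shows that the KG inner product on wavefunctions coincides on $\cS(\cH)$ with the $L^2$ one, so the two completions agree. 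The main obstacle is (3): the $\vec{x}$-integral in (\ref{SP}) is not absolutely convergent as a triple integral, so Plancherel has to replace Fubini; one must also track sign conventions in signature $(-,+,+,+)$ (where $\partial_t e^{ip\cdot x}=-iE_n(p)e^{ip\cdot x}$) and the conjugation pattern of the two terms so that the bilinear combinatorics really produces $\langle\psi|\psi'\rangle$.
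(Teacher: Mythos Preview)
Your proposal is correct and follows essentially the same approach the paper indicates: the paper does not give a detailed proof but simply states that these are ``very well known facts \ldots\ based on elementary results of the theory of Fourier(-Plancherel) transform,'' with the final completion statement following from (\ref{SP}) and the density of $\cS(\cH)$ in $\cH$ (Proposition~\ref{PROPS}). Your argument fills in precisely these details, and your observation that Plancherel rather than Fubini is the right tool for~(3) is a valid point of rigor that the paper leaves implicit.
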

I leave the proof of these very well known facts  to the reader. They are  based on elementary results of the theory of Fourier(-Plancherel) transform. The last statement immediately arises from (\ref{SP}) and the last statement of  Proposition \ref{PROPS}.

\section{The Newton-Wigner  observable for the massive Klein-Gordon  particle}\label{NWsec}

\subsection{The Newton-Wigner PVM}

I assume that the reader is well acquainted with basic notions of spectral theory and the notion of {\bf Projection Valued Measure} (PVM) (see, e.g., \cite{Moretti1,Moretti2}). 

 Consider a (separable) Hilbert space ${\cal H}$ that defines  the pure states of a quantum particle, not necessarily Klein-Gordon nor relativistic, but possibly  equipped with spin and other internal observables.
According to Wightman \cite{Wightman}, \\

\begin{definition}
A {\bf Newton-Wigner PVM} \cite{NW,Wightman}  for a  particle described in the (complex, separable) Hilbert space ${\cal H}$ is defined as a PVM $\sP: \cB(\bR^3) \to \gB({\cal H})$ -- where $\cB(\bR^3) $ is the Borel $\sigma$-algebra of $\bR^3$ -- which is {\bf covariant} with respect to a strongly continuous unitary representation $V$ of the group of isometries ${\cal E}$ of $\bR^3$ in ${\cal H}$: 
\beq
V_g \sP(\Delta) V_{g}^{-1} = \sP(g\Delta)\:, \quad \forall \Delta \in \cB(\bR^3) \:, \:\forall g \in {\cal E}\:. \label{COV}
\eeq
$\bR^3$ is above interpreted as the joint spectrum of three {\bf Newton-Wigner position  selfadjoint operators}
\beq
R_k := \int_{\bR^3} x_k d\sP(x_1,x_2,x_3)\:, \quad k=1,2,3.
\eeq
\end{definition}
\begin{remark}
{\em Wightman,  on a account of Mackey's {\em imprimitivity systems  theory},
established the uniqueness of  a Newton-Wigner position observable of a given unitary and strongly-continuous representation $V$ of the Euclidean group ${\cal E}$ 
under suitably regularity requirements on  $V$ and invariance under time-reversal symmetry. A more recent   discussion appears in \cite{Castrigiano2}. For a technically extensive discussion concerning relativistic systems  with every value of the square mass (also understood as an operator)  and the  spin see \cite{Castrigiano1,Castrigiano2}.}  \hfill $\blacksquare$\\
\end{remark}

According to the general interpretation of the formalism, the physical interpretation of a Newton-Wigner PVM is that $\langle \psi| \sP(\Delta)\psi \rangle$ is the probability to find the particle in the region $\Delta \subset \bR^3$ when the pure  state is represented  by $\psi \in {\cal H}$.

In the case of the real scalar Klein-Gordon particle, a Newton-Wigner PVM\footnote{which eventually can be proved to be unique on account of Wightman uniqueness theorem above mentioned \cite{Wightman}.} $\sQ_{n,t}$ is constructed as follows on the rest 3-space $\Sigma_{n,t}$
 of a reference frame $n\in  \sT_+$. Here,  the restriction $V$ of $U: IO(1,3)_+ \to \gB({\cal H})$ (\ref{ACT1}) to the Euclidean subgroup ${\cal E}_n$  (\ref{EN}) is used  to implement Wightman's definition.
As before,  events $e\in \bM$ are identified with vectors through $x(e) = e-0 \in \sV$. 

 If $n\in \sT_+$, $t\in \bR$, and  $\psi \in  {\cal S}({\cal H})$ define 
\beq \left(\sQ_{n,t} (\Delta)\psi\right)(p)  :=\sp\int_\Delta \sp d\Sigma_{n,t}(x)  \int_{\sV_{m,+}} \sp\sp\sp\sp  d\mu_m(q) \frac{e^{-i(p-q)\cdot x}}{(2\pi)^3}  \sqrt{E_n(p)E_n(q)} \psi(q)\quad \mbox{with $-n \cdot x=t$.}\label{firstPVM}\eeq
Above $\Delta \in \cB(\Sigma_{n,t})$ and $d\Sigma_{n,t}(x)= d^3x$ in Minkowskian coordinates co-moving  with $n$.
As  the mathematical tools appearing in the formula are coordinate independent for a choice of $n\in  \sT_+$, the operator on the  left-hand side only depends on  $(n,t)$.  The found family of operators defines  a Newton-Wigner observable on every slice $\Sigma_{n,t}$ according to Wigner's definition because of the following result.\\

\begin{proposition}\label{FPROP}
Each  operator of the
$(n,t, \Delta)$-parametrized
 family  (\ref{firstPVM}) defined on  ${\cal S}({\cal H})$ and taking values in $\cal H$, uniquely extends by continuity to the whole space $\cal H$.  
The found family of operators, for $t\in \bR$ fixed,
  defines  a PVM on  $\cB(\Sigma_{n,t})$ satisfying the covariance requirement (\ref{COV}) with respect to the group of isometries ${\cal E}_n$ (\ref{EN})  of $\Sigma_{n,t}$. \\
If indicating the found   orthogonal projectors   with the same symbol $\sQ_{n,t}(\Delta)$,
the action of  $IO(1,3)_+$  on them reads
\beq
  U_{h}  \sQ_{n,t}(\Delta) U_{h}^{-1} = \sQ_{\Lambda_h n,t_h}(h\Delta) \:, \quad \forall \Delta \in \cB(\Sigma_{n,t})\:, \quad h \in IO(1,3)_+\label{COVNW}\:.
\eeq
\end{proposition}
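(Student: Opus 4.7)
My strategy is to realize $\sQ_{n,t}(\Delta)$ as the unitary conjugate of multiplication by the indicator $\chi_\Delta$ under a Newton--Wigner-type Hilbert space isomorphism $W_{n,t}\colon\cH\to L^2(\Sigma_{n,t},d\Sigma_{n,t})$. On $\cS(\cH)$ I define
$$
(W_{n,t}\psi)(x) := \int_{\sV_{m,+}} \frac{e^{ip\cdot x}}{(2\pi)^{3/2}}\sqrt{E_n(p)}\,\psi(p)\, d\mu_m(p),\qquad x\in\Sigma_{n,t}.
$$
Using $d\mu_m(p)=d^3p/E_n(p)$ in Minkowski coordinates co-moving with $n$, the integrand is (up to the scalar phase $e^{-iE_n(p)t}$) the spatial inverse Fourier mode of $\psi(\vec p)/\sqrt{E_n(\vec p)}$, so the Plancherel theorem yields $\|W_{n,t}\psi\|_{L^2(\Sigma_{n,t})}=\|\psi\|_\cH$ on $\cS(\cH)$. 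Density of $\cS(\cH)$ (Proposition \ref{PROPS}) then provides a unique unitary extension $W_{n,t}\colon\cH\to L^2(\Sigma_{n,t},d\Sigma_{n,t})$.

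Direct substitution shows that for $\psi\in\cS(\cH)$ the right-hand side of (\ref{firstPVM}) equals $(W_{n,t}^{-1} M_{\chi_\Delta} W_{n,t}\psi)(p)$, where $M_{\chi_\Delta}$ denotes multiplication by $\chi_\Delta$ in $L^2(\Sigma_{n,t},d\Sigma_{n,t})$. Since this operator is bounded on $\cH$ and coincides with the formula on the dense subspace $\cS(\cH)$, the continuous extension is immediate and unique; one sets $\sQ_{n,t}(\Delta):=W_{n,t}^{-1} M_{\chi_\Delta} W_{n,t}$. The PVM property on $\cB(\Sigma_{n,t})$---including $\sigma$-additivity in the strong operator topology---is inherited by unitary conjugation from the standard multiplicative PVM $\Delta\mapsto M_{\chi_\Delta}$.

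For the Poincar\'e covariance (\ref{COVNW}), the core step is the intertwining law of $W_{n,t}$ with $U_h$. For $h=(\Lambda_h,a)\in IO(1,3)_+$ and $\psi\in\cS(\cH)$, using (\ref{ACT1}), performing the substitution $p=\Lambda_h q$ on $\sV_{m,+}$ (which is legitimate by the $O(1,3)_+$-invariance (\ref{invM}) of $d\mu_m$), and exploiting the Lorentz invariance of the pairing $p\cdot x$ together with the identity $E_{\Lambda_h n}(\Lambda_h q)=E_n(q)$, one obtains
$$
(W_{\Lambda_h n, t_h}U_h\psi)(y)=(W_{n,t}\psi)(h^{-1}y),\qquad y\in\Sigma_{\Lambda_h n, t_h},
$$
the membership $h^{-1}y\in\Sigma_{n,t}$ following from the defining relation $t_h=t-a\cdot\Lambda_h n$. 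Hence $T_h:=W_{\Lambda_h n, t_h}U_h W_{n,t}^{-1}$ is just the pull-back along the isometry $h\colon\Sigma_{n,t}\to\Sigma_{\Lambda_h n, t_h}$, whence $T_h M_{\chi_\Delta} T_h^{-1}=M_{\chi_{h\Delta}}$; conjugating back with the appropriate $W$'s yields (\ref{COVNW}). The Euclidean covariance (\ref{COV}) is the special case $h\in{\cal E}_n$, for which $\Lambda_h n=n$ and $t_h=t$.

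The only step I expect to demand some care is the covariance bookkeeping, where the Fubini-type exchange of integrations inside $W_{\Lambda_h n, t_h}U_h\psi$, the Lorentz transport of $d\mu_m$ via $\Lambda_h$, and the verification that the transformed spatial point $h^{-1}y$ lands in the correct rest space $\Sigma_{n,t}$ must all be tracked simultaneously. Everything else reduces to Plancherel, the explicit definition of $t_h$, and the elementary fact that $h$ restricts to a measure-preserving isometry between the Euclidean slices.
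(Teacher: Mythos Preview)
Your proof is correct and follows essentially the same strategy as the paper: both realize $\sQ_{n,t}(\Delta)$ as the unitary conjugate of the multiplicative PVM $\chi_\Delta$ under a Newton--Wigner isomorphism (the paper writes your $W_{n,t}$ as the composition $\cF^{-1}S_nU_{(I,tn)}^{-1}$ in (\ref{CONVNW2})). Your covariance argument via the intertwining relation $W_{\Lambda_h n,t_h}U_h=T_hW_{n,t}$ is a slightly more streamlined packaging of the same change-of-variables computation that the paper carries out directly at the level of the integrals $\langle\psi'|\sQ_{n,t}(\Delta)\psi\rangle$ in (\ref{interm}).
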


\begin{proof} Fix a Minkowskian coordinate system co-moving with $n$.  Define the unitary map \beq S_n: L^2(\sV_{m,+}, \mu_m) \ni \psi(p) \mapsto \frac{\psi(E_n(p), \vec{p}_n)}{\sqrt{E_n(p)}} \in L^2(\bR^3, d^3p)\:,\label{mapV}\eeq
where $\vec{p}_n\equiv  (p_1,p_2,p_3) \in \bR^3$ according to  the said choice of a Minkowskian coordinate system. Notice that, as $m>0$, the written map restricts to a bijection
from ${\cal S}({\cal H})$, which is dense in ${\cal H}=L^2(\sV_{m,+}, \mu_m)$, onto $\cS(\bR^3)$  viewed as dense subspace of $L^2(\bR^3, d^3p)$. 
We then  have that, for $\psi \in {\cal S}({\cal H})$,  (\ref{firstPVM}) can be reformulated as
\beq  \label{CONVNW2}   \sQ_{n,t} (\Delta)\psi =  U_{(I, t n)} S_n^{-1} \cF \:1_\Delta\: \cF^{-1}   S_n U^{-1}_{(I, t n)}\:\psi\eeq
Above, $1_\Delta$ is the multiplicative operator with the characteristic function of $\Delta \in \bR^3 \equiv 
\Sigma_{n,t}$ ($1_\Delta({x})=1$ if ${x}\in \Delta$ and $1_\Delta({x})=0$ otherwise);
 $\cF : L^2(\Sigma_{n,t}, d\Sigma_{n,t}) \to L^2(\bR^3, d^3p)$ is the Fourier-Plancherel unitary transform (after having identified $\Sigma_{n,t}$ with $\bR^3$ and  $d\Sigma_{n,t}$ with the Lebesgue measure $d^3x$ with the same  a choice of a Minkowskian coordinate system as above). 
$\cF$ and its inverse preserve the Schwartz space.
 The map $\cB(\bR^3) \ni \Delta \mapsto 1_\Delta \in \gB(L^2(\bR^3,d^3x))$ is evidently a PVM in the written Hilbert space.  As  $\cF^{-1} S_n$ is  norm preserving,  and when restricted to the dense subspace  of Schwartz functions has  a dense  range, $ S_n^{-1} \cF\: 1_\Delta\: \cF^{-1}  S_n|_{{\cal S}({\cal H})}$, extends to a bounded operator everywhere defined which is also a PVM.
Identity (\ref{COV}) is an immediate consequence  of (\ref{COVNW}) when ${\cal E}_3$ is identified with ${\cal E}_n$ (\ref{EN}).  Let us prove  (\ref{COVNW}). From (\ref{CONVNW2}), for $\psi ,\psi'\in {\cal S}({\cal H})$, the Fubini and Tonelli theorems yield
$$\langle \psi'| \sQ_{n,t}(\Delta) \psi \rangle =$$
$$  \int_{\sV_{m,+}} \sp\sp\sp\sp  d\mu_m(p) \overline{\psi(p)'}\int_\Delta \sp d\Sigma_{n,t}(x)  \int_{\sV_{m,+}} \sp\sp\sp\sp  d\mu_m(q) \frac{e^{-i(p-q)\cdot x}}{(2\pi)^3}  \sqrt{E_n(p)E_n(q)} \psi(q)$$
\beq 
 =\int_{\Sigma_{n,t}} \sp\sp\sp d\Sigma_{n,t}(x) 1_{\Delta}(x)\int_{\sV_{m,+}} \sp\sp\sp d\mu_m(p) \frac{e^{-i p\cdot x} \sqrt{E_n(p)}}{(2\pi)^{3/2}} \overline{\psi'(p)}   \int_{\sV_{m,+}} \sp\sp\sp d\mu_m(q) \frac{e^{iq\cdot x}\sqrt{E_n(q)}}{(2\pi)^{3/2}} \psi(q) \label{interm}
\eeq
where $-n\cdot x= t$ and the integrals are interpreted in proper sense. 
Let us define 
$$f_n(x):=\int_{\sV_{m,+}} \sp\sp\sp d\mu_m(p) \frac{e^{-i p\cdot x} \sqrt{E_n(p)}}{(2\pi)^{3/2}} \overline{\psi'(p)}   \int_{\sV_{m,+}} \sp\sp\sp d\mu_m(q) \frac{e^{iq\cdot x}\sqrt{E_n(q)}}{(2\pi)^{3/2}} \psi(q)\:.$$
At this juncture, taking advantage of (\ref{invM}) and
observing that the $\bM$-isometry  invariance   of the measures induced by the metric $d\Sigma_{n,t}(x)  = dh\Sigma_{n,t}(hx) $ entails, for $h\in IO(1,3)_+$
$$ \int_{x\in \Sigma_{n,t}} \sp\sp\sp d\Sigma_{n,t}(x) 1_{\Delta}(x) f_{\Lambda_hn}(h x) =  \int_{h x\in h \Sigma_{n,t}} \sp\sp\sp dh \Sigma_{n,t}(h x) 1_{\Delta}(x)  f_{\Lambda_hn}(h x) $$ $$= \int_{y\in h \Sigma_{n,t}} \sp\sp\sp dh\Sigma_{n,t}(y) 1_{\Delta}(h^{-1} y) f_{\Lambda_hn}(y)
= \int_{y\in h \Sigma_{n,t}} \sp\sp\sp dh \Sigma_{n,t}(y) 1_{h \Delta}(y) f_{\Lambda_hn}(y) $$
$$ = \int_{\Sigma_{\Lambda_h n,t_h}} \sp\sp\sp d\Sigma_{\Lambda_h n,t_h}(y) 1_{h\Delta}(y) f_{\Lambda_hn}(y)
=  \int_{\Sigma_{\Lambda_h n,t_h}} \sp\sp\sp d\Sigma_{\Lambda_h n,t_h}(x) 1_{h\Delta}(x) f_{\Lambda_hn}(x)\:.$$
The found identity, used in (\ref{interm}) and taking  (\ref{ACT1}) into account  leads to
$$ \langle \psi'| (U_{h}  \sQ_{n,t}(\Delta) U_{h}^{-1} - \sQ_{\Lambda_h n,t_h}(h\Delta)) \psi \rangle =0\quad \mbox{if $\psi ,\psi'\in {\cal S}({\cal H})$.}$$
Since ${\cal S}({\cal H})$ is dense in ${\cal H}$ and the operators are bounded and everywhere defined, the found identity extends to the general case $\psi ,\psi'\in {\cal H}$ ending the proof. \\
\end{proof}

\begin{definition} The family $\{\sQ_{n,t}(\Delta)\}_{\Delta\in \cB(\Sigma_{n,t})}$ constructed in Proposition \ref{FPROP} is the {\bf Newton-Wigner PVM} of the massive Klein-Gordon particle in the reference frame $n$ at time $t$. The collection $\sQ$  of all these PVMs when $n\in  \sT_+$, $t\in \bR$ is the {\bf Newton-Wigner spatial localization observable}.\\
\end{definition}

\begin{remark}
{\em \begin{itemize}
\item[(1)] In view of the $IO(1,3)_+$ covariance and (\ref{TE})
\beq
\sQ_{n,t}(\Delta+t) = 
 U_t^{(n)\dagger} \sQ_{n,0}(\Delta)  U_t^{(n)}\:,\quad \forall  t\in \bR\:\:, \forall \Delta \in \cB(\Sigma_{n,0})\:.
\eeq
In other words, the Newton-Wigner PVM at time $t$ in $n$ is the {\em Heisenberg evolution} of the one at  time zero according to the time evolutor in the reference frame $n$.
\item[(2)] The non-relativistic limit for a state $\psi \in {\cal H}$, in a reference frame $n\in  \sT_+$,  can be viewed as the requirement that $|\psi(p)|$ vanishes  outside a region where  $|\vec{p}_n|$ is strictly  narrowed around $m$.   It is easy to see from (\ref{wave})  that, in this situation, $m\varphi_\psi$ tends to become a standard Schr\"odinger wavefunction for a free particle of mass $m$. The use of same type of states in  (\ref{interm}), shows that 
$\langle \psi| \sQ_{n,0}(\Delta)\psi\rangle$ tends to the probability of finding the particle in $\Delta$ (at $t=0$)  according to  the standard non-relativistic position PVM  on the said state $\psi$. 
\item[(3)] There is however another regime where the Newton-Wigner PVM approximates the PVM of the classical position observable.
It is when $\psi$ is sharply narrowed around a  value of the momentum $p_0$. In that case, similarly to before, 
 $E(p_0)\varphi_\psi$ tends to become a standard Schr\"odinger wavefunction for a free particle of mass $m$ and 
$\langle \psi| \sQ_{n,0}(\Delta)\psi\rangle$ tends to the probability of finding the particle in $\Delta$ (at $t=0$)  according to  the standard non-relativistic position PVM.  \hfill $\blacksquare$
\end{itemize}} 
\end{remark}

\subsection{NW localization does not mean localized covariant  wavefunctions: Antilocality}
I am in a position  to illustrate an annoying  fact which sharply distinguishes the relativistic and  the non-relativistic theory. Newton-Wigner localization in a bounded set $\Delta\subset \Sigma_{n,t}$  for a state $\psi$ implies  that the associated wavefunction $\varphi_\psi$ is essentially supported also {\em outside} $\Delta$ itself at time $t$. 

Choose  a reference frame $n$ and a co-moving Minkowskian coordinate system $t=x^0,x^1,x^2,x^3$ and  wrote $\vec{x}:= (x^1,x^2,x^3)$. 
Looking at (\ref{CONVNW2}), if $\psi \in {\cal H}$, 
\beq
\Psi_t :=\left( {\cal F}^{-1}S_n U^{-1}_{(I,tn)} \psi\right) \in  L^2(\bR^3,d^3x) \label{unit}
\eeq
Notice that $\Psi_t \in \cS(\bR^3)$ if $\psi \in {\cal S}({\cal H})$ where $\bR^3$ identifies with $\Sigma_{n,t}$.
On account of  (\ref{CONVNW2}),  the action of $\sQ_{n,t}(\Delta)$ on $\Psi$ is trivially the multiplication with $1_{\Delta}(\vec{x})$. On the other hand,  the definition of {\em covariant wavefunction} associated to a state   (\ref{wave}) can be re-formulated  in terms of $\Psi$:
\beq
\varphi_{\psi}(t,\vec{x}) := (\overline{- \Delta + m^2})^{-1/4}\Psi_t(\vec{x})\:, \qquad   \psi \in{\cal H} \label{antiloc}\:.
\eeq
This definition  is valid for $\psi \in {\cal S}({\cal H})$ as the original version  (\ref{wave}) is.  However, as indicated,   it can be trivially extended to the general case $\psi \in {\cal H}$,
since the  selfadjoint operator $(\overline{- \Delta + m^2})^{-1/4}$ is bounded and everywhere defined in $L^2(\bR^3, d^3x)$. In that case, the covariant wavefunction satisfies\footnote{According to Section \ref{L2} the vector $\varphi_\psi(t,\cdot)$ can be viewed  in terms of  a representative given by a Lebesgue measurable or a Borel measurable function and one interprets "a.e." accordingly.}  $\varphi_\psi(t,\cdot) \in L^2(\bR^2,d^3x)$. 
A crucial property known as {\em antilocality}  \cite{anti1,murata} of $(\overline{- \Delta + m^2})^{\alpha}$ plays a fundamental role in the rest of the paper.\\

\begin{theorem}\label{teorem11} Let $k \in \bN$, $m>0$, and  suppose that $\bR \ni \alpha\not \in \bZ$.
If both $\Psi\in L^2(\bR^k, d^kx)$ and $(\overline{- \Delta + m^2})^{\alpha} \Psi$ 
 vanish a.e. with respect to $d^kx$  in an open non-empty set $\Omega \subset \bR^k$ -- assuming $\Psi \in D((\overline{- \Delta + m^2})^{\alpha})$ for $\alpha>0$ -- then  $\Psi =0$ in $ L^2(\bR^k, d^kx)$.\\
\end{theorem}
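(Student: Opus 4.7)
The plan is to attack the statement in momentum space, where $(\overline{-\Delta+m^2})^\alpha$ is the Fourier multiplier with symbol $f_\alpha(\xi)=(|\xi|^2+m^2)^\alpha$, and to extract the nonlocality from the branch structure of $f_\alpha$ when $\alpha\notin\bZ$. After a translation making $0\in\Omega$ and a shrinking, I may assume $\Omega$ is an open ball $B_r(0)$. The first step is a reduction to the range $0<\alpha<1$: writing $\alpha = n+\beta$ with $n\in\bZ$ and $\beta\in(0,1)$, the differential operator $(-\Delta+m^2)^n$ (or its inverse, built with Bessel potentials when $n<0$) is local enough that vanishing of $\Psi$ on $B_r$ propagates to vanishing of $(-\Delta+m^2)^n\Psi$ on a slightly smaller concentric ball; so it suffices to prove the statement for $0<\alpha<1$ on this smaller ball.

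Next, I would invoke the Balakrishnan integral representation
$$(\overline{-\Delta+m^2})^\alpha \Psi = \frac{\sin(\pi\alpha)}{\pi}\int_0^{\infty}\lambda^{\alpha-1}\,(\overline{-\Delta+m^2})(-\Delta+m^2+\lambda)^{-1}\Psi\, d\lambda,$$
valid for $\Psi$ in the domain of the fractional power, together with the explicit Yukawa/Bessel integral kernel of the resolvent $(-\Delta+m^2+\lambda)^{-1}$, built out of the modified Bessel functions of $\sqrt{m^2+\lambda}\,|x-y|$. Crucially this kernel extends as a holomorphic function of $\lambda$ to $\bC\setminus(-\infty,-m^2]$, the branch cut starting at $\lambda=-m^2$ being inherited from the square root.

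The heart of the argument is an analytic continuation against a test function. Testing the identity $(\overline{-\Delta+m^2})^\alpha\Psi=0$ on $B_r$ against $\chi\in C_c^\infty(B_r)$ and using that $\Psi$ itself vanishes on $B_r$, Fubini turns the equality into the vanishing of $\int_0^\infty\lambda^{\alpha-1}F(\lambda)\,d\lambda$, where $F$ is holomorphic in $\lambda$ on the cut plane and decays suitably at infinity. Bending the contour around the cut $(-\infty,-m^2]$ and collecting the jump of $\lambda^{\alpha-1}$ produces a prefactor $(1-e^{2\pi i\alpha})$ — nonzero \emph{exactly} because $\alpha\notin\bZ$ — multiplying a one-dimensional integral built from the kernel of $(-\Delta-t+m^2 -i0)^{-1}-(-\Delta-t+m^2+i0)^{-1}$ pairing $\Psi$ with $\chi$. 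The nonvanishing of this prefactor, combined with the fact that the spectral measure one obtains in this way can be tested against a dense family of $\chi$'s, forces $\Psi=0$.

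The delicate step I expect to be the fourth one: rigorously justifying the contour deformation and the Fubini exchange under only the $L^2$ (and, for $\alpha>0$, domain) hypotheses on $\Psi$. The Balakrishnan integrand is only conditionally convergent near $\lambda=0$ and $\lambda=\infty$, and the resolvent kernel in high dimensions has an integrable but genuinely singular behavior at the diagonal, so all interchanges must be done against compactly supported test functions and passed through appropriate regularizations; this is precisely the analytic-kernel bookkeeping developed in \cite{anti1,murata}. It is worth emphasizing that the hypothesis $\alpha\notin\bZ$ enters at exactly one place, namely in ensuring $\sin(\pi\alpha)\neq 0$ (equivalently, a nonzero branch jump): dropping it, the multiplier $f_\alpha$ becomes a polynomial in $|\xi|^2+m^2$, the operator becomes differential and thus local, and the conclusion fails trivially.
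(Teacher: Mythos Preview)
The paper does not prove this theorem; it is quoted as a known antilocality result from the literature, with references \cite{anti1,murata} (Segal--Goodman and Murata). Your sketch is in the spirit of those works and correctly identifies the mechanism --- the branch cut of $\lambda\mapsto\lambda^\alpha$ for $\alpha\notin\bZ$, detected via a resolvent/contour argument --- so there is no in-paper proof to compare against beyond the citation.

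That said, your reduction step contains a real error. You assert that for $n<0$ the Bessel potential $(-\Delta+m^2)^{n}$ is ``local enough'' that vanishing of $\Psi$ on $B_r$ propagates to vanishing of $(-\Delta+m^2)^{n}\Psi$ on a slightly smaller ball. This is false: already for $n=-1$ the kernel of $(-\Delta+m^2)^{-1}$ is the Yukawa/Bessel potential, which is \emph{strictly positive everywhere} on $\bR^k$, so convolution with it annihilates local vanishing completely; there is no smaller ball on which the result is zero. The correct reduction for $\alpha<0$ is instead to interchange the roles of the two functions: set $\Phi:=(\overline{-\Delta+m^2})^{\alpha}\Psi$ (bounded operator, so $\Phi\in L^2$), observe that both $\Phi$ and $(\overline{-\Delta+m^2})^{-\alpha}\Phi=\Psi$ vanish on $\Omega$, and note that $-\alpha>0$ with $-\alpha\notin\bZ$. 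This brings you back to the positive-exponent case, where peeling off the integer part with the genuinely local differential operator $(-\Delta+m^2)^{n}$, $n\ge 1$, is legitimate. With this fix your outline matches the strategy of the cited references; the contour-deformation bookkeeping you flag as delicate is indeed where the work in \cite{murata} lies.
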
 

\noindent This theorem together with   Eq.(\ref{antiloc}) permit to prove a well-known annoying fact  regarding spatial  localization according to NW: localized states do not correspond to localized covariant wavefunctions (item (2) below).\\

 \begin{proposition}\label{PROP12}  Let us consider the Newton-Wigner localization observable $\sQ$  of a massive Klein-Gordon particle. The following facts are true for given $n\in  \sT_+$, $t\in \bR$.
 \begin{itemize}
 \item[(1)]  $\sQ_{n,t}(\Delta)=0$ if and only if $\Delta$ has zero measure with respect to $d\Sigma_{n,t}$.
\item[(2)] Let  $\psi \in {\cal H}\setminus \{0\}$ be  localized in a  spatial region $\Delta \in \cB(\Sigma_{n,t})$, i.e.,  $$\sQ_{n,t}(\Delta)\psi = \psi\:.$$ 
If $\Delta$ is not dense (in particular if $\Delta$ is bounded)  then    $\varphi_\psi(t, \cdot)$ cannot vanish a.e. in every fixed  non-empty open subset of $\Sigma_{n,t}\setminus \Delta$.
 \end{itemize}
 \end{proposition}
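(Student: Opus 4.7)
The plan is to reduce both items to the unitary equivalence established inside the proof of Proposition \ref{FPROP}, specifically identity (\ref{CONVNW2}), which realizes $\sQ_{n,t}(\Delta)$, via the unitary $W_{n,t} := \cF^{-1}S_n U^{-1}_{(I,tn)}$ from $\cH$ to $L^2(\bR^3, d^3x)$, as the multiplication operator by $1_\Delta$; item (2) then reduces, in a few lines, to the antilocality Theorem \ref{teorem11} with exponent $\alpha = -1/4 \notin \bZ$.

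Part (1) is essentially immediate. Since $W_{n,t}$ is unitary, $\sQ_{n,t}(\Delta)=0$ holds iff the multiplication operator by $1_\Delta$ is the zero operator on $L^2(\bR^3,d^3x)$, which in turn holds iff $1_\Delta$ vanishes $d^3x$-almost everywhere, i.e., iff $\Delta$ has zero Lebesgue measure. In the chosen Minkowskian coordinates co-moving with $n$ this Lebesgue measure coincides with $d\Sigma_{n,t}$ (and with its completion, as recalled in Section \ref{L2}), giving the stated equivalence.

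For (2), fix any non-empty open subset $\Omega \subset \Sigma_{n,t}\setminus\Delta$; such $\Omega$ exists precisely because $\Delta$ is not dense (in the bounded case, take any open ball disjoint from $\overline{\Delta}$). The hypothesis $\sQ_{n,t}(\Delta)\psi=\psi$ transfers through $W_{n,t}$ to $1_\Delta \Psi_t = \Psi_t$, so that $\Psi_t$ vanishes $d^3x$-a.e.\ outside $\Delta$ and in particular on $\Omega$. Assume, for contradiction, that $\varphi_\psi(t,\cdot)$ also vanishes a.e. on $\Omega$. By (\ref{antiloc}) one has $\varphi_\psi(t,\cdot)=(\overline{-\Delta+m^2})^{-1/4}\Psi_t$, so both $\Psi_t$ and $(\overline{-\Delta+m^2})^{-1/4}\Psi_t$ vanish a.e. on the non-empty open set $\Omega$. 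Theorem \ref{teorem11}, applied with $k=3$ and $\alpha=-1/4 \notin \bZ$, then forces $\Psi_t=0$, and unitarity of $W_{n,t}$ yields $\psi=0$, contradicting $\psi \in \cH\setminus\{0\}$.

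The argument is conceptually driven entirely by the antilocality of the fractional power $(\overline{-\Delta+m^2})^{-1/4}$: once the localization condition has been rephrased as the vanishing of $\Psi_t$ outside $\Delta$, no additional analytic input is required. The only minor points of care are the bookkeeping between $d\Sigma_{n,t}$ on $\Sigma_{n,t}$ and $d^3x$ on $\bR^3$ through the co-moving Minkowskian identification (handled by Section \ref{L2}), and the explicit verification that $\alpha=-1/4$ is non-integer, which is exactly the hypothesis that makes Theorem \ref{teorem11} available.
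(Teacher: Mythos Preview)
Your proof is correct and follows essentially the same route as the paper's own argument: both items are reduced, via the unitary equivalence (\ref{CONVNW2}), to the multiplication operator $1_\Delta$ on $L^2(\bR^3,d^3x)$, and item (2) is then settled by the antilocality Theorem \ref{teorem11} with $\alpha=-1/4$. The only difference is that you spell out the bookkeeping (the unitary $W_{n,t}$, the identification of measures, the non-integrality of $-1/4$) more explicitly than the paper does.
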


  \begin{proof}
  (1) is obvious since, under unitary equivalence, $\sQ_{n,t}(\Delta)$ is the multiplicative operator  $1_{\Delta}$.
Let us pass to (2).  Suppose that   $\Delta$ is not dense and consider  an open non-empty set $\Omega\subset \Sigma_{n,t}\setminus \Delta$.
$\sQ_{n,t}(\Delta)\psi = \psi$ is equivalent  to $1_{\Delta}\Psi_t = \Psi_t$ a.e. with respect to $d\Sigma_{n,t}$, in particular
 $\Psi_t(\vec{x})=0$ a.e. in $\Omega$. If also  $\varphi_\psi(t, \vec{x})=  (\overline{- \Delta + m^2})^{-1/4}\Psi_t(\vec{x})=0$ a.e. for  $\vec{x} \in \Omega$,
  Theorem \ref{teorem11} applied  to $\Psi= \Psi_t$  for $\alpha =-1/4$ would imply  $ \Psi_t=0$, namely 
 $\psi =0$. This is impossible because $||\psi|| \neq 0$.
\end{proof}
\subsection{The Newton-Wigner position selfadjoint operator}
I pass to define the Newton-Wigner position self-adjoint operators. Given a reference frame $n\in  \sT_+$, choose  a co-moving  Minkowskian coordinate system $t:=x^0,x^1,x^3,x^3$.
Following \cite{NW,Wightman}, I define, the  {\bf Newton-Wigner position selfadjoint operators} in $n$ associated to a co-moving Minkowskian coordinate system with coordinates    $(t:=x^0,x^1,x^3,x^3)$,
\beq N_{n,t}^\alpha := \int_{\Sigma_{n,t}} x^\alpha d\sQ_{n,t}(x)\quad \alpha = 0,1,2,3\:,\label{NWx}\eeq
  where the integration is the standard one according to a PVM (see, e.g. \cite{Moretti2}).\\

\begin{proposition}\label{propX} The  Newton-Wigner position selfadjoint operators (\ref{NWx}) satisfy the following.
\begin{itemize}
\item[(1)]  $\sigma(  N_{n,t}^\alpha) = \sigma_c(  N_{n,t}^\alpha) = \bR$ for every $\alpha = 0,1,2,3$\:.
\item[(2)]  It holds  $D( N_{n,t}^\alpha) \supset {\cal S}({\cal H})$ and more strongly
\beq
  N_{n,t}^\alpha ( {\cal S}({\cal H})) \subset  {\cal S}({\cal H})\:,
\eeq
 and ${\cal S}({\cal H})$ is a core for all those operators.
\item[(3)]  The {\bf Heisenberg commutation relations} hold, where $k,h=1,2,3$:
\beq [ N_{n,t}^k, N_{n,t}^h]|_{{\cal S}({\cal H})} =[  P_{n h}, P_{n k}]|_{{\cal S}({\cal H})} =0\:,  \qquad 
[  N_{n,t}^k, P_{n h}]|_{{\cal S}({\cal H})} = i\delta^k_hI|_{{\cal S}({\cal H})}
\eeq
so that, in particular the statement of the {\bf Heisenberg principle} holds for $h=1,2,3$:
\beq
\Delta_\psi  N^k_{n,t} \Delta_\psi P_{nk} \geq 1/2\:,  \quad \psi \in {\cal S}({\cal H})\:.
\eeq
\item[(4)] The Heisenberg  time evolution relation is valid:
\beq
 U^{(n)\dagger}_t  N_{n,0}^kU^{(n)}_t\psi  =   N_{n,t}^k\psi =   N_{n,0}^k\psi  + t\frac{P_{nk }}{P_{n0}}\psi \quad \mbox{for $\psi \in {\cal S}({\cal H})$ and $k=1,2,3$}\:.\label{EV}
\eeq
\item[(5)] $IO(1,3)_+$ covariance relations are true, if $\psi \in {\cal S}({\cal H})$ and $IO(1,3)_+ \ni h= (\Lambda_h, a_h)$,
\beq
U_h  N_{n,t}^\alpha U_h^{-1} \psi = (\Lambda^{-1}_h)^\alpha_\beta ( N^{\beta}_{\Lambda_h n, t_{h}} - a_h^\beta I)\psi, \quad \forall h \in IO(1,3)_+\:. \label{55}
\eeq
\end{itemize}
\end{proposition}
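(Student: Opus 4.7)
The plan is to reduce all five statements to standard facts about the Schrödinger position operator via the unitary equivalence established in the proof of Proposition \ref{FPROP}. Set $W_{n,t} := \cF^{-1} S_n U^{-1}_{(I,tn)} : \cH \to L^2(\bR^3, d^3x)$; this map is unitary, intertwines $\sQ_{n,t}(\Delta)$ with the canonical multiplicative PVM $\Delta \mapsto 1_\Delta$, and by the functional calculus sends $N^k_{n,t}$ to the position multiplication operator $X^k$ on $L^2(\bR^3, d^3x)$ for $k=1,2,3$ (the case $\alpha=0$ being trivial since $N^0_{n,t}=tI$). Since Proposition \ref{PROPS} ensures $U_{(I,tn)}$ preserves ${\cS}({\cH})$ and both $S_n$ and $\cF$ map Schwartz functions to Schwartz functions, one also has $W_{n,t}({\cS}({\cH}))={\cS}(\bR^3)$. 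Claim (1) then reduces to the elementary fact that $X^k$ on $L^2(\bR^3,d^3x)$ has purely continuous spectrum equal to $\bR$, and claim (2) reduces to the standard facts that ${\cS}(\bR^3)$ is stable under each $X^k$ and is a core for it.

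For (3), the same equivalence shows that $P_{nh}$ corresponds to a constant-coefficient first-order differential operator on $L^2(\bR^3, d^3x)$: in momentum representation $P_{nh}$ is multiplication by $p_h$, which under Fourier inversion becomes $-i\partial_h$ (up to the sign convention of $\cF$). The canonical commutation relations $[X^k,-i\partial_h]=i\delta^k_h I$ on ${\cS}(\bR^3)$ then pull back verbatim to the stated CCR on ${\cS}({\cH})$. The Heisenberg inequality follows from the standard Robertson bound $\Delta_\psi A\,\Delta_\psi B\geq \tfrac{1}{2}|\langle\psi|[A,B]\psi\rangle|$ applied to this commutator on normalized $\psi\in{\cS}({\cH})$.

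For (4), the first equality is immediate from the PVM covariance (\ref{COVNW}) with $h=(I,tn)$: here $\Lambda_h=I$, $\Lambda_h n=n$, $t_h=t$, and since the translation $tn$ is purely temporal the spatial coordinate $x^k$ is invariant under the induced change of variable in $\int x^k d\sQ_{n,t}$, producing $N^k_{n,t}$. For the second equality, it is most convenient to work in the picture of $W_{n,0}$, in which $N^k_{n,0}$ becomes $X^k$ and $U^{(n)}_t$ becomes $e^{-it\Omega_m}$ with $\Omega_m:=\sqrt{-\overline{\Delta}+m^2}$. Since ${\cS}(\bR^3)$ is invariant under both operators and under their products, one may apply the Hadamard identity $\frac{d}{dt}(e^{it\Omega_m}X^ke^{-it\Omega_m})=e^{it\Omega_m}\,i[\Omega_m,X^k]\,e^{-it\Omega_m}$; a direct computation in momentum representation gives $i[\Omega_m,X^k]=p_k/E(\vec p)$, a bounded multiplication operator commuting with $\Omega_m$. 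Integration in $t$ then yields $e^{it\Omega_m}X^ke^{-it\Omega_m}=X^k+t\,p_k/E(\vec p)$, and transferring back through $W_{n,0}$ identifies the extra term with $tP_{nk}/P_{n0}$ (modulo a sign fixed by the convention $P_{n0}=-H_n$).

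Claim (5) follows again from (\ref{COVNW}) together with the functional calculus: conjugation by $U_h$ sends $N^\alpha_{n,t}=\int_{\Sigma_{n,t}} x^\alpha d\sQ_{n,t}(x)$ to $\int_{\Sigma_{\Lambda_h n,t_h}}(h^{-1}y)^\alpha d\sQ_{\Lambda_h n,t_h}(y)$ by the change of variable $y=hx$, and the formula $(h^{-1}y)^\alpha=(\Lambda_h^{-1})^\alpha_\beta(y^\beta-a_h^\beta)$ then produces the stated identity on vectors of ${\cS}({\cH})$ (which is invariant under $U_h$ and under each $N^\beta_{\Lambda_h n,t_h}$ by (2)). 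The main technical hurdle I expect is in (4): justifying the Hadamard differentiation rigorously in the strong topology on ${\cS}(\bR^3)$ requires checking that $e^{-it\Omega_m}$ maps ${\cS}(\bR^3)$ into itself with enough $t$-regularity, which, while routine, is slightly delicate because $\Omega_m$ is a non-local pseudodifferential operator of order one.
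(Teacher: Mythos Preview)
Your proof is correct and follows essentially the same route as the paper: both reduce (1)--(3) to the standard Schr\"odinger position and momentum operators via the unitary $W_{n,t}=\cF^{-1}S_nU^{-1}_{(I,tn)}$ carrying ${\cal S}({\cal H})$ onto $\cS(\bR^3)$, and both obtain (5) directly from the PVM covariance (\ref{COVNW}) by the change of variable $y=hx$. The only minor divergence is in the second identity of (4): the paper differentiates the matrix elements $\langle\psi'|N^k_{n,t}\psi\rangle$ in the momentum picture and justifies the interchange of $\frac{d}{dt}$ and $\int d^3p$ by dominated convergence, which sidesteps precisely the strong-differentiability issue you flag for the Hadamard formula; your approach is equally valid once one notes that in momentum representation $e^{-itE_n(\vec p)}$ and all its $t$-derivatives are smooth multipliers preserving $\cS(\bR^3)$, but the paper's weak computation is the more economical way around the hurdle.
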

\begin{proof} See Appendix \ref{APPPROOFS}.
\end{proof}

 If $\psi \in {\cal S}({\cal H})$, property (4) implies that the maps $\bR \ni t \mapsto \langle \psi|  N^{\alpha}_{n,t} \psi\rangle \in \bR^4 \equiv \bM$, $\alpha=0,1,2,3$ is the coordinate description of  
a timelike curve, i.e., the time evolution of  a point in the rest space of $n$ with speed that  is strictly less  than the light speed. In fact, the following corollary holds  which strongly relies on the overall initial  hypothesis $m>0$. That is a sort of {\em Ehrenfest theorem} for the position of a massive free Klein-Gordon particle.\\
 
 \begin{corollary}\label{CORR}
 Let $\psi \in {\cal S}({\cal H})$ satisfy  $||\psi||=1$. The expectation values of the  Newton-Wigner position selfadjoint operators $ N^1_{n,t},  N^2_{n,t},  N^3_{n,t}$ (of a reference frame $n\in  \sT_+$ with  a co-moving Minkowskian coordinate system $t=x^0,x^1,x^2,x^3$) describe a timelike worldline since
 \beq
 \sum_{k=1}^3 \left(\frac{d}{dt} \langle \psi|  N^{k}_{n,t} \psi\rangle\right)^2 < 1 \:.
 \eeq
 \end{corollary}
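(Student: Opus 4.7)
The plan is to reduce the question to the expectation value of a bounded multiplication operator in the momentum representation, and then conclude via a Cauchy--Schwarz estimate combined with the strict pointwise bound $|\vec p_n|/E_n(p)<1$ that holds thanks to the standing assumption $m>0$.

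First I would invoke item (4) of Proposition \ref{propX}: on ${\cal S}({\cal H})$ one has $N^k_{n,t}\psi = N^k_{n,0}\psi + t\,P_{nk}P_{n0}^{-1}\psi$. Since $\sigma(P_{n0})=[m,+\infty)$ with $m>0$, the operator $P_{nk}P_{n0}^{-1}$ is bounded and selfadjoint, and the map $t\mapsto \langle\psi|N^k_{n,t}\psi\rangle$ is affine in $t$ with derivative
\[
v_k \;:=\; \frac{d}{dt}\langle\psi|N^k_{n,t}\psi\rangle \;=\; \langle\psi|P_{nk}P_{n0}^{-1}\psi\rangle \;=\; \int_{\sV_{m,+}} \frac{p_k}{E_n(p)}\,|\psi(p)|^2\, d\mu_m(p).
\]

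Next, treating $d\nu := |\psi(p)|^2\, d\mu_m(p)$ as a probability measure on $\sV_{m,+}$ (since $\|\psi\|=1$) and setting $\vec v := (v_1,v_2,v_3)$, I contract with $\vec v$ and apply the pointwise Cauchy--Schwarz inequality $|\vec v\cdot\vec p_n|\le |\vec v|\,|\vec p_n|$ inside the integrand to obtain
\[
|\vec v|^2 \;=\; \int_{\sV_{m,+}} \frac{\vec v \cdot \vec p_n}{E_n(p)}\, d\nu(p) \;\le\; |\vec v|\int_{\sV_{m,+}} \frac{|\vec p_n|}{E_n(p)}\, d\nu(p).
\]

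The step requiring care, and the only real obstacle, is to upgrade the trivial weak bound $|\vec p_n|/E_n(p)\le 1$ to the strict pointwise bound $|\vec p_n|/E_n(p) = |\vec p_n|/\sqrt{|\vec p_n|^2+m^2}<1$, which is precisely where $m>0$ enters decisively (the argument fails for massless particles, as expected). Since $1-|\vec p_n|/E_n(p)>0$ everywhere on $\sV_{m,+}$ and $d\nu$ is a nonzero probability measure, integration yields the strict inequality $\int (|\vec p_n|/E_n(p))\, d\nu < 1$. Combined with the Cauchy--Schwarz estimate this forces $|\vec v|^2 < |\vec v|$: either $|\vec v|=0$, in which case the claim is trivial, or dividing by $|\vec v|$ gives $|\vec v|<1$, hence $\sum_{k=1}^3 v_k^2 < 1$.
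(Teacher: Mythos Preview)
Your proof is correct (up to an inconsequential sign: in the paper's conventions $P_{n0}=-H_n$, so $\sigma(P_{n0})=(-\infty,-m]$ and the integrand is $-p_k/E_n(p)$; this is irrelevant once you square). The route, however, differs from the paper's. After the common first step using Proposition~\ref{propX}(4), the paper applies the variance inequality $\langle (H_n^{-1}P_{nk})^2\rangle_\psi\ge\langle H_n^{-1}P_{nk}\rangle_\psi^2$ component by component, sums over $k$, and then invokes the mass-shell identity $\sum_k H_n^{-2}P_{nk}^2 = I - m^2 H_n^{-2}$ to obtain $\sum_k v_k^2 \le 1 - m^2\|H_n^{-1}\psi\|^2 < 1$. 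You instead contract the full $3$-vector with $\vec v$, use the pointwise Cauchy--Schwarz inequality in $\bR^3$, and then integrate the strict bound $|\vec p_n|/E_n(p)<1$ against the probability measure $|\psi|^2\,d\mu_m$. Both arguments hinge on $m>0$ for strictness; the paper's version yields the slightly sharper quantitative bound $|\vec v|^2 \le 1 - m^2\langle\psi|H_n^{-2}\psi\rangle$, while yours is more direct and avoids the component-wise variance step.
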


\begin{proof}
See Appendix \ref{APPPROOFS}.
\end{proof}

I stress that the found result, together with the covariance properties stated in Propositions \ref{FPROP}  and  \ref{propX}  suggest  that the Newton-Wigner position  localization observable  possesses  important physically sound features which should be preserved in any improvement of this sort of formalization. On the other hand, some substantial improvement is also necessary because, as  we shall see  shortly,  the Newton-Wigner   localization also suffers for  physically insurmountable issues related to  causality.

\section{Problems with spatial localization}
This section is devoted to examine   the consequences on the Newton-Wigner position  localization observable  of an important general result by Hegerfeldt \cite{Hegerfeldt, Hegerfeldt2} that, at the end of the play,  rules out it. The analysis only concerns the issue (I1) presented in the introduction and extends to more general notions of spatial localization based on POVMs rather than PVM. 

I stress that I will stick to the basic version of Hegerfeldt's result. A modern formulation,  which improves  original Hegerfeldt's ideas,  appears in \cite{Castrigiano1,Castrigiano2}.

\subsection{Castrigiano's  causality requirement}
Suppose that an one-particle Klein-Gordon pure state represented by $\psi\in {\cal H}$, with $||\psi||=1$, defines a family $\mu^\psi$  of  probability measures $\mu^\psi_{n,t} : \cL(\Sigma_{t,n}) \to [0,1]$ --  where $n\in  \sT_+$, $t\in \bR$  -- such that $\mu^\psi_{n,t}(\Delta)$ represents the probability of detecting the particle in $\Delta \subset \Sigma_{n,t}$. 
I will call this collection a {\bf family of spatial localization probability measures} associated to the state $\psi$. How this association is implemented will be discussed later.

A   physically meaningful requirement on families  of spatial localizations was explicitly introduced by Castrigiano\footnote{Other papers including \cite{Hegerfeldt} and \cite{Castrigiano1, Terno}, use only the requirement (a).} in \cite{Castrigiano2} and therein deeply analyzed in the case of particles with spin (within the more elaborated notion of {\em causal system}).  Castrigiano's  requirement was actually  formulated in terms of POVMs I will introduce later. Here I adopt a definition  in terms of families of probability measures which is equivalent to  Castrigiano's one as soon as one passes to deal with POVMs.  \\
The next definition illustrates Castrigiano's causality requirement corresponding to item (b) in the definition below. The notion of {\em causal time evolution} presented  in (a) was also introduced by Castrigiano. I stress that the  distiction between of (a) and (b)  is just functional to this  study, though the validity of (a) is an evident consequence of  (b)\footnote{I am grateful to  Prof. Castrigiano for clarifications on these issues.} which is the causality condition introduced in \cite{Castrigiano2}.\\

\begin{definition}\label{LCR}
{\em  Let
  $$ \mu^\psi:= \{\mu^\psi_{n,t} : \cL(\Sigma_{t,n}) \to [0,1]\}_{n\in  \sT_+, t\in \bR} $$  be the family of spatial localization probability measures  of a pure state represented by $\psi \in {\cal H}$ with $||\psi||=1$.
 \begin{itemize}
 \item[(a)] A given  $n\in  \sT_+$ defines  a {\bf causal time evolution}
  if, for every 
$\Delta \in \cL(\Sigma_{n,t})$,
\beq
\mu^\psi_{n,t}(\Delta) \leq \mu^\psi_{n,t'}(\Delta') \quad \forall  t'\in \bR \label{loccaus}\:.
\eeq
where  $\Delta' := \left(J^+(\Delta)  \cup J^-(\Delta) \right)\cap \Sigma_{n,t'}$. 
 \item[(b)]  {\bf (Castrigiano's causality requirement)} The full family  $\mu^\psi$ is {\bf causal} if,   for every 
$\Delta \in \cL(\Sigma_{n,t})$, it holds
 \beq
\mu^\psi_{n,t}(\Delta) \leq \mu^\psi_{n',t'}(\Delta') \quad \forall n, n'\in  \sT_+ \:,  \forall  t, t'\in \bR \label{loccaus2}\:,
\eeq
where  $\Delta' := \left(J^+(\Delta)  \cup J^-(\Delta) \right)\cap \Sigma_{n',t'}$. \\
 \end{itemize}}
\end{definition}

\begin{remark}\label{REML}
{\em \begin{itemize}
\item[(1)] The reason why I passed from $\cB(\Sigma_{t,n})$ to $\cL(\Sigma_{t,n})$ is that, if $\Delta \in \cB(\Sigma_{t,n})$ then it may happen that $\Delta'  \not \in \cB(\Sigma_{t',n'})$. {\em Vice versa}, if $\Delta \subset \Sigma_{n,t}$ (non necessarily Lebesgue measurable!), then  $\Delta'   \in \cL(\Sigma_{t',n'})$ for every $n'\neq n$ and $t,t'\in \bR$ as established in Lemma 16 \cite{Castrigiano2}.
\item[(2)]  Evidently, the validity of  (b) implies that (a) holds for every choice of $n\in  \sT_+$. However, if (a) is true for all $n\in  \sT_+$, (b) can be false in principle. 
\item[(3)] The definition of causal family of spatial localizations  is symmetric under time reversal, i.e., it  also consider $J^-(\Delta)$. This is because, if interpreting the probability ans a density of particles, the particles which reached $\Delta$
   at time $t$ must have passed through $J^-(\Delta)\cap \Sigma_{n',t'}$ for every rest space $\Sigma_{n',t'}$ in the past of $\Delta$. There are intermediate situations where the intersection of  $\Sigma_{n',t'}$ and  $\Sigma_{n,t}$ includes $\Delta$ but they can be treated separately by dividing the particles into two cases.  \hfill $\blacksquare$
\end{itemize}}
\end{remark}

\subsection{Justification of the causal condition  in the special case of sharp localization}\label{secjust} The condition (b) above seems  physically reasonable. However it is not obvious how to justify it within the framework  of this work (and the analogous ones), {\em as  everything should be justified within the framework of the issue (I1) disregarding (I2)}.  In other words,  I should not  to refer to any  issue concerning post-measurement states, but I have   to stick to a unique given family $\mu^\psi$.  I can at most perform one position measurement because,  after a measurement, referred to the state $\psi$ and the family $\mu^\psi$,  the state changes\footnote{Referring  to general, quite realistic,  measurement instruments,  the post measurement state is not  pure even if the initial state is.} $\psi \to \psi'$  and the family $\mu^\psi$ changes accordingly  $\mu^\psi \to \mu^{\psi'}$, into a way I cannot control without a precise choice of the post-measurement state. 
Instead, Definition \ref{LCR} considers a unique family $\mu^\psi$.  

There is a case however where a justification of the requirements in the above definition is sufficiently  easy even referring to a unique family $\mu^\psi$ (one measurement procedure only).
 Let us illustrate how the failure of  condition (a) (thus (b)) for a choice of $n\in  \sT_+$ would  permit superluminal transmission of information in the special case where there are states {\em strictly localized} at time $t$  in some bounded regions $\Delta$. In other words,  $\mu^\psi_{n,0}(\Delta)=1$ in the reference frame $n\in  \sT_+$.   This justification does not need to tackle the issue of the  post measurement state. 

Consider two types of Klein Gordon particles with masses $m_1 \neq  m_2$ 
 respectively and collect, at $t=0$, a large number of these particles (of the two types) in a box at rest in $\Sigma_{n,0}$. We can image  the box as the bounded region  $\Delta\subset 
 \Sigma_{n,0}$.  
 I assume that it is possible to open the box only for the mass $m_1$ or mass $m_2$ particles with some sort of filter. Next the procedure is
 
 (1)  I make a decision about  which type of particles ($m_1$ or $m_2$)  to free  from $\Delta$
  at time $t=0$ and I free it;
  
 (2)  somebody  detects the particles in $\Sigma_{n,t}$ at time $t>0$ and observes  the value of the mass.\\
{\em If (\ref{loccaus}) failed, a particle could be detected in the region  $\Delta' \subset  \Sigma_{n,t}$ with  $\Delta' \cap  J^+(\Delta)= \emptyset$, and 
this procedure would manage to  transmit the information about my mass choice made  in the spatial region  $\Delta$ at time $t=0$ outside the  causal future of this event!}
 
The crucial point in the above discussion is that some states are at disposal whose probability measure at $t=0$ is  zero outside the bounded region $\Delta$.   

Very unfortunately, as I will  discuss shortly, sharply localized position probabilities are ruled out by the Hegerfeldt theorem.

 The above  justification of the causality condition (b) which only  relies on (I1)  does not seem to be that easy to re-propose   if referring to families $\mu^\psi$ which are not sharply localized (see Sect. \ref{secALS}). In this case, as discussed in the rest of  this work, the position observable is described in terms of a POVM instead a PVM. 
In principle, in the absence of sharply localized states,   one may try to use again an analogous argument where, at time $t=0$,  two types of bosons stay in a box with a certain {\em very large probability}.  Opening the box for only one kind of boson should be formalized in terms of suitable {\em quantum operations} \cite{Buschbook}, not necessarily trace preserving, which define the quantum states of the two types of particles  at time $t>0$. Here,  precise theoretical choices seem to be necessary and the elementary setting of (I1) does not seem to be sufficient.
 
 This matter deserves further attention, but in this paper I will be content with  assuming  Castrigiano's causality  requirement and the consequent notion  of causal time evolution  as natural ideas.

\subsection{Spatial localization in terms of  POVMs}

As is known,  (see, e.g., \cite{Moretti2}), if $A :{\cal H} \to {\cal H}$, then $A\geq 0$ means $\langle \psi|A\psi \rangle \geq 0$ for all $\psi \in {\cal H}$.
This requirement for $A$ is equivalent to $A=A^\dagger \in \gB({\cal H})$ and $\sigma(A) \subset [0,+\infty)$. Finally, if also $B: {\cal H}\to {\cal H}$, then $A\geq B$ means $A-B \geq 0$.

An {\bf effect} (see \cite{Buschbook} for a modern up-to-date  textbook  on the subject)  is a bounded  operator $\sE\in \gB({\cal H})$, for  a Hilbert space ${\cal H}$, such that $0\leq \sE \leq I$. $\gE({\cal H})$ will indicate henceforth  the set of effects in ${\cal H}$. 
An orthogonal projector is an effect but there are effects which are not orthogonal projectors. 

A (normalized) {\bf Positive Operator Valued Meaure}  (POVM) is a map $$\Sigma(X) \ni \Delta \mapsto \sE(\Delta) \in \gE({\cal H})\:,$$  where $\Sigma(X)$ is a $\sigma$-algebra on $X$, such that the function  is  (see Def. 4.5 in  \cite{Buschbook} and the remarks under that definition)
\begin{itemize}
\item[(a)]
normalized: $\sE(X) =I$;
\item[(b)] $\sigma$-additive:  $\sum_{n \in \bN} \sE(\Delta_n) = \sE(\cup_{n\in \bN}\Delta_N)$
when $\Delta_n\cap \Delta_m = \emptyset$ for $n\neq m$ and the sum is understood in the weak (or equivalentely strong) operator topology.  
\end{itemize}
Notice that (a) and (b) imply in particular that $\sE(\emptyset)=0$. Furthermore (b) can be equivalently replaced by the requirement  that $\Sigma(X) \ni \Delta \mapsto \langle \psi| \sE(\Delta) \psi' \rangle$ is a complex measure (with finite total variation) for every $\psi,\psi' \in {\cal H}$.\\

\begin{remark}
{\em \begin{itemize}
\item[(1)] It is clear that a PVM is a specific case of POVM where the positive operators $\sE(\Delta)$ are orthogonal projectors. 
\item[(2)]  A POVM  does not satisfy in general  $[\sE(\Delta),\sE(\Delta')]=0$ for $\Delta \cap \Delta' = \emptyset$ contrarily to what happens for a PVM. 
\item[(3)] The one-to-one link between selfadjoint operators and PVMs does not hold in case of POVMs. Something remains however, since under some technical hypotheses a POVM is uniquely  determined  by a symmetric operator, in terms of  the {\em first moment} of the POVM, as I will briefly discuss later. This fact, the failure of the hypotheses for that property, will play  some role in this paper.  \hfill $\blacksquare$\\
\end{itemize}}
\end{remark}

The general notion of observable, in the modern approaches to Quantum Theory,  is a (normalized) POVM on a $\sigma$-algebra $\Sigma(X)$ and taking values in $\gB({\cal H})$, where ${\cal H}$ is the Hilbert space of the considered quantum system:
\begin{itemize}
\item[(1)] The elements $\Delta \in \Sigma(X)$ are the outcomes of measurements and,
\item[(2)]  if $\rho$ is a generally mixed state -- a trace class, unit-trace positive operator in $\gB({\cal H})$; $\Sigma(X) \ni \Delta \mapsto tr(\rho \sE(\Delta))$ is the probability measure associated to these outcomes. It boils down to $\Sigma(X) \ni \Delta \mapsto \langle \psi|\sE(\Delta) \psi\rangle$ in case of a pure state represented by the unit vector $\psi \in {\cal H}$. 
\end{itemize}

\begin{definition}\label{POVMloc}
A {\bf relativistic spatial localization observable}   for a  Klein-Gordon particle of mass $m>0$ described in the (complex, separable) Hilbert space ${\cal H}$ is defined as a family of normalized  POVMs $\sE_{n,t}: \cL(\Sigma_{n,t}) \to \gE({\cal H})$ , where  $n\in  \sT_+$ and $t\in \bR$,
that  is {\bf covariant} with respect to the  strongly continuous unitary representation $U$ of $IO(1,3)_+$ (\ref{ACT1}):
 \beq
  U_{h}  \sE_{n,t}(\Delta) U_{h}^{-1} = \sE_{\Lambda_h n,t_h}(h\Delta) \:, \quad \forall \Delta \in \cL(\Sigma_{n,t})\:, \quad h \in IO(1,3)_+\label{ACOVNW}\:.
\eeq
\end{definition}

A very detailed technical analysis of the notion above (called {\em Poincar\'e covariant POL} therein) appears in sections 6 and 7 of \cite{Castrigiano2} referring to a general system and establishing some extension and uniqueness properties from POVMs covariant under the Euclidean group to POVMs covariant under the full $IO(1,3)_+$ group.

The use of POVMs defined on $\cL(\Sigma_{n,t})$ is mandatory due to  Remark \ref{REML}. 

With the same elementary  procedure to complete positive measures, a POVM $\sE$ defined on 
$\cB(\Sigma_{n,t})$ uniquely extends to a {\bf completion}: another POVM $\overline{\sE}$, on a larger $\sigma$-algebra  $\overline{\cB(\Sigma_{n,t})}^{\sE}$ made of the unions of the elements of $\cB(\Sigma_{n,t})$ with the subsets of the zero-$\sE$-measure sets, 
$$\overline{\sE}(\Delta \cup Z) :=  \sE(\Delta)\:,  \quad \Delta \in \cB(\Sigma_{n,t})\:, \quad Z \subset B \in  \cB(\Sigma_{n,t})\:, \quad  \sE(B)=0\:.$$
Exactly as in standard measure theory, $\overline{\cB(\Sigma_{n,t})}^{\sE}$ is characterized  by the fact that it is the smallest $\sigma$-algebra including  $\cB(\Sigma_{n,t})$ and equipped with an extension $\overline{\sE}$ of $\sE$ such that all subsets  of zero-$\overline{\sE}$-measure sets in  $\overline{\cB(\Sigma_{n,t})}^{\sE}$ belong to  $\overline{\cB(\Sigma_{n,t})}^{\sE}$.

Trivially, the outlined procedure extends a POVM which is a PVM  to a completion that  is a  PVM as well.  In particular, the  completion of the previously discussed  Newton-Wigner PVM
turns out to be defined on $\overline{\cB(\Sigma_{n,t})}^{\sQ_{n,t}}= \cL(\Sigma_{n,t})$ as a consequence of (1) Proposition \ref{PROP12} and elementary properties of the  Lebesgue measure:
 $\cL(\Sigma_{n,t})\ni \Delta \mapsto \overline{\sQ_{n,t}}(\Delta) \in \gE({\cal H})$.
 This completion still satisfies the $IO(1,3)_+$ covariance and all the properties established in the previous section as one immediately proves.  In the rest of the paper I will simply write $\sQ_{n,t}(\Delta)$ in place of $\overline{\sQ_{n,t}}(\Delta)$ when $\Delta \in \cL(\Sigma_{n,t})$.

\subsection{Troubles with Newton-Wigner and sharply localized states: the Hegerfeldt theorem}
Hegerfeldt \cite{Hegerfeldt2} proved the following quite devastating  theorem against the Newton-Wigner notion of localization in particular. I reformulate the result established in  \cite{Hegerfeldt2} into the language of Definition \ref{LCR} and explicitly for a massive Klein-Gordon real spinless particle.\\

\begin{theorem}[Hegerfeldt]\label{HGT} Consider a spatial localization POVM  of a massive Klein-Gordon particle according to Def. \ref{POVMloc}.
Suppose that  there are   $\psi  \in {\cal H}$ with $||\psi||=1$ and  $e\in \Sigma_{n_e,t_e}$ such that the probability to find the particle outside the  balls $B_r(e) \subset \Sigma_{n_e,t_e}$ with common center  $e$ and variable radii $r>0$ satisfies 
 $$\langle \psi| \sE_{n_e,t_e}(\Sigma_{n_e,t_e} \setminus B_r(e)) \psi \rangle \leq K_1 e^{-K_2 r} \quad \mbox{for some $K_1>0$, $K_2 \geq 2m$ and all $r>0$}\:.$$
 Then $n_e$ cannot define a causal time evolution of
 the family of probability measures $\mu^{\psi}_{n,t}:= \langle \psi| \sE_{n,t}(\Delta) \psi \rangle$
  according to condition (a) in  Def. \ref{LCR}.\\
\end{theorem}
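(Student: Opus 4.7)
\emph{Plan.} I would argue by contradiction, reducing the statement to the analyticity argument underlying Hegerfeldt's original result \cite{Hegerfeldt,Hegerfeldt2}. Assume that $n_e$ defines a causal time evolution in the sense of (a) of Definition \ref{LCR}. A light-cone calculation (with $c=1$) gives $\bigl(J^+(B_r(e))\cup J^-(B_r(e))\bigr)\cap \Sigma_{n_e,t_e+\tau} = \overline{B_{r+|\tau|}(e_\tau)}$, with $e_\tau := e + \tau n_e$. Applying (a) to $\Delta = B_{R-|\tau|}(e)$ for any $R \geq |\tau|$ and taking complements in $\Sigma_{n_e,t_e+\tau}$, the exponential bound on $\mu^\psi_{n_e,t_e}$ transfers to
$$\mu^\psi_{n_e,t_e+\tau}\bigl(\Sigma_{n_e,t_e+\tau}\setminus B_R(e_\tau)\bigr) \;\leq\; K_1\,e^{K_2(|\tau|-R)},\qquad \tau\in\bR,\; R\geq |\tau|. $$

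Next I would transport this back to the slice $\Sigma_{n_e,t_e}$ using the covariance (\ref{ACOVNW}) specialised to $h=(I,\tau n_e)$, whose unitary representative is $U_h = e^{i\tau H_{n_e}}$ (with $H_{n_e} := -P_{n_e 0} \geq m$). Setting $A_R := \sE_{n_e,t_e}\bigl(\Sigma_{n_e,t_e}\setminus B_R(e)\bigr)$ and $\psi_\tau := e^{-i\tau H_{n_e}}\psi$, the inequality becomes
$$\langle \psi_\tau \,|\, A_R\, \psi_\tau \rangle \;=\; \|A_R^{1/2}\psi_\tau\|^2 \;\leq\; K_1\,e^{K_2|\tau|}\,e^{-K_2 R},\qquad R \geq |\tau|. $$
Because $\sigma(H_{n_e})\subseteq [m,+\infty)$ with $m>0$, the $\cH$-valued map $\tau\mapsto \psi_\tau$ admits a holomorphic extension to the lower half-plane $\bC_- := \{\mathrm{Im}\,\tau<0\}$, continuous up to $\bR$, and satisfies the spectral-theorem bound $\|\psi_\tau\| \leq e^{m\,\mathrm{Im}\,\tau}$ on $\bC_-$. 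For every $\chi\in\cH$ the scalar $\phi_\chi(\tau) := \langle \chi\,|\,A_R\psi_\tau\rangle$ is therefore holomorphic on $\bC_-$, continuous on $\overline{\bC_-}$, bounded on $\bC_-$ by $\|A_R\chi\|\,e^{m\,\mathrm{Im}\,\tau}$, and (by Cauchy--Schwarz $\phi_\chi(\tau) = \langle A_R^{1/2}\chi\,|\,A_R^{1/2}\psi_\tau\rangle$ combined with the preceding inequality) satisfies $|\phi_\chi(\tau)|\leq \|\chi\|\sqrt{K_1}\,e^{K_2|\tau|/2}\,e^{-K_2 R/2}$ on $\bR$.

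This is exactly the analytic datum to which Hegerfeldt's lemma \cite{Hegerfeldt,Hegerfeldt2} applies: the interplay between the half-plane exponential decay rate $m$ and the real-axis exponential growth rate $K_2/2$ is governed by the relation $K_2\geq 2m$, which is enough to force $\phi_\chi\equiv 0$ via a Phragm\'en--Lindel\"of / two-constants argument on $\bC_-$. Since $\chi\in\cH$ is arbitrary, $A_R\psi=0$ for every $R>0$; equivalently, $\mu^\psi_{n_e,t_e}(B_R(e))=1$ for every $R>0$. By $\sigma$-additivity of $\sE_{n_e,t_e}$, the limit $R\to 0^+$ gives $\mu^\psi_{n_e,t_e}(\{e\})=1$; but for the POVMs considered in this paper, whose effects are absolutely continuous with respect to $d\Sigma_{n_e,t_e}$, $\sE_{n_e,t_e}(\{e\})=0$, contradicting $\|\psi\|=1$. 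The principal obstacle is the Phragm\'en--Lindel\"of step: the real-axis bound is not integrable and only the matching hypothesis $K_2\geq 2m$ makes the analytic bootstrap work; I would borrow this estimate directly from \cite{Hegerfeldt2} rather than reprove it.
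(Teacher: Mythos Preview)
The paper does not supply its own proof of Theorem \ref{HGT}: the result is stated with attribution to Hegerfeldt's original papers \cite{Hegerfeldt,Hegerfeldt2} and is used as a black box (the only argument given in this section is the short proof of Corollary \ref{cor20}). So there is nothing in the paper to compare your sketch against line by line.

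That said, your outline is recognisably the standard Hegerfeldt strategy --- transfer the exponential tail bound along the time axis via the causality assumption, use positivity of $H_{n_e}$ to analytically continue $\tau\mapsto e^{-i\tau H_{n_e}}\psi$ into the lower half-plane, and close with a Phragm\'en--Lindel\"of/two-constants estimate tuned to the threshold $K_2\geq 2m$. You are right to flag that last step as the real work and to defer it to \cite{Hegerfeldt2}; the rest of your reduction is sound.

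One point deserves care: your final contradiction (``$\mu^\psi_{n_e,t_e}(\{e\})=1$, but the effects are absolutely continuous with respect to $d\Sigma_{n_e,t_e}$'') invokes a property that is \emph{not} part of Definition \ref{POVMloc}. The theorem is stated for an arbitrary $IO(1,3)_+$-covariant POVM, with no absolute-continuity hypothesis. Hegerfeldt's own conclusion is really a dichotomy: either causality fails, or the state is localised at a single spacetime worldline for all times. To turn this into the unconditional statement of Theorem \ref{HGT} you must rule out the second alternative from covariance alone (e.g.\ by showing that translation covariance of the POVM together with the specific Klein--Gordon representation forces $\sE_{n_e,t_e}(\{e\})=0$). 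This is doable, but it is an extra step, not a consequence of ``the POVMs considered in this paper''.
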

%

\noindent A crucial corollary follows against the Newton-Wigner notion of spatial localization.\\

\begin{corollary}\label{cor20} The (completion of the) Newton-Wigner spatial localization observable does not satisfy Castrigiano's  causality condition, because (a) in  Def. \ref{LCR} fails for every choice of $n\in  \sT_+$\:.
\end{corollary}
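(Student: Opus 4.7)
The strategy is a one--shot reduction to Hegerfeldt's theorem. To establish that (a) of Definition \ref{LCR} fails for every $n\in\sT_+$ it is enough, for each such $n$, to exhibit a single unit vector $\psi\in\cH$ and an event $e\in\Sigma_{n,0}$ whose Newton--Wigner position probabilities outside balls around $e$ decay faster than $K_1 e^{-2m r}$; Theorem \ref{HGT} then delivers a violation of (\ref{loccaus}) in the frame $n$, and arbitrariness of $n$ finishes the job. (The statement about (b) follows automatically from Remark \ref{REML}(2).)

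\textbf{Construction of the witness state.} Fix $n\in\sT_+$ and a co-moving Minkowskian coordinate system, thereby identifying $\Sigma_{n,0}$ with $\bR^3$. Pick any $\Psi\in C_c^\infty(\bR^3)$ of unit $L^2$-norm supported in a closed ball $\overline{B_{r_0}(\vec{x}_e)}$, where $\vec{x}_e$ represents the chosen event $e$. Using the unitary equivalence displayed in the proof of Proposition \ref{FPROP}, set
\[
\psi \,:=\, S_n^{-1}\,\cF\,\Psi \,\in\, \cH.
\]
Since $\cF$ and $S_n^{-1}$ are both norm-preserving, $\|\psi\|_{\cH}=1$. Applying (\ref{CONVNW2}) at $t=0$ one obtains, for every $\Delta\in\cL(\Sigma_{n,0})$,
\[
\langle \psi \,|\, \sQ_{n,0}(\Delta)\,\psi \rangle \;=\; \int_\Delta |\Psi(\vec{x})|^2\, d^3x.
\]
In particular $\langle \psi \,|\, \sQ_{n,0}(\Sigma_{n,0}\setminus B_r(e))\,\psi\rangle = 0$ for every $r>r_0$, while for $0<r\leq r_0$ the same quantity is bounded by $1\leq e^{2m(r_0-r)}$. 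Choosing $K_1:=e^{2m r_0}$ and $K_2:=2m$ one therefore has the uniform estimate
\[
\langle \psi \,|\, \sQ_{n,0}(\Sigma_{n,0}\setminus B_r(e))\,\psi\rangle \;\leq\; K_1\, e^{-K_2 r}\qquad \text{for every } r>0,
\]
which is exactly the hypothesis of Theorem \ref{HGT} with $(n_e,t_e)=(n,0)$.

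\textbf{Conclusion.} Theorem \ref{HGT} applied to $\psi$ and $e$ produces a time $t'\in\bR$ and a set $\Delta\in\cL(\Sigma_{n,0})$ for which (\ref{loccaus}) fails, so $n$ does not define a causal time evolution of $\mu^\psi$. Since $n\in\sT_+$ was arbitrary, condition (a) of Definition \ref{LCR} is violated in every reference frame, which is the claim. I do not foresee a genuine obstacle here: the only routine verifications are that $\Sigma_{n,0}\setminus B_r(e)\in\cL(\Sigma_{n,0})$ (open sets are Borel) and that $\psi$ indeed lives in $\cH$ with unit norm (immediate from the unitarity of $S_n^{-1}\,\cF$ on the Schwartz class). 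Everything else is already packaged inside Hegerfeldt's theorem.
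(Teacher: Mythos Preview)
Your proof is correct and follows essentially the same route as the paper: construct a state sharply localized in a ball so that the tail probability outside $B_r(e)$ vanishes for $r>r_0$, then invoke Hegerfeldt's theorem. The only cosmetic difference is that the paper obtains such a state abstractly as a nonzero vector in the range of the projector $\sQ_{n_e,t_e}(B_R(e))$, whereas you build it explicitly via the unitary $S_n^{-1}\cF$ applied to a $C_c^\infty$ function; both choices work equally well and the verification that the Hegerfeldt bound holds for all $r>0$ (not just $r>r_0$) is handled by you explicitly and implicitly in the paper.
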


\begin{proof}  Arbitrarily fix $e\in \Sigma_{n_e,t_e}$,  choose $R>0$ and consider the orthogonal projector $\sQ_{n_e,t_e}(B_R(e))$. It holds $\sQ_{n_e,t_e}(B_R(e)) \neq 0$
 due to (1) in Proposition \ref{PROP12}, since an open has strictly positive measure $d\Sigma_{n_e,t_e}$. Therefore there exists $\psi = \sQ_{n_e,t_e}(B_R(e))\psi$ with $||\psi||=1$. Evidently $\langle \psi| \sQ_{n_e,t_e}(\Sigma_{n_e,t_e} \setminus B_r(e)) \psi \rangle=0$ if $r>R$ since $$ \sQ_{n_e,t_e}(\Sigma_{n_e,t_e} \setminus B_r(e)) \psi=
 \sQ_{n_e,t_e}(\Sigma_{n_e,t_e} \setminus B_r(e)) \sQ_{n_e,t_e}(B_R(e))  \psi  = \sQ_{n_e,t_e}(\emptyset)\psi=0\:.$$
$\psi$ satisfies the hypotheses of Hegerfeldt's theorem with respect the family of balls $B_r(e)$.  Arbitrariness of $n_e \in  \sT_+$ concludes the proof.
\end{proof}

An interesting paper by Ruijsenaars \cite{Ruijsenaars} presents some explicit numerical  estimates of the probabilities of recording  a violation of causality through measurements of the Newton-Wigner observable for a scalar  Klein-Gordon massive particle.

It is evident that,  on account of the corollary, {\em Physics rules out the Newton-Wigner notion of localization} because it does not satisfy a  basic requirement about causality,  {\em in particular  taking Sect. \ref{secjust} into account}.
 However,  this is very  disappointing because {\em the Newton-Wigner  position operator  shows some natural and quite appealing features},  as  previously illustrated in Proposition \ref{propX} and its Corollary \ref{CORR}. This inconclusive asymmetry is very annoying and is certainly a  reason why  Newton Wigner's notion of localization is still a subject of discussion in the literature. In the rest of the paper will see how it is possible to keep the good things (the position operator)  and get rid of the bad ones (the PVM). \\

\begin{remark}{\em 
\begin{itemize}
\item[(1)] There are other, even more severe,  problems with the Newton-Wigner notion of spatial localization and causality   when one analyses  it on the ground of the issue (I2) of the introduction,  by assuming the L\"uders' projection postulate about the post-measurement state.

\item[(2)] The Newton-Wigner notion of spatial localization is acausal not only with respect to time translations but equally
regarding boosts. This so-called frame-dependence of
Newton-Wigner localization  has been observed already in \cite{WM}  and it is still studied in the
literature, e.g. \cite{FKW}. It is obvious that any
notion of spatial localization in terms of POVMs should meet the requirement of frame-independence.

\item[(3)] It is interesting to notice that the example of the rejection of the Newton-Wigner observable shows  how the idea that every PVM/selfadjoint operator in the  Hilbert space of a quantum system  must be  be an observable  is definitely untenable. However to author's knowledge this is the first time that, in quantum mechanics,  the rejection of a selfadjoint operator as an observable in quantum mechanics  is due to  local-causality and not to the existence of a gauge group or a superselection rule. 

\item[(4)] The above version of the Hegerfeldt theorem is the classic one, it  explicitly refers to the Klein-Gordon particle and  can be immediately extended to particles with spin. Actually it is not necessary that full covariance with respect to our  representation of $IO(1,3)_+$ holds. There are more abstract versions of this theorem that refer to abstract POVMs and  rely only on (a)  positivity of the selfadjoint generator of temporal translations and (b) covariance with respect to four translations.   
See in particular Theorem B1\footnote{That theorem includes the  hypothesis ``{\em $\langle \psi| \sE_\Delta \psi \rangle=1$ and $\langle \varphi|\sE_\Delta \varphi \rangle =0$ implies $\langle\psi|\varphi\rangle =0$ }''. However it  is not necessary since it is automatically satisfied by every POVM $\sE$.} in \cite{tesi}. A throughout analysis of the interplay of spatial localization and Hamiltonian positivity appears in sections 4 and 5 of \cite{Castrigiano2}.  \hfill $\blacksquare$
\end{itemize}}
\end{remark}

\section{The spatial localization observable  proposed by Terno}
In \cite{Castrigiano2}, Castrigiano proved that for spin $1/2$ it is possible to define a spatial localization  observable different from the Newton-Wigner one which satisfies the causality requirement (b) of Def. \ref{LCR}. That observable is a PVM if the  positivity assumption on the Hamiltonian evolutor is not imposed and becomes a POVM when restricting to the subspace of positive energy. Unfortunately, that  construction does not work for scalar Klein-Gordon particles  as discussed in Section 23 of  \cite{Castrigiano2}.

\subsection{Terno's POVM: the  heuristic definition from QFT}\label{secINTROT}
Terno \cite{Terno} introduced  a position localization  POVM starting from elementary notions of  free QFT in Minkowski spacetime.
Though that notion was also extended to photons in \cite{Terno}, here I stick to the case of a real scalar  massive Klein-Gordon field.
 
I review the definition of that POVM in the formal language of theoretical physics of QFT  first. Later I will  translate it into a more mathematically rigorous setting. I start from the stress energy operator of QFT. Let
$$:\hat{T}_{\mu\nu}\spa:(x) :=\: :\spa\partial_\mu \hat{\phi} \partial_\nu \hat{\phi}\spa:(x) - \frac{1}{2} g_{\mu\nu} \left( :\spa\partial_\alpha \hat{\phi}\partial ^\alpha \hat{\phi}\spa:(x) + m^2 :\spa\hat{\phi}^2\spa:(x)\right)$$ be the coordinate-representation of the normally ordered {\em stress-energy tensor operator} in the symmetric Fock space  $\gF_+({\cal H})$ of the real {\em Klein-Gordon field operator} $\hat{\phi}$ with mass $m>0$ . Referring to a Minkowski coordinate system co-moving with $n\in  \sT_+$, if   $\Delta \subset \Sigma_{n,t}$, define
\beq\label{Ternoinformal}
\sA_{n,t}(\Delta):= \frac{1}{\sqrt{H_n}}P_1 \int_\Delta :\spa\hat{T}_{\mu\nu}\spa:\spa(x) n^\mu n^\nu \:  d\Sigma_{n,t}(x) P_1\frac{1}{\sqrt{H_n}} \:, \quad \mbox{with $-n\cdot x = t$},
\eeq
where $P_1 : \gF_+({\cal H}) \to {\cal H}$ is the orthogonal projector onto the one-particle space of the symmetric Fock space $\gF_+({\cal H})$ constructed upon the Minkowski vacuum state with ${\cal H}$ as the one-particle subspace. Actually the definition in \cite{Terno} uses the total Hamiltonian in the Fock space and $P_1$ is swapped with the inverse square root of the said Hamiltonian, but that definition is formally equivalent to that above.

Formally speaking, without paying attention to domains, as $:\spa \hat{T}_{\mu\nu}\spa: (x) n^\mu n^\nu$ turns out to be  positive,  the integral is a positive operator so that $0\leq \sA_{n,t}(\Delta) \leq \sA_{n,t}(\Delta')$ if $\Delta \subset \Delta'$.
The integral  on the whole rest space amounts to $$\sA_{n,t}(\Sigma_{n,t}) =H_n^{-1/2} P_1  \left[0 \oplus H_n \oplus (H_n\otimes I \oplus I \otimes H_n) \oplus \cdots \right]   P_1 H^{-1/2}_{n} = H_n^{-1/2} H_n H^{-1/2}_{n}=I\:.$$ Hence $0\leq \sE(\Delta)\leq I$.  $\sigma$-additivity with respect to   $\Delta$ is guaranteed by the very presence of the integration over $\Delta$. As a matter of fact, barring mathematical details I will fix shortly, that is a  (non-commutative) POVM. 

A straightforward formal manipulation of the right-hand side of (\ref{Ternoinformal}), yields also a natural $IO(1,3)_+$ -covariance relation
\beq\nonumber 
U_h \sA_{n,t}(\Delta) U^\dagger_h = \sA_{\Lambda_h n, t_h}(h\Delta) \quad \mbox{if $h\in IO(1,n)_+$} \:.
\eeq

The physical idea behind Terno's definition should be  clear:  {\em probabilistically speaking,  the particle stays where the energy is}.
This idea was previously formulated in \cite{BK}, where however no explicit POVM was constructed.  The crucial normalization factors $H_n^{-1/2}$ were explicitly introduced in \cite{Terno}.

\subsection{Terno's spatial localization observable }
Expanding the quantum field in modes as usual, a straightforward computation starting from (\ref{Ternoinformal}) yields, for $\Delta \in \cL(\Sigma_{n,t})$ and $\psi \in {\cal S}({\cal H})$, 
$$\sA_{n,t}(\Delta)\psi  :=$$
\beq \label{TPOVM}
\int_\Delta \sp d\Sigma_{n,t}(x) \spa \int_{\sV_{\mu,+}} \sp\sp\sp d\mu_m(p) \frac{e^{-i(q-p)\cdot x}}{(2\pi)^3} \frac{\left(E_n(p)E_n(q)+ \frac{1}{2}(p\cdot q+ m^2)\right)}{\sqrt{E_n(p)E_n(q)}} \psi(p)\quad \mbox{with $-n\cdot x = t$},
\eeq
which I will  assume to be the  definition of the family of operators $\sA_{n,t}(\Delta)$, for $n\in \sT_t$ and $t\in \bR$,  on the domain ${\cal S}({\cal H})$.\\

\begin{theorem} \label{TEO0} Referring to a massive real  Klein-Gordon particle,  the family of operators $\sA_{n,t}(\Delta) : {\cal S}({\cal H}) \to {\cal H}$ defined in   (\ref{TPOVM}) for $n\in  \sT_+$, $t\in \bR$, $\Delta \in \cL(\Sigma_{n,t})$ uniquely  continuously  extends to a POVM -- we shall indicate with the same symbol -- for every given pair $n,t$. The following further facts are valid.
\begin{itemize}
\item[(1)] The family is  covariant with respect to the  strongly continuous unitary representation $U$ of $IO(1,3)_+$ (\ref{ACT1}):
 \beq
  U_{h}  \sA_{n,t}(\Delta) U_{h}^{-1} = \sA_{\Lambda_h n,t_h}(h\Delta) \:, \quad \forall \Delta \in \cL(\Sigma_{n,t})\:, \quad \forall h \in IO(1,3)_+\label{ACOVNWT}\:.
\eeq
and thus it defines a relativistic spatial localization observable.
\item[(2)]  Referring to the (Lebesgue-completion of the) Newton-Wigner spatial localization observable $\sQ_{n,t}$, the following identity is true
\beq \label{TPOVM2}
\sA_{t,n}(\Delta) = \sQ_{t,n}(\Delta) + \frac{1}{2}\left( \eta^{\mu\nu}\frac{P_{n\mu}}{H_n}  \sQ_{n,t}(\Delta) \frac{P_{n\nu}}{H_n}  + \frac{m}{H_n} \sQ_{n,t}(\Delta)  \frac{m}{H_n} \right)
\eeq
 for every $n\in  \sT_+$, $t\in \bR$, and $\Delta \in \cL(\Sigma_{n,t})$.  (The various everywhere-defined bounded  composite operators $P_n^\mu/H_{n}$  and $m/H_n$  are defined in terms of the joint spectral measure of $P^\mu$ and with standard spectral calculus.)
\end{itemize}
\end{theorem}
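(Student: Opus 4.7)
The plan is to prove the operator identity (\ref{TPOVM2}) on the invariant dense domain ${\cal S}({\cal H})$ first and then read off the POVM property, the continuous extension, and the covariance from it. Fix $n\in  \sT_+$, $t\in \bR$, $\Delta\in \cL(\Sigma_{n,t})$. From (\ref{TPOVM}) the integral kernel of $\sA_{n,t}(\Delta)$ in the momentum representation is
\[
K_A(p,q) = \int_\Delta \frac{e^{-i(p-q)\cdot x}}{(2\pi)^3}\,\frac{E_n(p)E_n(q)+\frac{1}{2}(p\cdot q + m^2)}{\sqrt{E_n(p)E_n(q)}}\,d\Sigma_{n,t}(x),
\]
and from (\ref{firstPVM}) the kernel of $\sQ_{n,t}(\Delta)$ is the same with $\sqrt{E_n(p)E_n(q)}$ alone in the numerator. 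Using $p\cdot q = -E_n(p)E_n(q) + \vec p\cdot \vec q$ in Minkowski coordinates co-moving with $n$, I would split the numerator as $\tfrac{1}{2}E_n(p)E_n(q)+\tfrac{1}{2}\vec p\cdot \vec q + \tfrac{1}{2}m^2$; each piece is then recognized as the kernel of $\tfrac{1}{2}\sQ_{n,t}(\Delta)$, of $\tfrac{1}{2}\sum_{k=1}^3(P_{nk}/H_n)\sQ_{n,t}(\Delta)(P_{nk}/H_n)$, and of $\tfrac{1}{2}(m/H_n)\sQ_{n,t}(\Delta)(m/H_n)$ respectively. Since $P_{n0}/H_n = -I$ and $\eta^{00}=-1$, one has $\eta^{\mu\nu}(P_{n\mu}/H_n)\sQ_{n,t}(\Delta)(P_{n\nu}/H_n) = -\sQ_{n,t}(\Delta) + \sum_k (P_{nk}/H_n)\sQ_{n,t}(\Delta)(P_{nk}/H_n)$, so the collected combination is exactly the right-hand side of (\ref{TPOVM2}) acting on ${\cal S}({\cal H})$.

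Every factor on the right-hand side of (\ref{TPOVM2}) is a bounded everywhere-defined operator ($P_{n\mu}/H_n$ and $m/H_n$ being Borel functions of the commuting selfadjoint components of the four-momentum, bounded thanks to $H_n\geq mI>0$), so that right-hand side defines a bounded operator on all of ${\cal H}$. Since ${\cal S}({\cal H})$ is dense in ${\cal H}$ by Proposition \ref{PROPS}, $\sA_{n,t}(\Delta)$ extends uniquely by continuity and agrees with it. The equivalent form
\[
\sA_{n,t}(\Delta) = \tfrac{1}{2}\sQ_{n,t}(\Delta) + \tfrac{1}{2}\sum_{k=1}^3\frac{P_{nk}}{H_n}\sQ_{n,t}(\Delta)\frac{P_{nk}}{H_n} + \tfrac{1}{2}\frac{m}{H_n}\sQ_{n,t}(\Delta)\frac{m}{H_n}
\]
obtained in passing makes positivity $\sA_{n,t}(\Delta)\geq 0$ manifest; normalization $\sA_{n,t}(\Sigma_{n,t})=I$ follows from $\sum_k P_{nk}^2 + m^2I = H_n^2$ on the appropriate domain; and $\sigma$-additivity is inherited from that of $\sQ_{n,t}$, because conjugation by fixed bounded operators preserves both strong- and weak-operator limits. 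This settles the extension and the POVM property.

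For the covariance (\ref{ACOVNWT}), I would conjugate the extended identity (\ref{TPOVM2}) by $U_h$. By Proposition \ref{FPROP} the middle factor transforms as $U_h\sQ_{n,t}(\Delta)U_h^{-1} = \sQ_{\Lambda_h n, t_h}(h\Delta)$, while the abstract four-momentum obeys $U_h P_\mu U_h^{-1} = (\Lambda_h^{-1})_\mu{}^\nu P_\nu$ directly from (\ref{ACT1}), so both $\eta^{\mu\nu}(P_{n\mu}/H_n)(\,\cdot\,)(P_{n\nu}/H_n)$ and $(m/H_n)(\,\cdot\,)(m/H_n)$ transform into the same expressions with $n$ replaced by $\Lambda_h n$; then (\ref{ACOVNWT}) follows. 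An equivalent, more concrete route is a direct change of variables $p\mapsto \Lambda_h^{-1} p$, $q\mapsto \Lambda_h^{-1}q$, $x\mapsto h^{-1}x$ in (\ref{TPOVM}), using Lorentz-invariance of $d\mu_m$ from (\ref{invM}) and of $p\cdot q$, together with the Minkowski-isometry property of $d\Sigma_{n,t}$. I expect the main delicate point to be the bookkeeping in the algebraic identity — specifically the cancellation between the explicit $+\sQ_{n,t}(\Delta)$ on the right of (\ref{TPOVM2}) and the $-\tfrac{1}{2}\sQ_{n,t}(\Delta)$ produced by the $\eta^{00}(P_{n0}/H_n)^2 = -I$ contribution inside the parenthesis — rather than any functional-analytic subtlety, since all operators involved are bounded and everywhere defined.
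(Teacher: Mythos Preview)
Your proposal is correct and follows essentially the same strategy as the paper: establish the identity (\ref{TPOVM2}) on ${\cal S}({\cal H})$ by matching integral kernels, then read off boundedness, unique continuous extension, positivity (via the rearranged form (\ref{TPOVM3})), normalization, and $\sigma$-additivity from the right-hand side, and finally obtain the covariance either directly from the kernel or by conjugating (\ref{TPOVM2}) and invoking (\ref{COVNW}) together with the covariance of $P_{n\mu}$. The only cosmetic difference is that the paper phrases the kernel comparison in weak form $\langle\psi'|B\psi\rangle=\langle\psi'|C\psi\rangle$ rather than writing out $K_A(p,q)$, but the content is identical.
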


\begin{proof} Let us prove (1) and (2). Fix $n\in  \sT_+$ and $t\in \bR$.
 If $\psi',\psi \in {\cal S}({\cal H})$ and we indicate by $B\psi$ the right-hand side of  (\ref{TPOVM}) and by $C$ the right-had side of (\ref{TPOVM2}), a straightforward computation that takes (\ref{interm}) into account  proves that 
 $\langle \psi'| B\psi\rangle= \langle \psi'|C \psi \rangle$. Since $\psi'$ varies in a dense set, the found   identity implies that $B\psi= C\psi$ for all  $\psi \in {\cal S}({\cal H})$. As $C$ is continuous and everywhere defined on ${\cal H}$, we conclude that the operator defined in   (\ref{TPOVM}) uniquely  extends by continuity to the operator in  (\ref{TPOVM2}). On the other hand, since the operators $\sQ_{n,t}(\Delta)$ define a PVM, the structure of the right-hand side of  (\ref{TPOVM2}), which can be re-arranged to
\beq\sA_{t,n}(\Delta) =  \frac{1}{2}\left(\sQ_{t,n}(\Delta) +\sum_{k=1}^3\frac{P_{nk}}{H_n}  \sQ_{n,t}(\Delta) \frac{P_{nk}}{H_n}  + \frac{m}{H_n} \sQ_{n,t}(\Delta)  \frac{m}{H_n} \right)\label{TPOVM3}\eeq
 defines a family of positive operators of $\gB({\cal H})$. Notice in particular  that  $ \frac{P_{n\nu}}{H_n}= \left(\frac{P_{n\nu}}{H_n}\right)^\dagger \in \gB({\cal H})$ and $ \frac{m}{H_n}= \left(\frac{m}{H_n}\right)^\dagger \in \gB({\cal H})$. The family of operators in the right-hand side of (\ref{TPOVM3}),   is also  evidently weakly $\sigma$-additive in $\Delta \in \cL(\Sigma_{n,t})$. The constructed POVM is normalized because $\sQ_{n,t}$ is:
 $$\sA_{t,n}(\Sigma_{n,t}) = \sQ_{t,n}(\Sigma_{n,t}) + \frac{1}{2}\left( \eta^{\mu\nu}\frac{P_{n\mu}}{H_n}I \frac{P_{n\nu}}{H_n}  + \frac{m}{H_n} I  \frac{m}{H_n} \right) =I+ 0 =I\:.$$ 
 The proof of (\ref{ACOVNWT}) is strictly analogous to the one of (\ref{COVNW}) or it can be established immediately from it by taking (\ref{TPOVM2}) into account and the obvious covariance properties of the operators $P_{n\mu}$.
\end{proof}

\begin{definition} Referring to Theorem \ref{TEO0},
we call each $\sA_{n,t}$  {\bf Terno's spatial localization POVM} in the reference frame $n\in  \sT_+$ at time $t\in \bR$.
The family $\sA$ of POVMs $\sA_{n,t}$ will be named {\bf Terno's spatial localization observable}.\\
\end{definition}

\begin{remark} {\em 
Contrarily to the case of the Newton-Wigner localization, covariance with respect to the spatial Euclidean subgroup is not sufficient to fix the structure of $ \sA_{n,t}$, since there are infinitely many POVMs with that covariance property with respect to a unitary strongly continuous representation of the Euclidean group \cite{POVMEUC}. \hfill $\blacksquare$} 
\end{remark}

\subsection{Almost localized states}\label{secALS}
The following proposition illustrates a fundamental difference between  the notion of spatial localization by Newton-Wigner and the one by Terno:  localized states in bounded regions are permitted by the former but are impossible for  the latter. 
This implies in particular  that the argument of Corollary \ref{cor20} -- which ruled out the Newton-Wigner localization notion --  cannot be directly applied to $\sA_{n,t}$.  In \cite{Terno}, it is proved (exploiting  an argument of \cite{BK}) that the spatial decay of the probability distribution arising from  
 the POVM $\sA_{n,t}$ does not reach the bound sufficient to trigger  Hegerfeld's local-causality catastrophe. I will achieve  that result indirectly, by establishing that the time evolution with respect to every $n\in \sT_+$ is causal for the  said POVM . 

However, it is not the whole story. Indeed, the second statement of the next proposition  shows that, for every (in particular {\em bounded}) region $\Delta \in \cL(\Sigma_{n,t})$ with non-empty interior, there are states which are arbitrary good approximations of states sharply localized in that region.\\

\begin{proposition}\label{PROPNL} 
Referring to the Terno spatial localization observable $\sA$, the following facts are true.
\begin{itemize}
\item[(1)]
Suppose that $\psi \in {\cal H}$ with $||\psi||=1$,   $n\in  \sT_+$,   $t\in \bR$, and  $\Delta \in \cL(\Sigma_{n,t})$ satisfy
$$\langle \psi|\sA_{n,t}(\Delta) \psi\rangle =1\:.$$
In that case
$\Delta$ is dense in $\Sigma_{n,t}$.
 In particular, $\Delta$ cannot be bounded.
\item[(2)] For every given $n\in \sT_+$, $t\in \bR$ and $\Delta \in \cL(\Sigma_{n,t})$ with $Int(\Delta)\neq \emptyset$, there is  a sequence of vectors
$\{\psi_j\}_{j \in \bN}\subset {\cal H}$ such that $||\psi_j||=1$ and $$\langle \psi_j| \sA_{n,t}(\Delta)\psi_j\rangle \to 1\:, \quad \mbox{ as $j\to +\infty$.}$$

\item[(3)] For every given $n\in \sT_t$, $t\in \bR$ and $\Delta \in \cL(\Sigma_{n,t})$, if $Int(\Delta)\neq \emptyset$, then $||\sA_{n,t}(\Delta)||=1$. \\
\end{itemize}
\end{proposition}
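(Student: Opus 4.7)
For (1), the plan is to derive two Newton--Wigner localization identities from the Terno hypothesis via the explicit decomposition (\ref{TPOVM3}), and then invoke the antilocality result of Theorem~\ref{teorem11}. The equality $\langle\psi|\sA_{n,t}(\Delta)\psi\rangle=\|\psi\|^2$ together with positivity of the POVM yields $\sA_{n,t}(\Delta^c)\psi=0$, with $\Delta^c:=\Sigma_{n,t}\setminus\Delta$. The three summands in (\ref{TPOVM3}) applied to $\Delta^c$ are individually positive operators of the form $B\sQ_{n,t}(\Delta^c)B$ with $B$ bounded self-adjoint, so each of them annihilates $\psi$; in particular $\sQ_{n,t}(\Delta)\psi=\psi$ and $\sQ_{n,t}(\Delta)\tfrac{m}{H_n}\psi=\tfrac{m}{H_n}\psi$. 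Passing to the picture (\ref{unit}) and using the intertwining $\cF^{-1}S_nU^{-1}_{(I,tn)}\tfrac{m}{H_n}=m(\overline{-\Delta+m^2})^{-1/2}\cF^{-1}S_nU^{-1}_{(I,tn)}$, both $\Psi_t$ and $(\overline{-\Delta+m^2})^{-1/2}\Psi_t$ vanish a.e.\ on $\Delta^c$. If $\Delta$ were not dense, one picks a non-empty open $\Omega\subset\Sigma_{n,t}\setminus\overline\Delta$ and applies Theorem~\ref{teorem11} with $\alpha=-1/2\notin\bZ$ to get $\Psi_t=0$, contradicting $\|\psi\|=1$.

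For (2), the plan is an explicit concentration/scaling construction. Fix $x_0\in Int(\Delta)$ and $r>0$ with $B_r(x_0)\subset\Delta$; pick $\Psi\in\cS(\bR^3)$ with $\|\Psi\|_{L^2}=1$ and $\mathrm{supp}\,\Psi\subset B_1(0)$; for $\lambda>1/r$, set $\Psi_\lambda(\vec x):=\lambda^{3/2}\Psi(\lambda(\vec x-x_0))$ and define $\psi_\lambda\in\cH$ by $\Psi_\lambda=\cF^{-1}S_nU^{-1}_{(I,tn)}\psi_\lambda$. Then $\mathrm{supp}\,\Psi_\lambda\subset B_r(x_0)\subset\Delta$ yields $\langle\psi_\lambda|\sQ_{n,t}(\Delta)\psi_\lambda\rangle=1$. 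Passing to Fourier variables and rescaling $q=p/\lambda$, a dominated-convergence argument gives
\begin{equation*}
\left\|\tfrac{m}{H_n}\psi_\lambda\right\|^2=\int_{\bR^3}\frac{m^2\,|\hat\Psi(q)|^2}{\lambda^2 q^2+m^2}\,dq\;\longrightarrow\;0,
\end{equation*}
while the spatial profile of $\tfrac{P_{nk}}{H_n}\psi_\lambda$ in the $\Psi$-picture takes the self-similar form $\lambda^{3/2}G_\lambda(\lambda(\vec x-x_0))$ with $G_\lambda\to G:=\cF^{-1}(q_k\hat\Psi/|q|)\in L^2(\bR^3)$ in $L^2$; its $L^2$-mass outside $B_r(x_0)$ is then $\int_{|\vec y|\geq\lambda r}|G_\lambda(\vec y)|^2\,d\vec y\to 0$. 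Combining these estimates in (\ref{TPOVM3}) with the on-shell Pythagorean identity $1=\|\tfrac{m}{H_n}\psi_\lambda\|^2+\sum_k\|\tfrac{P_{nk}}{H_n}\psi_\lambda\|^2$ (which follows from $H_n^2=m^2I+\sum_k P_{nk}^2$) yields $\langle\psi_\lambda|\sA_{n,t}(\Delta)\psi_\lambda\rangle\to 1$.

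Claim (3) is then immediate: $\sA_{n,t}(\Delta)\in\gE(\cH)$ gives $\|\sA_{n,t}(\Delta)\|\leq 1$, while $\|A\|=\sup_{\|\psi\|=1}\langle\psi|A\psi\rangle$ for positive $A$, together with (2), provides the matching lower bound. I expect the delicate step to be (2): one has to reconcile narrow NW-localization of $\psi_\lambda$ with the non-locality of the dressing operators $m/H_n$ and $P_{nk}/H_n$. The choice $\Psi_\lambda=\lambda^{3/2}\Psi(\lambda\,\cdot\,)$ is tailored to this, since it concentrates the state at scales $1/\lambda\ll 1/m$: the mass-dependent smearing becomes negligible, while the pseudodifferential symbol $p_k/\sqrt{p^2+m^2}$ collapses to the scale-invariant $q_k/|q|$, preserving the spatial profile of $\psi_\lambda$ up to the universal kernel $G$.
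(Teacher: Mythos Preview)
Your proof is correct. Parts (1) and (3) follow essentially the same path as the paper (decomposition (\ref{TPOVM3}), positivity of each summand, antilocality with $\alpha=-1/2$; then the variational formula for the norm of a positive operator).

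For (2) you take a genuinely different route. The paper fixes a smooth bump $\chi$ supported in a ball $B_R\subset\Delta$ and builds $\psi_j$ from $\Psi_j(\vec x):=e^{ij\vec a\cdot\vec x}\chi(\vec x)$, i.e.\ it \emph{translates} the state in momentum while keeping its spatial support frozen; the correction term in (\ref{TPOVM2}) is then shown to vanish by an explicit momentum-space kernel estimate (H\"older plus dominated convergence, done in the Appendix). You instead \emph{scale} in position, $\Psi_\lambda(\vec x)=\lambda^{3/2}\Psi(\lambda(\vec x-x_0))$, which also drives the momentum to infinity. Your argument is more structural: the mass term dies because $\|\tfrac{m}{H_n}\psi_\lambda\|^2=\int m^2(\lambda^2q^2+m^2)^{-1}|\hat\Psi(q)|^2dq\to 0$, while for the $P_{nk}/H_n$ terms you identify the self-similar limiting symbol $q_k/|q|$ and use $L^2$-convergence $G_\lambda\to G$ together with the tail estimate $\int_{|y|\geq\lambda r}|G_\lambda|^2\to 0$. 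The paper's translation trick has the virtue that the spatial support of $\Psi_j$ is literally unchanged for every $j$, so no tail analysis of the dressed states is needed and everything is pushed into a single momentum integral; your scaling trick trades that for a cleaner, scale-invariant picture of the ultra-relativistic limit and avoids the explicit kernel computation. Both are short and both work.
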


\begin{proof} (1) Define $\Delta' :=\Sigma_{n,t}\setminus \Delta$. By additivity,  $\langle \psi|\sA_{n,t}(\Delta') \psi\rangle =0$. From (\ref{TPOVM3}) and  the fact that $\sQ_{nt}$  is a PVM, 
 $\langle \psi|\sA_{n,t}(\Delta') \psi\rangle =0$ can be rephrased to
 $$\frac{1}{2}||\sQ_{n,t}(\Delta')\psi||^2 + \frac{1}{2} \sum_{k=1}^3|| \sQ_{n,t}(\Delta')H_n^{-1}P_{nk}\psi ||^2 + \frac{m^2}{2} ||\sQ_{n,t}(\Delta')H_n^{-1}\psi||^2 =0 \:.$$
 In particular $\sQ_{n,t}(\Delta')\psi=0$ and $\sQ_{n,t}(\Delta')H_n^{-1}\psi=0$. Using the representation (\ref{unit}) of the Hilbert space vectors, these requirements can be restated to
 $1_{\Delta'}(\vec{x}) \Psi_t(\vec{x})=0$ and  $1_{\Delta'}(\vec{x}) (\overline{-\Delta + m^2I})^{-1/2}\Psi_t(\vec{x})=0$. Hence $\Psi_t(\vec{x})=0$ and $ (-\Delta + m^2I)^{-1/2}\Psi_t(\vec{x})=0$ a.e. on $\Delta'$. If $\Delta'$ includes an open non-empty set,  Theorem \ref{teorem11} would imply that $\Psi_t=0$ which is not permitted by hypothesis.\\
(2)  It is evidently sufficient to prove it for the special case  $\Delta = B_R\subset \Sigma_{n,t}$ given by an open  ball of finite radius $R>0$. Indeed, if $\Delta$ admits non-empty interior, then  $\Delta \supset B_R$ for some such ball and thus   $0\leq \langle \psi| \sA_{n,t}(B_R) \psi \rangle \leq \langle \psi| \sA_{n,t}(\Delta) \psi \rangle \leq 1$ if  $||\psi||=1$.
 A sequence of localizing states $\psi_j$ for $B_R$ is also a sequence of localizing states for $\Delta$.
 Finally, we can always assume $t=0$ without lack of generality as the reader can immediately prove using a trivial time translation and exploiting the covariance properties of $\sA$.  So we prove the thesis for the ball $B_R$.  Consider a $C^\infty$ function $\chi\geq 0$ on $\Sigma_{n,0}$ with $supp(\chi) \subset B_R$. Let us  identify $\Sigma_{n,0}$ with $\bR^3$ with a co-moving Minkowski coordinate system of $n$ whose spatial origin is the center of $B_R$.  If  $\vec{a} \in \bR^3$ is a fixed non-vanishing vector and $j\in \bN$,
$$\hat{\chi}_j(\vec{k}) := \frac{1}{(2\pi)^{3/2}}\int_{\bR^3}  e^{-i k\cdot x }  e^{i j \vec{a} \cdot \vec{x} } \chi(\vec{x})\: d^3x \in \cS(\bR^3)\:.$$
Notice that the $L^2$ norm of these vectors does not depend on $j$ and is $||\chi||_{L^2(\bR^3, d^3x)}$.
We can always choose $\chi$ in order that  $||\hat{\chi}_j||_{L^2(\bR^3, d^3k)}=1 = ||\chi||_{L^2(\bR^3, d^3x)}$ for all $j\in \bN$. Finally,  define the family of the unit vectors $\psi_j \in {\cal H}$,
$$\psi_j(k) :=\sqrt{E_n(\vec{k})} \hat{\chi}_j(\vec{k}) \:,\quad j\in \bN\:.$$
From (\ref{firstPVM}),
$$\langle \psi_j | \sQ_{n,0}(B_R) \psi_j \rangle = \int_{B_R} \overline{\chi(\vec{x})} \chi({\vec x})d^3x = ||\chi||^2_{L^2(\bR^3, d^3x)}=1\:.$$
decomposing $\langle \psi| \sA_{n,t}(\Delta) \psi \rangle$ as in  (\ref{TPOVM2}), we have that
$\langle \psi| \sA_{n,0}(\Delta) \psi \rangle - \langle \psi_j | \sQ_{n,0}(B_R) \psi_j \rangle \to 0$ because
\beq  \left\langle \psi_j \left|\left( \eta^{\mu\nu}\frac{P_{n\mu}}{H_n}  \sQ_{n,0}(B_R) \frac{P_{n\nu}}{H_n}  + \frac{m}{H_n} \sQ_{n,0}(B_R)  \frac{m}{H_n} \right) \right.\psi_j \right\rangle \to 0 \quad \mbox{if $j\to +\infty$}\label{limit}\:.\eeq
The proof of the limit above is postponed to Appendix \ref{APPPROOFS}. This concludes the proof of (2), because 
$\langle \psi_j | \sQ_{n,0}(B_R) \psi_j \rangle=1$ as said above.\\
(3)   is  an easy  consequence of (2),  $0\leq \sA_{n,t}(\Delta)= \sA_{n,t}(\Delta)^\dagger \leq I$ and $||\sA_{n,t}(\Delta)|| = 
\sup\{|\langle \psi| \sA_{n,t}(\Delta) \psi \rangle|\:|\: ||\psi||=1\}$.
\end{proof}

\subsection{Interplay of the first-moment operator of $\sA$  and  the NW position operator}
I can now pass  to introduce the {\em first moment} of Terno's POVM, a symmetric operator. I will  prove  in particular that  its closure  coincides with the Newton-Wigner position operator, so that it preserves all the good properties of the Newton-Wigner position operator. \\

\begin{theorem}\label{TEO1}
 Take $n\in  \sT_+$, $t\in \bR$, choose a co-moving  Minkowski  coordinate system $x^0=t,x^1,x^2,x^3$. There is only one   operator $X^\mu_{n,t} : {\cal S}({\cal H}) \to {\cal H}$, for every $\mu:= 0,1,2,3$, completely defined as the first  moment  of the  POVM $\sA_{t,n}$:
\beq\label{chimoment}
\langle \psi| X^\mu_{n,t} \psi \rangle := \int_{\Sigma_{n,t}} x^\mu d\langle \psi| \sA_{n,t}(x) \psi\rangle \:,   \quad \forall \psi \in {\cal S}({\cal H})\quad \mbox{and where $-n\cdot x=t$}\:.
\eeq
The following facts are true.
\begin{itemize}
\item[(1)] $ X^\mu_{n,t}$ satisfies 
\beq\label{rest0}
\langle \psi| X^\mu_{n,t} \psi \rangle = \langle \psi|  N^\mu_{n,t} \psi \rangle\quad \forall \psi \in {\cal S}({\cal H})\:,
\eeq
where $ N^\mu_{n,t}$ is the Newton-Wigner position operator, so that  the further following facts are  valid.
\begin{itemize}
\item[(a)] The identity holds
\beq\label{rest}
 X^\mu_{n,t}=   N^\mu_{n,t}|_{{\cal S}({\cal H})}\:.
\eeq

\item[(b)]  $X^\mu_{n,t}$ is symmetric, essentially selfadjoint and its unique selfadjoint extension is $ N^k_{n,t}$ itself.

\item[(c)]  The  Heisenberg commutation relations  hold, where $k,h=1,2,3$:
\beq\label{CCR}
[ X_{n,t}^k, X_{n,t}^h]|_{{\cal S}({\cal H})} =[  P_{n h}, P_{n k}]|_{{\cal S}({\cal H})} =0\:,  \qquad 
[ X_{n,t}^k, P_{n h}]|_{{\cal S}({\cal H})} = i\delta^k_hI|_{{\cal S}({\cal H})}\:.
\eeq

\item[(d)]  The $IO(1,3)_+$ covariance relations are true, if $\psi \in {\cal S}({\cal H})$ and $IO(1,3)_+ \ni h= (\Lambda_h, a_h)$,
\beq
U_h X_{n,t}^\alpha U_h^{-1} \psi = (\Lambda^{-1}_h)^\alpha_\beta (X^{\beta}_{\Lambda_h n, t_{h}} - a_h^\beta I)\psi, \quad \forall h \in IO(1,3)_+\:. \label{5}
\eeq

\item[(d)] The Heisenberg  time evolution relation is valid\footnote{A similar equation appears as Eq. (A18) in Terno's paper \cite{Terno}.}:
\beq
 U^{(n)\dagger}_t X_{n,0}^k U^{(n)}_t\psi  = X_{n,t}^k\psi =  X_{n,0}^k \psi  + t\frac{P_{nk }}{P_{n0}}\psi \quad \mbox{for $\psi \in {\cal S}({\cal H})$ and $k=1,2,3$}\:.\label{EV2}
\eeq

\item[(e)]  If $\psi \in {\cal S}({\cal H})$ and $||\psi||=1$,  the first-moment operators  define a timelike worldline because\beq
 \sum_{k=1}^3 \left(\frac{d}{dt} \langle \psi| X^{k}_{n,t} \psi\rangle\right)^2 < 1 \:.\label{EV3}
 \eeq
 \end{itemize}
\item[(2)] If $\psi \in {\cal S}({\cal H})$ with $||\psi||=1$ and $k=1,2,3$,
\beq  \int_{\Sigma_{n,t}} (x^k)^2 d\langle \psi| \sA_{n,t}(x) \psi \rangle = \langle \psi| ( N^k_{n,t})^2\psi\rangle + \left\langle \psi\left| \frac{(P_{n0})^2-(P_{nk})^2}{2(P_{n0})^4}\psi\right. \right\rangle  \:.\label{Xsquare}\eeq
As a consequence,
 a corrected version of 
the  Heisenberg inequality  holds for $k=1,2,3$ (restoring the physical constants):
\beq\label{HPN}
\Delta_\psi X^k_{n,t} \Delta_\psi P_{nk} \geq  \frac{\hbar}{2} \sqrt{1 + 2\Delta_\psi P_{n,k}^2 
\left\langle \psi \left|\frac{(P_{n0})^2-(P_{nk})^2}{(P_{n0})^4}\psi\right. \right\rangle}\:,  \quad \psi \in {\cal S}({\cal H})\:. 
\eeq
where $\Delta_\psi X^k_{n,t}$ is the standard deviation of the probability measure  $\cL(\Sigma_{n,t}) \ni \Delta \mapsto \langle \psi|\sA_{n,t}(\Delta)\psi \rangle \in [0,1]$ .
\end{itemize}
\end{theorem}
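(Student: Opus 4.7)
The plan is to reduce everything to spectral integrals against the Newton--Wigner PVM, exploiting the decomposition of Theorem \ref{TEO0} rewritten as
\[
\sA_{n,t}(\Delta) = \tfrac{1}{2}\sQ_{n,t}(\Delta) + \tfrac{1}{2}\textstyle\sum_{a=0}^{3} B_a\,\sQ_{n,t}(\Delta)\,B_a,
\]
where $B_k := P_{nk}/H_n$ for $k=1,2,3$ and $B_0 := m/H_n$ are bounded, mutually commuting, selfadjoint operators obeying the mass-shell sum rule $\sum_{a=0}^{3} B_a^2 = I$.  Each $B_a$ preserves $\cS(\cH)$ because in the co-moving momentum representation it multiplies Schwartz functions by a smooth function with all derivatives bounded ($p_k/E_n$, resp.\ $m/E_n$). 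Hence for $\psi \in \cS(\cH)$ the scalar measure $\Delta \mapsto \langle\psi|\sA_{n,t}(\Delta)\psi\rangle$ is a finite positive combination of NW spectral measures of vectors again lying in $\cS(\cH) \subset D(N^\mu_{n,t})$; all moments are finite and split termwise.

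The split first moment reads
\[
\textstyle\int x^\mu \,d\langle\psi|\sA_{n,t}(x)\psi\rangle = \langle\psi|N^\mu_{n,t}\psi\rangle + \tfrac{1}{2}\bigl\langle\psi\bigl|\sum_{a=0}^{3} B_a\,[N^\mu_{n,t},B_a]\,\psi\bigr\rangle,
\]
after inserting $\sum_a B_a^2 = I$ and rearranging. The key step is to show $\sum_{a=0}^{3} B_a[N^\mu_{n,t},B_a]=0$ on $\cS(\cH)$. For $\mu=0$ this is trivial since $N^0_{n,t}=tI$. For $\mu=k$ I pass to the momentum representation via the unitary $S_n$ of (\ref{mapV}), in which (\ref{CONVNW2}) shows that $N^k_{n,0}$ acts as $i\partial/\partial p_k$ on $\cS(\bR^3)$ while each $B_a$ acts by multiplication by $f_a(\vec p)$ with $f_k=p_k/E_n(\vec p)$ and $f_0=m/E_n(\vec p)$. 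Then $[N^k_{n,0},B_a]$ is multiplication by $i\partial_{p_k} f_a$, and the required identity is
\[
\textstyle\sum_{a=0}^{3} f_a\,\partial_{p_k} f_a \,=\, \tfrac{1}{2}\,\partial_{p_k}\!\bigl(\sum_a f_a^2\bigr) \,=\, \tfrac{1}{2}\,\partial_{p_k}(1) \,=\, 0,
\]
an immediate consequence of the mass-shell relation $\sum_a f_a^2 = 1$. This yields (\ref{rest0}) and hence (\ref{rest}); items (b)--(e) of (1) then transcribe verbatim from Proposition \ref{propX} and Corollary \ref{CORR}, since $X^\mu_{n,t} = N^\mu_{n,t}|_{\cS(\cH)}$ and $\cS(\cH)$ is a core for $N^\mu_{n,t}$ on which it is also invariant.

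For (2), an analogous split gives
\[
\textstyle\int (x^k)^2\,d\langle\psi|\sA_{n,t}(x)\psi\rangle = \tfrac{1}{2}\|N^k_{n,t}\psi\|^2 + \tfrac{1}{2}\sum_{a=0}^{3}\|N^k_{n,t}B_a\psi\|^2.
\]
Writing $N^k_{n,t} B_a\psi = B_a N^k_{n,t}\psi + [N^k_{n,t},B_a]\psi$ and expanding the square, the pure $B_a N^k_{n,t}\psi$ terms sum to $\|N^k_{n,t}\psi\|^2$ via $\sum_a B_a^2 = I$; the cross term vanishes by the identity established above; and what remains is the expectation of the momentum-space multiplier $\sum_a (\partial_{p_k} f_a)^2$, which by direct computation (or by differentiating $\sum f_a^2 = 1$ once more) equals $(E_n^2 - p_k^2)/E_n^4 = ((P_{n0})^2 - (P_{nk})^2)/(P_{n0})^4$. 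This produces (\ref{Xsquare}), from which (\ref{HPN}) follows by combining $(\Delta_\psi X^k_{n,t})^2 = (\Delta_\psi N^k_{n,t})^2 + \tfrac{1}{2}\langle\psi|((P_{n0})^2 - (P_{nk})^2)/(P_{n0})^4\,\psi\rangle$ with the standard Heisenberg bound of Proposition \ref{propX}(3) and elementary algebra with variances. The main technical point of the whole argument is the commutator identity $\sum_a B_a[N^k,B_a]=0$: pairing the multiplication operators $B_a$ with the non-commuting derivation $N^k$, it is ultimately a shadow of the mass-shell constraint, and is what makes the first moments of $\sA_{n,t}$ coincide with those of the NW PVM despite the two POVMs being genuinely distinct (as witnessed by the nonzero correction in (\ref{Xsquare})).
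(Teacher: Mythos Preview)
Your proof is correct and follows essentially the same route as the paper: both use the decomposition of $\sA_{n,t}$ in terms of $\sQ_{n,t}$ from Theorem \ref{TEO0}, pass to the momentum representation where $N^k_{n,t}$ acts as $i\partial_{p_k}$, and reduce the vanishing of the first-moment correction to the mass-shell identity (your $\partial_{p_k}\bigl(\sum_a f_a^2\bigr)=0$ is exactly the paper's observation that $\eta^{\mu\nu}P_{n\mu}P_{n\nu}+m^2 I=0$ on $\cS(\cH)$). Your uniform $B_a$ notation and the Leibniz-rule packaging $\sum_a f_a\partial_{p_k}f_a=\tfrac12\partial_{p_k}(1)$ make the bookkeeping somewhat cleaner than the paper's index manipulations, but the substance is identical; likewise for part (2), where your expansion of $\|N^k B_a\psi\|^2$ reproduces the paper's commutator computation with the same final multiplier $(E_n^2-p_k^2)/E_n^4$.
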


\begin{proof} 
It is clear that, if an operator $X_{n,t}^\mu$ exists that satisfies (\ref{chimoment}), then  it must be unique on its domain ${\cal S}({\cal H})$. That is because, by polarization any other 
operator $S: {\cal S}({\cal H}) \to {\cal H}$ that satisfies  that identity would have the same matrix elements
$\langle \psi'| S\psi\rangle = \langle \psi'| X_{n,t}^\mu \psi \rangle$
 when $\psi,\psi' \in {\cal S}({\cal H})$. Since this space is dense, we have $ S\psi= X_{n,t}^\mu \psi$. To conclude the proof of the initial statement in (1), it is therefore sufficient to show that (\ref{rest0}) is valid. Properties (a)-(e) are then obvious consequences of the analogs for $ N_{n,t}^\mu$ and of the fact that ${\cal S}({\cal H})$ is also invariant under $U$, $ N_{n,t}^\beta$, and $P_{n\alpha}$.  The proof of (\ref{rest0}), taking  (\ref{TPOVM2}) into account, just amounts to prove that 
 \beq\nonumber
  \eta^{\mu\nu} \int_{\Sigma_{t,n}} x^k d\left\langle \frac{P_{n\mu}}{H_n}\psi \left|  \sQ_{n,t}(x) \frac{P_{n\nu}}{H_n} \right.\psi \right\rangle  +  \int_{\Sigma_{t,n}} x^k d \left\langle \frac{m}{H_n}\psi \left| \sQ_{n,t}(x)  \frac{m}{H_n}\right. \psi \right\rangle =0\:,
\eeq
 if $\psi \in {\cal S}({\cal H})$ and $k=1,2,3$. The case $k=0$ is trivial since in that situation $x^0=t$ can be extracted by the two integrals and the identity boils down to the trivial one 
 $\langle \psi | (H_n^{-1}(P_{n}^\mu P_{n\mu} + m^2 I) \psi\rangle=0$. Regarding the cases $k=1,2,3$, taking advantage of the spectral decomposition of $ N^k_{n,t}$, the identity above can be re-written 
  \beq \nonumber
  \eta^{\mu\nu} \left\langle \frac{P_{n\mu}}{H_n}\psi \left|   N^k_{n,t} \frac{P_{n\nu}}{H_n} \right.\psi \right\rangle  + \left\langle \frac{m}{H_n}\psi \left|  N^{k}_{n,t}  \frac{m}{H_n}\right. \psi \right\rangle =0\:,
\eeq
where we have also used the fact that ${\cal S}({\cal H}) \subset D( N^k_{n,t})$ and the former space is invariant under the selfadjoint  bounded operators $H_n^{-1}$ and $H_n^{-1}P_{n\mu}$ as the reader immediately proves. The identity above can be re-arranged to the equivalent form (remember that $H_n= -P_{n0}$)
  $$
  \eta^{\mu\nu} \left\langle\psi \left|  \frac{P_{n\mu}}{H_n}  \frac{P_{n\nu}}{H_n}   N^k_{n,t} \right.\psi \right\rangle  + \left\langle\psi \left|   \frac{m}{H_n} \frac{m}{H_n}  N^{k}_{n,t}\right. \psi \right\rangle $$
   $$+    \eta^{\mu\nu} \left\langle\psi \left|  \frac{P_{n\mu}}{H_n} \left [ N^k_{n,t}, \frac{P_{n\nu}}{H_n}\right] \right.\psi \right\rangle  + \left\langle\psi \left|  \frac{m}{H_n} \left[ N^{k}_{n,t},  \frac{m}{H_n}\right]\right. \psi \right\rangle =0\:. $$
   Representing the identiy above in the Hilbert space $L^2(\bR^3, d^3p)$ where ${\cal S}({\cal H})$ is represented by $\cS(\bR^3)$ itself, $P_{n\mu}= p_\mu \cdot$, $H_n= E_n(p) \cdot$ are multiplicative  and, for $\psi \in \cS(\bR^3)$ we have $ N^{k}_{n,t}\psi = i\frac{\partial}{\partial p_k}\psi$, 
   we see that the two commutators are multiplicative operators as well. Therefore, for instance $ \frac{P_{n\mu}}{H_n} \left[ N^k_{n,t}, \frac{P_{n\nu}}{H_n}\right] = \frac{1}{2}
    \frac{P_{n\mu}}{H_n} \left[ N^k_{n,t}, \frac{P_{n\nu}}{H_n}\right] +  \frac{1}{2}
    \left[ N^k_{n,t}, \frac{P_{n\nu}}{H_n}\right]  \frac{P_{n\mu}}{H_n} =   \frac{1}{2} \left[ N^k_{n,t}, \frac{P^2_{n\nu}}{H^2_n}\right]$ and similarly for the other addends.
 In summary,  the  indentity we need to establish can be re-arranged  to
    $$
  \left\langle\psi \left|  \frac{ \eta^{\mu\nu} P_{n\mu} P_{n\nu} +m^2I}{H^2_n}   N^k_{n,t} \right.\psi \right\rangle  +  \frac{1}{2}  \left\langle\psi \left| \left[ N^k, \frac{ \eta^{\mu\nu} P_{n\mu} P_{n\nu} +m^2I}{H^2_n}\right] \right.\psi \right\rangle  =0\:. $$
  which is evidently true, because $ \eta^{\mu\nu} P_{n\mu} P_{n\nu} +m^2I=0$ on ${\cal S}({\cal H})$, and it complete the proof of  (1).\\
  Let us pass to (2) and we prove (\ref{Xsquare}). With the same procedure used to prove (1) and if $\psi \in {\cal S}({\cal H})$, we find through (\ref{TPOVM2})
  $$ \int_{\Sigma_{n,t}} (x^k)^2 d\langle \psi| \sA_{n,t}(x) \psi \rangle = \langle \psi| ( N_{n,t})^2 \psi\rangle$$
  $$+\frac{1}{2} \eta^{\mu\nu} \left\langle\psi \left|  \frac{P_{n\mu}}{H_n} ( N^k_{n,t})^2 \frac{P_{n\nu}}{H_n} \right.\psi \right\rangle  + \frac{1}{2}\left\langle\psi \left|  \frac{m}{H_n} ( N^{k}_{n,t})^2  \frac{m}{H_n}\right. \psi \right\rangle\:.$$
The second line can be re-arranged to 
 $$\frac{1}{2}\eta^{\mu\nu} \left\langle\psi \left|  \frac{P_{n\mu}}{H_n}\frac{P_{n\nu}}{H_n} ( N^k_{n,t})^2  \right.\psi \right\rangle  +\frac{1}{2}  \left\langle\psi \left|  \frac{m}{H_n}\frac{m}{H_n} ( N^{k}_{n,t})^2  \right. \psi \right\rangle$$
 $$+\frac{1}{2} \eta^{\mu\nu} \left\langle\psi \left|  \frac{P_{n\mu}}{H_n} \left[( N^k_{n,t})^2, \frac{P_{n\nu}}{H_n}\right]  \right.\psi \right\rangle  +\frac{1}{2} \left\langle\psi \left|\frac{m}{H_n} \left[( N^{k}_{n,t})^2,  \frac{m}{H_n}\right]  \right. \psi \right\rangle\:. $$
 The first line vanishes, while the second can be explicitly computed by working in the space $L^2(\bR^3, d^3p)$ exactly as we did for item (1) and it becomes
 $$ -\eta^{\mu\nu}\frac{1}{2} \left\langle\psi \left|  \frac{p_{\mu}}{p_0} \left[\left(\frac{\partial}{\partial p_k}\right)^2, \spa\frac{p_{\nu}}{p_0}\right]  \right.\psi \right\rangle  -\frac{1}{2} \left\langle\psi \left|\frac{m}{p_0} \left[\left(\frac{\partial}{\partial p_k}\right)^2, \spa \frac{m}{p_0}\right]  \right. \psi \right\rangle $$
$$=\left\langle\psi \left|  \frac{1}{2p^0} \left(\partial_{p_k}\frac{p^k}{p^0} \right) \right.\psi \right\rangle  = \left\langle \psi\left| \frac{H_n^2-P_k^2}{2H_n^4}\psi\right. \right\rangle \:,$$
 where $p_0= -\sqrt{m^2 + \sum_{k=1}^3 p_k^2}$ and the operators $p_\mu$ being multiplicative. The proof of (\ref{Xsquare}) is over. To prove (\ref{HPN}), observe that
 $$(\Delta_\psi X^k_{n,t})^2 =   \int_{\Sigma_{n,t}} (x^k)^2 d\langle \psi| \sA_{n,t}(x) \psi \rangle -  \left( \int_{\Sigma_{n,t}} x^k d\langle \psi| \sA_{n,t}(x) \psi \rangle\right)^2$$
 $$=  \langle \psi| ( N^\mu_{n,t})^2\psi\rangle + \left\langle \psi\left| \frac{H_n^2-P_k^2}{2H_n^4}\psi\right. \right\rangle -\langle \psi |  N^k_{n,t}\psi\rangle^2 = (\Delta_\psi  N^k_{n,t})^2 +
   \left\langle \psi\left|\frac{H_n^2-P_k^2}{2H_n^4}\psi\right. \right\rangle\:.
 $$ By multiplying both sides with $(\Delta_\psi P_k)^2$ and taking advantage of the standard Heisenberg inequality, we get  (\ref{HPN}).
\end{proof}

\begin{remark}
{\em \begin{itemize}
 \item[(1)] The first-moment operator can be formally written within the QFT setting of Sect. \ref{secINTROT},
 $$X^k_{n,0} = \frac{1}{\sqrt{H_n}}  P_1 \int_{\Sigma_{n,0}} x^k :\spa\hat{T}_{\mu\nu}\spa:\spa(x) n^\mu n^\nu\:   d \Sigma_{nt}(x) P_1 \frac{1}{\sqrt{H_n}}\:.$$
 The internal integral is nothing but the $k$-component of the  {\em  boost generator}  in QFT evaluated at $t=0$.  The position  operator obtained in that way coincides with the known {\em  Born-Infeld position operator} 
  as discussed in \cite{BB} and remarked in \cite{Terno}.
\item[(2)]   Item (2) is of mathematical interest. If the identity were $$ \int_{\Sigma_{nt}} (x^k)^2 d\langle \psi| \sA_{n,t}(x) \psi \rangle =  \langle \psi| (X^k_{n,t})^2\psi\rangle\:,$$ since $X^k_{n,t}$ is symmetric and (\ref{chimoment}) is true, one could apply a known theorem by Naimark about the decomposition of symmetric operators in terms of POVMs (see Theorem 23  in \cite{DM} and the discussion about it). On account of that theorem, the POVM that decomposes $X^k_{n,t}$ according to (\ref{chimoment}) would be uniquely determined by its first moment  $X^k_{n,t}$, provided this operator be maximally symmetric on its domain, and it is our  case  since $X^k_{n,t}$ is  essentially self adjoint. Along this argument one would conclude that $\sA_{nt}= \sQ_{nt}$, since the latter  POVM (actually a PVM) decomposes $\overline{X^k_{n,t}} = N^k_{n,t}$
(as in (\ref{chimoment}) on ${\cal S}({\cal H})$)   in view of the spectral theorem.  In summary, the cumbersome addend to the right-hand side of  (\ref{Xsquare}) is responsible for the   failure 
of $\sA_{nt}= \sQ_{nt}$.
\item[(3)] Given a pure state represented by a unit vector $\psi \in {\cal S}({\cal H})$,   also the {\em standard Heisenberg inequalities}
$$\Delta_\psi  N^k_{n,t} \Delta_\psi P_{nk} \geq \hbar/2\:,$$
 are valid for $ N_{n,t}^k$ and $P_{nk}$ in addition to (\ref{HPN}),  as a consequence of the canonical commutation relations (\ref{CCR}). The point is that these relations refer to the physically wrong probability distribution, the one  constructed out of  the Newton-Wigner PVM $\sQ_{n,t}$  instead of the Terno POVM  $\sA_{n,t}$.  \hfill $\blacksquare$
\end{itemize}}
\end{remark}

\section{Every $n\in  \sT_+$ defines a causal time evolution for $\sA$}
This section is devoted to prove that every $n\in  \sT_+$ defines a {\em causal time evolution} in Castrigiano's sense,  according to (a)  in Definition \ref{LCR},
 for every family $\mu^\psi$
constructed out of  the  POVMs $\sA$ and a pure state $\psi \in {\cal H}$: $\mu^{\psi}_{n,t}(\Delta) := \langle \psi |\sA_{n,t}(\Delta)\psi\rangle$.  \\

\begin{remark} {\em There are other notions of spatial localization which are causal with respect to time evolution. The  localization in terms of POVMs  due to  Petzold et al.\cite{A,B} and Henning, Wolf \cite{C} are causal with respect to time evolution. The proof in \cite{B} can be made rigorous by means  of the  mathematical approach  developed  in this section.} $\hfill \blacksquare$
\end{remark}

\subsection{The heuristic idea  of a conserved  probability four-current}  The technology I will  exploit to prove that $\sA_{n,t}$ produces a family of probability measures that satisfies the requirement (a) in Definition \ref{LCR} for every $n\in  \sT_+$  is based on a probability four-current associated to  $\langle \psi |\sA_{n,t}(\Delta)\psi\rangle$. As explicitely observed in \cite{Terno}, (I disregard here   a number of mathematical details which will be fixed later)
$$\int_{\Delta} d\langle \psi |\sA_{n,t}(x)\psi\rangle = \int_{\Delta} J^{\psi}_{n\mu}(x) n^\mu d\Sigma_{n,t}(x)\:,$$  where $J^{\psi}_{n}$
satisfies a conservation equation $\partial^\mu J^{\psi}_{n \mu} =0$.  
The existence of such four current of probability was postulated in the general case in \cite{Jancewicz} and see also  \cite{A,B,C,D} for the use of similar currents in relation to the causality problem for massive Klein Gordon particles. A similar current exists for Dirac and Weyl particles \cite{Castrigiano1,Castrigiano2}.
Assuming that  $J^{\psi}_{n}$ is causal,  the divergence theorem should imply the validity of the local-causality requirement  when restricting to the family of $t$-parametrized rest spaces of a {\em unique reference frame}. I will  prove that it is the case in full generality, referring to every Lebesgue set $\Delta$. The extension to the full family of reference frames, i.e., the proof of the validity of  (b) in Definition \ref{LCR}, is not so easy since $J^{\psi}_{n}(x)$ itself  {\em depends on $n$} and one has to compare $\int_{\Delta} J^{\psi}_{n\mu}(x) n^\mu d\Sigma_{n,t}(x)$ and $\int_{\Delta} J^{\psi}_{n'\mu}(x) {n'}^\mu d\Sigma_{n',t'}(x)$.

\subsection{The probability current and its flow}
The first step of the  proof consists of explicitly writing  down  the current $J^\psi_n$ \cite{Terno} for the special case $\psi \in {\cal S}({\cal H})$. 
As  usual, I represent events by means of four-vectors $\bM \ni e= o+ x(e)$ where $\psi\in \sV$.

Directly from (\ref{TPOVM}), one has that, if $n\in  \sT_+$, $t\in \bR$, $\Delta \in \cL(\Sigma_{n,t})$, $\psi\in {\cal S}({\cal H})$
\beq\label{48}
\langle \psi |A_{n,t}(\Delta)\psi\rangle = \int_{\Delta} T^{\psi}_{\mu\nu}(x)_n n^\mu n^\nu d \Sigma_{n,t}(x)\:,
\eeq
where I introduced the coordinate representation of the {\bf stress-enegy tensor} of $\Phi^\psi_n$, 
$$T^{\psi}_{\mu\nu}(x)_n := \frac{1}{2}\left(\partial_\mu \overline{\Phi^\psi_n(x)}\partial_\nu\Phi^\psi_n(x) 
 +\partial_\mu \Phi^\psi_n(x)\partial_\nu\overline{\Phi^\psi_n(x)}\right)$$ \beq - \frac{1}{2}\eta_{\mu\nu} \left( \partial^\alpha \overline{\Phi^\psi_n(x)} \partial_\alpha \Phi^\psi_n(x) + m^2 \overline{\Phi^\psi_n(x)} \Phi^\psi_n(x) \right)\:,\label{TTT}
\eeq
associated to the smooth complex Klein-Gordon field
\beq
\Phi^\psi_n(x) := 
\int_{\sV_{m,+}}  \frac{\psi(p)e^{i p\cdot x} }{(2\pi)^{3/2}\sqrt{E_n(p)}} d\mu_m(p)\:, \label{wavePHI}
\eeq
Notice the further  factor $E^{-1/2}_n(p)$ when comparing with (\ref{wave}) which arises from the analogous factors in the right-hand side of (\ref{Ternoinformal}). Let us fix a Minkowskian coordinate system $t=x^0,x^1,x^2,x^3$ comoving with some $n\in  \sT_+$. Since the factor of $e^{i p\cdot x} $ in the integrand stays in $\cS(\bR^3)$, the function $\bR^3 \ni \vec{x} \mapsto \Phi^\psi_n(t,\vec{x})$ belongs to $\cS(\bR^3)$ as well for every $t\in \bR$. \\

\begin{definition} If $\psi \in {\cal S}({\cal H})$, $||\psi||=1$ and $n\in  \sT_+$, the associated {\bf probability four-current} of $\sA$  is the contravariant vector field $J^\psi_n$ on $\bM$ written in coordinates reads
\beq
J^{\psi\mu}_{n}(x) := n^\nu T^{\psi\mu}_{\nu}(x)_n \:,\label{53}
\eeq
where $(T^\psi_{\nu\mu})_n$ is defined in (\ref{TTT}). \\
\end{definition}

It is evident that, if $\psi \in {\cal S}({\cal H})$, $n\in  \sT_+$,  $t\in \bR$, and $\Delta \in \cL(\Sigma_{n,t})$, (\ref{48}) yields
\beq
\langle \psi|A_{n,t}(\Delta) \psi\rangle = \int_{\Delta} J^\psi_{n\mu}(x) n^\mu d\Sigma_{n,t}(x) \:.
\eeq


\begin{proposition}\label{PROPBAST1} If $\psi \in {\cal S}({\cal H})$, $n\in  \sT_+$, then  $J^{\psi}_{n}$  is either  the zero vector or is causal and past-directed. More precisely:
\begin{itemize}
\item[(1)] there is an open dense set $\sO^\psi_{n}\subset \bM$ where  $J^\psi_{n}$ is timelike and past-directed; 
\item[(2)]  if $e\in \bM \setminus \sO^\psi_n$, then either $J^\psi_{n}(e)=0$ or $J^\psi_{n}(e)$ is lightlike and past-directed;
\item[(3)] it holds $\sO^\psi_n =\{e \in \bM \:|\; \Phi^\psi_n(e) \neq 0\}$.
\end{itemize}
\end{proposition}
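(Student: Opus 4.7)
The plan is to decompose $\Phi := \Phi^\psi_n$ into its real and imaginary parts $A := \mathrm{Re}\,\Phi$ and $B := \mathrm{Im}\,\Phi$. A direct check from (\ref{wavePHI}) shows $\Box\Phi = m^2\Phi$ (via the mass-shell relation $p\cdot p = -m^2$), so $A$ and $B$ are themselves real smooth Klein--Gordon solutions. Expanding the stress--energy tensor (\ref{TTT}) with $\Phi = A + iB$, the imaginary cross terms cancel under the symmetrization and inside the trace part, giving
\[
T^\psi_{\mu\nu}(x)_n = T^{(A)}_{\mu\nu} + T^{(B)}_{\mu\nu},\qquad T^{(A)}_{\mu\nu} := \partial_\mu A\, \partial_\nu A - \tfrac{1}{2}\eta_{\mu\nu}\bigl(\partial^\alpha A\, \partial_\alpha A + m^2 A^2\bigr),
\]
and analogously for $B$. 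Contracting with $n^\nu$ yields $J^\psi_n = J^{(A)} + J^{(B)}$, reducing the problem to the classical real-field current.

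For the real-field current, writing $V_\mu := \partial_\mu A$ and $p := V \cdot n$, a direct computation gives $J^{(A)\mu} = p V^\mu - \tfrac{1}{2}(V \cdot V + m^2 A^2)n^\mu$, so
\[
J^{(A)} \cdot J^{(A)} = -p^2\, m^2 A^2 - \tfrac{1}{4}(V \cdot V + m^2 A^2)^2 \leq 0,
\]
while in Minkowski coordinates comoving with $n$,
\[
J^{(A)} \cdot n = \tfrac{1}{2}\Bigl((\partial_0 A)^2 + \sum_{k=1}^3 (\partial_k A)^2 + m^2 A^2\Bigr) \geq 0.
\]
Since a nonzero causal vector cannot be orthogonal to the timelike $n$, these two inequalities together say $J^{(A)}$ is either zero or past-directed causal. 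A brief case analysis shows that $J^{(A)}$ is in fact strictly past-timelike whenever $A \neq 0$: if $J^{(A)}$ were null there, the vanishing of $J^{(A)}\cdot J^{(A)}$ would force $pA = 0$ and $V \cdot V = -m^2 A^2 < 0$, making $V$ timelike with $V \cdot n = 0$, which is impossible. The same conclusions hold for $J^{(B)}$. Summing, $J^\psi_n$ is always past-causal (or zero); and at any point where at least one of $A,B$ is nonzero, $J^\psi_n$ is past-timelike, since adding a past-timelike vector to a past-causal one stays past-timelike (immediate in comoving coordinates from the triangle inequality on the spatial components). This establishes (1) and (3) with $\sO^\psi_n := \{e \in \bM : \Phi^\psi_n(e) \neq 0\}$, openness being clear from continuity of $\Phi^\psi_n$.

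The remaining, and delicate, step is to show $\sO^\psi_n$ is dense in $\bM$ when $\psi \neq 0$, i.e., that $\{\Phi^\psi_n = 0\}$ has empty interior; here I would invoke the antilocality Theorem~\ref{teorem11}. Matching the Fourier integrals (\ref{wavePHI}) and (\ref{unit}) yields the identification $\Phi^\psi_n(t, \cdot) = (\overline{-\Delta + m^2})^{-1/2}\Psi_t$ as elements of $L^2(\bR^3, d^3x)$, so $\Phi^\psi_n(t, \cdot)$ is itself in $L^2$. Suppose $\Phi^\psi_n \equiv 0$ on some open nonempty $\Omega \subset \bM$ and pick a product neighborhood $(t_0 - \epsilon, t_0 + \epsilon) \times U \subset \Omega$. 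The positive-frequency form of (\ref{wavePHI}) yields the pointwise identity $\partial_t \Phi^\psi_n(t, \cdot) = -i(\overline{-\Delta + m^2})^{1/2}\Phi^\psi_n(t, \cdot)$, so at time $t_0$ both $\Phi^\psi_n(t_0, \cdot)$ and $(\overline{-\Delta + m^2})^{1/2}\Phi^\psi_n(t_0, \cdot)$ vanish on $U$. Theorem~\ref{teorem11} applied with $\alpha = 1/2 \notin \bZ$ then forces $\Phi^\psi_n(t_0, \cdot) \equiv 0$ on all of $\Sigma_{n,t_0}$, whence $\Psi_{t_0} = 0$ and $\psi = 0$, the desired contradiction.
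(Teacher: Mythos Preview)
Your proof is correct and follows essentially the same route as the paper: decompose $\Phi^\psi_n$ into real and imaginary parts, split $J^\psi_n$ as the sum of the two real-field currents, verify each is past-causal (and strictly timelike where the field is nonzero), and then invoke antilocality (Theorem~\ref{teorem11}) with $\alpha=1/2$ to obtain density of $\{\Phi^\psi_n\neq 0\}$. Your covariant computation $J^{(A)}\cdot J^{(A)}=-p^2m^2A^2-\tfrac14(V\cdot V+m^2A^2)^2$ is a tidier packaging of the paper's coordinate identity~(\ref{INEQ}), and your density argument (vanishing on a spacetime product neighborhood forces both $\Phi$ and $\partial_t\Phi$ to vanish on a spatial slice) is a slight streamlining of the paper's; item~(2) is not written out separately but follows at once from your ``$J^\psi_n$ is always past-causal or zero'' together with density and continuity.
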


\begin{proof} We need some preparatory identities and inequalities.  Consider a Minkowskian coordinate system co-moving with $n$, so that $n^\mu = \delta^\mu_0$ and, if $\Phi^\psi_n = A_1+iA_2$ with $A_i$ real,
One can write
$$J^\psi_{n\mu} = J^\psi_{1n\mu}+J^\psi_{2n\mu}$$ where, for $j=1,2$,
$$J^\psi_{jn0} =   \frac{1}{2}\left(\partial_0A_j\partial_0 A_j 
+ \sum_{k=1}^3\partial_kA_j\partial_k A_j + m^2 A_j^2\right)\:, \quad 
J^\psi_{jnh} = \partial_0 A_j\partial_h A_j \:, \quad h=1,2,3.$$
At this juncture observe that, for $j=1,2$,
$$-g(J^\psi_{jn}, J^\psi_{jn})= \frac{1}{4}\left((\partial_0A_j)^2 
+ \sum_{k=1}^3(\partial_kA_j)^2 + m^2 A_j^2\right)^2 - \sum_{k=1}^3(\partial_k A_j \partial_0 A_j)^2$$
\beq= \frac{1}{4}\left((\partial_0A_j)^2 
- \sum_{k=1}^3(\partial_kA_j)^2 \right)^2 + \frac{1}{4} m^4A_j^4 + \frac{1}{2} m^2 (\partial_0 A_j)^2 A_j^2 + \frac{1}{2} m^2A_j^2 \sum_{k=1}^3(\partial_kA_j)^2 \geq 0\:.\label{INEQ}\eeq
Let us pass to prove (1).  Define $\sO^{\psi}_n$  as the set of events where  $J^\psi_{n\mu}$ is timelike. 
 Let us prove that the set $\sO^{\psi}_n$ is dense and open and the vectors in it are past-directed.

(Dense.) It is clear from the found inequality that, in particular,  if $\Phi^\psi_n(e) \neq 0$ then $J^\psi_{n\mu} = J^\psi_{1n\mu}+J^\psi_{2n\mu}$ is timelike so that  $e\in \sO^{\psi}_n$. 
 If  $x\in \bM$ and $N \ni x$ is an open neighborhood of it, suppose that  there is no $e\in N$
 where  $\Phi^\psi_n(e)\neq 0$. In particular $\Phi^\psi_n(e)= 0$  in the open spatial set $\Sigma_{n,t(x)}\cap N$. As a consequence, the spatial derivatives of $\Phi^\psi_n$ also vanishes on  $\Sigma_{n,t(x)}\cap N$ and (\ref{INEQ}) produces $-g(J^\psi_{jn}, J^\psi_{jn})= \frac{1}{4}(\partial_t A_j(e) )^4$. If the right-hand side vanished for all $e\in \Sigma_{n,t(x)}\cap N$ and $j=1,2$,  we would have that $\Phi^\psi_n(t,\cdot)$ and $(\overline{-\Delta +m^2})^{1/2}\Phi^\psi_n(t,\cdot) = -i \partial_t \Phi^\psi_n(t,\cdot)=0$ on that open set in $\Sigma_{n,t(x)}$.
 On account of Theorem \ref{teorem11}, we would have  $\Phi^\psi_n(t,\cdot)=0$ and thus $\psi=0$ by inverting (\ref{wavePHI}) and this is not allowed by hypothesis. We conclude that either $\Phi^\psi_n(t,e) \neq 0$ for some $e\in \Sigma_{n,t(x)}\cap N$ or $\Phi^\psi_n(t,e) = 0$ for all $e\in \Sigma_{n,t(x)}\cap N$, but $\partial_t\Phi^\psi_n(t,e) \neq 0$ for some $e\in \Sigma_{n,t(x)}\cap N$. In both cases,  (\ref{INEQ}) implies that
  $J^\psi_{n}$ is timelike somewhere in the neighborhood  $N$ of $x$. 
 We have proved that the set  $\sO^\psi_n$ where  $J^\psi_n$ is timelike is dense.
 
(Open.)  $\sO^\psi_n$ is also the preimage of an open set (the open future cone) according to a continuous map and thus it is open as well. 

(Past directed.)  Since $n$ is future-directed and $J^\psi_{jn} \cdot n = J^\psi_{jn0}  \geq 0$, we also have that $J^\psi_n$ is past-directed when it does not vanish.\\
(2) Consider  $e\in \bM \setminus \sO^\psi_n$, namely $J^\psi_n(e)$ is not timelike. Since 
$J^\psi_{n} = J^\psi_{1n}+J^\psi_{2n}$ we have
$$g(J^\psi_{n},J^\psi_{n})= g(J^\psi_{1n},J^\psi_{1n}) + g(J^\psi_{2n},J^\psi_{2n}) + 2g(J^\psi_{1n},J^\psi_{2n})\:.$$
Notice that all scalar products taking place on the right-hand side  above  are non-positive: the first two because of  (\ref{INEQ}) and the last one because  
the two vectors are the limit of past directed timelike vectors for (1).
 Since the left-hand side is zero by hypothesis, we have the following two  possibilities.
 $J^\psi_n(e)$ vanishes (if both  $J^\psi_{1n}$ and  $J^\psi_{2n}$
vanish) or it is light like (if one of the two vanishes and the other is lightlike or if both are lightlike and parallel). In all these cases
both $A_1$ and $A_2$ vanish on account of (\ref{INEQ}) where $m>0$,  so that  $\Phi^\psi_n(e)=0$ as well.  
To conclude, observe that if $J^\psi_n$ is lightlike, then  it must be past-directed by continuity because $\sO^\psi_n$ is dense and the vectors in that set are past-directed. The proof of (3) has been given while establishing (1) and (2).
\end{proof}

\subsection{Every $n\in  \sT_+$ defines a causal time evolution for  $\sA$}

First of all, observe that  if $D\subset \Sigma_{n,t_1}$ is an open ball, then $J^\pm(D)$ are open as well as it arises per direct inspection. This immediately  implies that $J^\pm(\Delta_1)$ are open if $\Delta_1 \subset \Sigma_{n,t_1}$ 
is open  and non-empty. As a consequence, 
when $\Delta_1 \subset \Sigma_{n,t_1}$ is open,
the intersections $J^\pm(\Delta_1) \cap \Sigma_{n',t'}$ are open as well in the relative topology. I will use this fact several times in the rest of the paper.\\

\begin{lemma}\label{LEMMAA}  Consider the spatial localization observable $\sA$. Take $n\in  \sT_+$ and $t_1,t_2 \in \bR$ with $t_2 \neq t_1$. Let  $\Delta_1 \subset \Sigma_{n,t_1}$  be a finite  union of  non-empty open balls with finite radius, and let $\Delta_2 := (J^+(\Delta_1) \cup J^-(\Delta_1))\cap \Sigma_{n,t_2}$ be the  corresponding open  set in $\Sigma_{n,t_2}$. Then
\beq
\langle \psi | \sA_{n,t_1}(\Delta_1) \psi\rangle \leq  \langle \psi | \sA_{n,t_2}(\Delta_2) \psi\rangle \label{nonsharp}
\eeq 
is valid for every $\psi \in {\cal S}({\cal H})$ with $||\psi||=1$.
\end{lemma}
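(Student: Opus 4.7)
The plan is to derive the inequality from the divergence theorem applied to the conserved probability current $J^\psi_n$. For $\psi \in \cS(\cH)$ the complex Klein--Gordon field $\Phi^\psi_n$ defined in (\ref{wavePHI}) is smooth, has Schwartz spatial profile at every time (the multiplicative factor $E_n(p)^{-1/2}$ preserves the Schwartz class in $\vec p$), and satisfies $(\Box - m^2)\Phi^\psi_n = 0$ because every Fourier mode $e^{i p\cdot x}$ with $p\in\sV_{m,+}$ does. Direct differentiation of (\ref{TTT}) then gives the classical conservation law $\partial^\mu T^\psi_{\mu\nu} = 0$, and contracting with the constant vector $n^\nu$ yields $\partial_\mu J^{\psi\mu}_n = 0$. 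By Proposition \ref{PROPBAST1}, at every event $J^\psi_n$ is either zero or past-directed causal.

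Without loss of generality I take $t_2 > t_1$; the opposite case is handled by swapping the roles of $(t_1,\Delta_1)$ and $(t_2,\Delta_2)$, since the definition of $\Delta_2$ is symmetric under time reversal. Under this assumption $\Delta_2 = J^+(\Delta_1) \cap \Sigma_{n,t_2}$, and I consider the spacetime region
$$\cR := J^+(\Delta_1) \cap \{e \in \bM : t_1 \leq -n\cdot(e-o) \leq t_2\}.$$
Because $\Delta_1$ is a finite union of open balls of finite radius, $\cR$ has compact closure (it is contained in a finite union of truncated future light cones). The boundary of $\cR$ decomposes, up to a lower-dimensional singular set, into the spacelike faces $\Delta_1$ and $\Delta_2$ together with the null lateral face $\Gamma \subset \partial J^+(\Delta_1)$ ruled by the future-directed null generators emanating from $\partial \Delta_1$.

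Gauss's theorem applied to the smooth bounded vector field $J^{\psi\mu}_n$ on $\overline{\cR}$, combined with the conservation identity, then produces
$$\langle \psi | \sA_{n,t_2}(\Delta_2) \psi \rangle - \langle \psi | \sA_{n,t_1}(\Delta_1) \psi \rangle = \int_\Gamma J^\psi_{n\mu} \ell^\mu \, d\sigma,$$
where $\ell^\mu$ is the future-directed null generator of $\Gamma$ and $d\sigma \geq 0$ is the positive surface measure coming from parametrizing $\Gamma$ by (base point on $\partial \Delta_1$, affine parameter along the generator). The pointwise Minkowski pairing of a past-directed causal vector with a future-directed null vector is non-negative (immediate verification in a frame comoving with $n$, where $J^\psi_n$ has non-positive time component that dominates its spatial part in norm), so the right-hand side is non-negative and (\ref{nonsharp}) follows.

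The main obstacle is the careful application of Gauss's theorem on a Lorentzian domain whose null lateral face is only piecewise smooth: when the component balls of $\Delta_1$ overlap, or when their outward null cones intersect inside the slab $\{t_1 \leq t \leq t_2\}$, $\Gamma$ develops edges along which distinct families of generators meet. Such singular loci carry vanishing three-dimensional surface measure, and because $J^\psi_n$ is smooth and bounded on $\overline{\cR}$ they contribute nothing to the boundary integral; a rigorous treatment can either proceed by a direct coordinate verification using the explicit generator parametrization of $\Gamma$, or by slightly smoothing $\partial \Delta_1$ and passing to the limit. Once the orientation of the null co-normal on each smooth piece of $\Gamma$ is fixed so that $d\sigma$ is positive, the causal character of $J^\psi_n$ guaranteed by Proposition \ref{PROPBAST1} closes the argument.
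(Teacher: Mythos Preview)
Your proof is correct and follows essentially the same route as the paper: apply the divergence (Stokes--Poincar\'e) theorem to the conserved current $J^\psi_n$ on the slab between $\Delta_1$ and $\Delta_2$, and then use Proposition~\ref{PROPBAST1} to show the lateral null flux is non-negative. The only stylistic difference is that the paper first treats a single ball and computes the lateral integral explicitly in lightlike polar coordinates $(u,v,\theta,\phi)$, deducing $J^v\le 0$ from $J^uJ^v\ge 0$ and $J^u+J^v\le 0$, whereas you invoke directly the coordinate-free fact that the Minkowski pairing of a past-directed causal vector with a future-directed null vector is non-negative; both arguments are equivalent and the piecewise-smooth discussion for finite unions matches the paper's.
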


\begin{proof}
 As a  first case, we assume that  $\Delta_1 \subset \Sigma_{n,t_1}$ is an open ball of finite radius, so that  $\Delta_2$ in $\Sigma_{n,t_2}$ is an  analogous open set in  $\Sigma_{n,t_2}$. 
 Let us suppose $t_2>t_1$ (the other case is analogous) and  consider $B \subset \bM$ whose boundary is made of the two bases $\Delta_1$, $\Delta_2$, and the portion $L$ of $\partial J^+(\Delta_1)$ between them.  $B$ is a manifold with boundary and we can use the Stokes-Poincar\'e theorem for the 3-forms\footnote{One cannot take advantage of the vector field version of the theorem because the portion $L$ of the boundary has a  degenerated induced metric.}
$$\nu^\psi_n =  -\frac{1}{3!} \sqrt{-\det(g)} \epsilon_{\alpha\beta\gamma\delta} J^{\psi \delta}_n dx^\alpha \wedge dx^\beta \wedge dx^\gamma $$
 associated 
to the current  $J^\psi_n$ for the considered $\psi \in {\cal S}({\cal H})$. We have chosen a Minkowskian coordinate system $t=x^0,x^1,x^2,x^3$ comoving with $n$ to write down  the components of $\nu^\psi_n$ as above. With the choices above, the integral of the form on $\Delta_{t_2}$ gives $$\int_{\Delta_{2}}  \nu^\psi_n = \int_{\Delta_{2}} J^\psi_n \cdot n d\Sigma_{n,t_2}=  \langle \psi | \sA_{n,t_2}(\Delta_2) \psi\rangle\:.$$
 Since $J^\psi_n$ is conserved, the integral of $\nu^\psi_n$ on $B$ vanishes, so that,  
\beq  \langle \psi | \sA_{n,t_2}(\Delta_2) \psi\rangle -  \langle \psi | \sA_{n,t_1}(\Delta_2) \psi\rangle = \int_L \nu^\psi_n\:.\label{intinT}\eeq
To compute the integral we change coordinates and we pass to a system of lightlike and polar coordinates $u,v, \theta, \phi$
where $r, \theta,\phi$ are standard polar spherical coordinates in $\Sigma_{n,t_1}$ with  center given by the center of $\Delta_1$ and $u:= t+r$, $v:= t-r$ so that $u$ is a lightlike future increasing coordinate along $L$.  
With these coordinates,
$$g = -\frac{1}{2}du \otimes dv - \frac{1}{2} dv\otimes dv + \frac{1}{4}(u-v)^2 (\sin^2 \theta d\phi \otimes d\phi + d\theta \otimes d\theta)$$
and, writing $J$ for $J_n^\psi$,
\beq  \nu^\psi_n = -\frac{1}{2} (u-v)^2 \sin \theta  J^v du \wedge d\theta \wedge d\phi\:.  \label{NU}\eeq
Now, observe that, since $J^\psi_n$ is  past directed (if it does not vanish), we must have $2J^t = J^u+ J^v  \leq 0$. The condition that $J^\psi_n$ is zero or causal reads
$$-J^uJ^v + h(\vec{J}, \vec{J}) \leq 0\:,$$
where $h$ is the Euclidean metric on $\Sigma_{n,t}$ and $\vec{J}$ the spatial part of $J_n^\psi$. In summary, $J^uJ^v \geq 0$ and $J^u+J^v \leq 0$, so that  $J^v,J^u \leq 0$.
Since $\theta \in [0,\pi]$ in (\ref{NU}) and $v=0$ on $L$, we conclude that
\beq  \int_L \nu^\psi_n = -\int_L  \frac{1}{2}u^2 \sin \theta  J^v du \wedge d\theta \wedge d\phi \geq 0\:.\label{CONT}\eeq
Up to now we have established that 
\beq \langle \psi | \sA_{n,t_1}(\Delta_1) \psi\rangle \leq  \langle \psi | \sA_{n,t_2}(\Delta_2) \psi\rangle\:.\label{INEQJ}\eeq
 To conclude the proof it is sufficient to observe what follows  in the case  $\Delta_1$ is a finite union of finite-radius open balls $\Delta^{(j)}_{1}$, 
 $j=1,\ldots, N$. We can always assume that  no  ball of the family is a subset of another ball of the family.
  Since $N$ is finite,   the region of  $\partial J^+(\Delta_1)$ 
 between $t_1$ and $t_2$  is a piecewise smooth lightlike submanifold and we can apply the above reasoning by changing coordinates for every cone of the family. The integral over the surface $\partial J^+(\Delta_1)$ 
 between $t_1$ and $t_2$  is a finite sum of contributions of type (\ref{CONT}) where each integral is now performed on a smaller  portion of each conical surface. However each contribution is non-negative because the integrated function is non-negative.
\end{proof}

\begin{remark}{\em  Even if it is not strictly necessary for our final goal, I prove  that, if restricting to a suitable dense subspace of ${\cal S}({\cal H})$, the inequality in (\ref{nonsharp}) can be made sharp.
I  consider a subspace ${\cal D}({\cal H}) \subset {\cal S}({\cal H})$ of  vectors  $\psi \in \cal H$ such that  there is $n\in  \sT_+$ and  a Minkowski coordinate system co-moving  with $n$ such that 
$\bR^3 \ni \vec{p} \mapsto \psi(E_n(p), \vec{p}_n) \in {\cD}(\bR^3)$ (the test-function  space on $\bR^3$) when  represented in the spatial coordinates on $\bR^3$. 
 The definition of ${\cal D}({\cal H})$ does not depend of the choice of  $n$ and co-moving Minkowskian coordinates as  ${\cal D}({\cal H})$ is  invariant under the representation $U$ of $IO(1,3)_+$ in (\ref{ACT1}). Finally, ${\cal D}({\cal H}) \subset {\cal S}({\cal D})$  is  dense in ${\cal H}$.
 The proof of these   elementary facts  is analogous to the one of  ${\cal S}({\cal H})$ and it is left to the reader.
 
 Relying on the  the  well posedness of the {\em Characteristic Cauchy problem} on Lorentzian cones, the following precise result is valid.\\

\begin{proposition}\label{PROPSHARP} With the hypotheses of Lemma \ref{LEMMAA}, if $\psi \in {\cal D}({\cal H})$ with $||\psi||=1$, then  inequality (\ref{nonsharp}) holds in the sharpest form
\beq
\langle \psi | \sA_{n,t_1}(\Delta_1) \psi\rangle <  \langle \psi | \sA_{n,t_2}(\Delta_2) \psi\rangle \label{sharp}
\eeq 
\end{proposition}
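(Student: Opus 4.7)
I would argue by contradiction: suppose equality holds in (\ref{nonsharp}) for some $\psi\in{\cal D}({\cal H})$ with $\|\psi\|=1$. For simplicity I treat the case $\Delta_1=B_{r_0}(\vec{x}_0)$ of a single open ball; the general finite-union case reduces to this by the additive decomposition of (\ref{CONT}) into non-negative contributions coming from each smooth conical piece of $L$. Equality forces the non-negative continuous integrand $-\tfrac12 u^2\sin\theta\,J^v$ of (\ref{CONT}) to vanish identically on $L$, so $J^v\equiv 0$ on $L$.

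A direct computation of (\ref{TTT}) in the double-null coordinates $(u,v,\theta,\phi)$ of Lemma \ref{LEMMAA} gives $J^v=-2J_u$ with
\[
J_u \;=\; |\partial_u\Phi^\psi_n|^2 \;+\; \frac{1}{4r^2}\!\left(|\partial_\theta\Phi^\psi_n|^2 + \frac{|\partial_\phi\Phi^\psi_n|^2}{\sin^2\theta}\right) \;+\; \frac{m^2}{4}|\Phi^\psi_n|^2\,.
\]
Because $m>0$ every summand is non-negative, and $J_u|_L=0$ forces in particular $\Phi^\psi_n|_L\equiv 0$.

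The crucial step uses the extra regularity implied by $\psi\in{\cal D}({\cal H})$: a standard Paley-Wiener argument applied to (\ref{wavePHI}) shows that $\Phi^\psi_n$ extends to an entire function on $\bC^4$ and is in particular real-analytic on $\bM$. For each $\hat n\in S^2$ the affine null line $\ell_{\hat n}(t):=(t,\vec{x}_0+(r_0+t-t_1)\hat n)$ lies on $L$ for $t\in[t_1,t_2]$, so the real-analytic function $t\mapsto \Phi^\psi_n(\ell_{\hat n}(t))$ on $\bR$ vanishes on an interval and hence vanishes identically. Since every $\ell_{\hat n}$ passes through the single auxiliary event $\tilde p:=(t_1-r_0,\vec{x}_0)$ (the locus where its radial factor vanishes), as $\hat n$ varies over $S^2$ the family $\{\ell_{\hat n}\}$ sweeps out precisely the full past-plus-future null cone $C_{\tilde p}$ emanating from $\tilde p$. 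Hence $\Phi^\psi_n\equiv 0$ on $C_{\tilde p}$.

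To conclude I invoke the well-posedness of the characteristic Cauchy problem for the Klein-Gordon equation on a null cone (Friedlander, H\"ormander): a smooth KG solution with vanishing characteristic data on $C^+_{\tilde p}$ must vanish throughout the chronological future $I^+(\tilde p)$. Since $I^+(\tilde p)$ is a non-empty open subset of the connected manifold $\bM$, real-analyticity of $\Phi^\psi_n$ upgrades this to $\Phi^\psi_n\equiv 0$ on $\bM$; inverting (\ref{wavePHI}) then yields $\psi=0$, contradicting $\|\psi\|=1$. The main obstacle is the third step: by itself $\Phi^\psi_n|_L=0$ is not enough to trigger the classical characteristic Cauchy uniqueness because the lateral cone $L$ has no single vertex, and it is precisely the real-analyticity afforded by the compact-momentum-support assumption $\psi\in{\cal D}({\cal H})$ that lets one analytically pull back $L$ to the single-vertex cone $C^+_{\tilde p}$ where the classical uniqueness theorem applies.
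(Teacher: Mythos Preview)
Your argument is correct and follows the same overall architecture as the paper's proof: assume equality, deduce $J^v\equiv 0$ on $L$ from the non-negativity of the integrand in (\ref{CONT}), conclude $\Phi^\psi_n|_L=0$, extend this vanishing by analyticity to the full future null cone with vertex $\tilde p=(t_1-r_0,\vec x_0)$, invoke well-posedness of the characteristic Cauchy problem to get $\Phi^\psi_n=0$ in the solid cone, and derive the contradiction $\psi=0$.

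The execution differs in two places, and in both your choice is a clean alternative. First, to get $\Phi^\psi_n|_L=0$ you compute $J_u=|\partial_u\Phi|^2+\tfrac{1}{4r^2}(|\partial_\theta\Phi|^2+\sin^{-2}\theta|\partial_\phi\Phi|^2)+\tfrac{m^2}{4}|\Phi|^2$ directly, whereas the paper simply invokes Proposition \ref{PROPBAST1}(3). Second, and more substantially, the paper establishes analyticity only along the cone, separately in $u$, $\theta$, $\phi$, and then at the end uses the antilocality Theorem \ref{teorem11} (both $\Phi^\psi_n$ and $\partial_t\Phi^\psi_n=-i(\overline{-\Delta+m^2})^{1/2}\Phi^\psi_n$ vanish on the open ball $\Delta_1$) to force $\psi=0$. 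You instead invoke Paley--Wiener once to get global real-analyticity of $\Phi^\psi_n$ on $\bM$, which lets you (i) propagate $\Phi^\psi_n|_L=0$ along each null generator $\ell_{\hat n}$ to the full cone $C_{\tilde p}$ and (ii) upgrade $\Phi^\psi_n\equiv 0$ on the open set $I^+(\tilde p)$ to $\Phi^\psi_n\equiv 0$ on $\bM$ without ever calling on antilocality. This is slightly more elementary. For the finite-union case your reduction is a bit terse: since the balls may overlap, $L$ contains only an open portion of some $\partial J^+(\Delta_1^{(j)})$, but your global real-analyticity again suffices (restrict $\Phi^\psi_n$ to the real-analytic hypersurface $\Gamma_j$ and use that a real-analytic function vanishing on an open subset of a connected manifold vanishes identically), which is exactly what the paper does via analyticity in $(u,\theta,\phi)$.
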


\begin{proof}  See Appendix \ref{APPPROOFS}.
\end{proof}  \hfill $\blacksquare$}
\end{remark}

\noindent I come back to the main stream of the  reasoning with a second lemma.\\

\begin{lemma} \label{LEMMAB} Consider the spatial localization observable $\sA$. Take $n\in  \sT_+$ and $t_1,t_2 \in \bR$ with $t_2 \neq t_1$. Let  $\Delta_1 \subset \Sigma_{n,t_1}$ be an non-empty open set (respectively a compact set), and let $\Delta_2 := (J^+(\Delta_1) \cup J^-(\Delta_1))\cap \Sigma_{n,t_2}$ be the  corresponding open (resp. compact)  set in $\Sigma_{n,t_2}$. Then
\beq
\langle \psi | \sA_{n,t_1}(\Delta_1) \psi\rangle \leq  \langle \psi | \sA_{n,t_2}(\Delta_2) \psi\rangle
\eeq 
is valid for every $\psi \in {\cal S}({\cal H})$ with $||\psi||=1$.
\end{lemma}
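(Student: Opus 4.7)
The plan is to reduce Lemma \ref{LEMMAB} to Lemma \ref{LEMMAA} by standard measure-theoretic approximation, using that $\Delta\mapsto \mu^\psi_{n,t}(\Delta):=\langle\psi|\sA_{n,t}(\Delta)\psi\rangle$ is a finite positive $\sigma$-additive measure on $\cL(\Sigma_{n,t})$ for every unit vector $\psi$, as an immediate consequence of the weak $\sigma$-additivity of the POVM $\sA_{n,t}$. In particular the standard continuity-from-below and continuity-from-above results for finite measures are available.

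\textbf{Open case.} First I would write a non-empty open $\Delta_1\subset\Sigma_{n,t_1}$ as an increasing union $\Delta_1=\bigcup_{k\in\bN}\Delta_1^{(k)}$ with each $\Delta_1^{(k)}$ a finite union of open balls of finite radius: pick an enumeration $\{B_j\}_{j\in\bN}$ of the open balls contained in $\Delta_1$ with rational centre and rational radius, and set $\Delta_1^{(k)}:=\bigcup_{j\le k}B_j$. By monotonicity of $J^\pm$ and the identity $J^\pm\!\left(\bigcup_\alpha S_\alpha\right)=\bigcup_\alpha J^\pm(S_\alpha)$ recorded in Section~2, the sets $\Delta_2^{(k)}:=(J^+(\Delta_1^{(k)})\cup J^-(\Delta_1^{(k)}))\cap\Sigma_{n,t_2}$ form an increasing sequence whose union is $\Delta_2$. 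Lemma \ref{LEMMAA} yields $\mu^\psi_{n,t_1}(\Delta_1^{(k)})\le\mu^\psi_{n,t_2}(\Delta_2^{(k)})$ for every $k$, and continuity from below of the two finite measures gives the inequality for $\Delta_1,\Delta_2$.

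\textbf{Compact case.} For $\Delta_1\subset\Sigma_{n,t_1}$ compact I would proceed dually. In Minkowski coordinates co-moving with $n$, a direct inspection shows that $\Delta_2$ equals the closed Euclidean $|t_2-t_1|$-neighbourhood of $\Delta_1$, hence is compact. Take the open $1/k$-neighbourhoods $V^{(k)}\supset\Delta_1$, which satisfy $V^{(k)}\searrow\Delta_1$ and also $\overline{V^{(k)}}\searrow\Delta_1$. The sets $\Delta_2^{(k)}:=(J^+(V^{(k)})\cup J^-(V^{(k)}))\cap\Sigma_{n,t_2}$ are open, decrease with $k$, and each contains $\Delta_2$; the already-established open case supplies $\mu^\psi_{n,t_1}(V^{(k)})\le\mu^\psi_{n,t_2}(\Delta_2^{(k)})$.

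The main obstacle is to verify $\bigcap_k\Delta_2^{(k)}=\Delta_2$ (the corresponding intersection $\bigcap_k V^{(k)}=\Delta_1$ is trivial from the choice of $V^{(k)}$). Given $y\in\bigcap_k\Delta_2^{(k)}$, for each $k$ choose $x^{(k)}\in V^{(k)}$ with $y-x^{(k)}\in\overline{\sV_+}\cup\overline{\sV_-}$; boundedness of $V^{(1)}$ produces a subsequence $x^{(k_j)}\to x^\star$, and the nesting property (eventually $x^{(k_j)}\in V^{(k_0)}$ for any fixed $k_0$) forces $x^\star\in\bigcap_k\overline{V^{(k)}}=\Delta_1$, while closedness of $\overline{\sV_\pm}$ gives $y-x^\star\in\overline{\sV_+}\cup\overline{\sV_-}$, i.e.\ $y\in\Delta_2$. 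Once $\Delta_2^{(k)}\searrow\Delta_2$ is in hand, continuity from above for the finite measures $\mu^\psi_{n,t_1}$ and $\mu^\psi_{n,t_2}$ yields $\mu^\psi_{n,t_1}(\Delta_1)\le\mu^\psi_{n,t_2}(\Delta_2)$ and concludes the proof.
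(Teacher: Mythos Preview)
Your proof is correct, and your open case is actually cleaner than the paper's. The paper invokes a Vitali-type covering theorem (Theorem~1.26 of Evans--Gariepy) to fill $\Delta_1$ up to Lebesgue-null remainder by countably many \emph{disjoint} balls, and then uses absolute continuity of $\mu^\psi_{n,t_1}$ with respect to Lebesgue measure to discard that remainder. Your exhaustion $\Delta_1=\bigcup_k\Delta_1^{(k)}$ by finite unions of rational balls bypasses both ingredients: Lemma~\ref{LEMMAA} does not require the balls to be disjoint, and continuity from below handles the limit directly. What the paper's route buys is a decomposition into disjoint balls, but that extra structure is never used downstream.

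For the compact case both arguments follow the same template---squeeze $\Delta_1$ by shrinking open neighbourhoods and identify $\bigcap_k\Delta_2^{(k)}$ with $\Delta_2$---differing only in how the nontrivial inclusion $\bigcap_k\Delta_2^{(k)}\subset\Delta_2$ is verified. The paper uses a metric estimate ($\mbox{dist}(e,\Delta_2)<\delta_j\to 0$, then closedness of $\Delta_2$), while you use sequential compactness and closedness of $\overline{\sV_+}\cup\overline{\sV_-}$; both are perfectly valid.
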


\begin{proof} We always assume $t_2> t_1$, since the other case has a similar proof.  First of all, we already know  that if $\Delta_1$ is open then $\Delta_2$ is open as well. The case of $\Delta_1$ compact is a subcase of a known fact valid in globally hyperbolic spacetimes (like $\bM$): if $K$ is compact, the intersection of $J^+(K)$ and a spacelike  Cauchy surface (like $\Sigma_{n,t_2}$)  is compact as well. \\
Let us first examine the case of $\Delta_1\subset \Sigma_{n,t_1}$ open. According to Theorem  1.26 in \cite{EG},  for every $\delta>0$, there exist a countable collection $\{\Gamma_j\}_{j=1,2,\ldots }$  of disjoint (non-empty) closed balls $\Gamma_j \subset \Delta_1$ with diameter less than  $\delta$, such that \beq\int_{\Delta_1 \setminus \bigcup_{j\in \bN} \Gamma_j} 1\: d\Sigma_{n,1} =0 \label{GAMMA}\eeq
 where we remind the reader that $d\Sigma_{n,1}$ is the Lebesgue measure when written in the spatial Minkowskian coordinates comoving with $n$. Evidently we can assume that the balls are open (and their closures are disjoint) since $\partial \Gamma_j$ has zero Lebesgue measure.
 Let us define $\Delta_1':=  \bigcup_{j\in \bN} \Gamma_j$ 
 and $\Delta'_2 := \Sigma_{n,t_2} \cap J^+(\Delta'_1)$.
 Since the probability  measure defined by $\sA_{n,t_1}$  and $\psi$ is per definition absolutely continuous with respect to the Lebesgue measure,
 (\ref{GAMMA}) yields $\langle \psi| \sA_{n,t_1}(\Delta'_1)\psi \rangle = \langle \psi| \sA_{n,t_1}(\Delta_1)\psi \rangle \in [0,+\infty]$. Furthermore, since $\Delta'_1 \subset \Delta_1$,  it must be  $\Delta_2' \subset \Delta_2$ and thus $\langle \psi| \sA_{n,t_2}(\Delta'_2)\psi \rangle \leq
\langle \psi| \sA_{n,t_2}(\Delta_2)\psi \rangle$. In summary, to prove the thesis, it is sufficient to establish that
$\langle \psi| \sA_{n,t_1}(\Delta'_1)\psi \rangle \leq
\langle \psi| \sA_{n,t_2}(\Delta'_2)\psi \rangle$.
Let us define $\Delta_1^N := \cup_{j=1}^N \Gamma_j$ and  $\Delta_2^N := J^{+}(\Delta_1^N) \cap \Sigma_{n,t_2}$. By additivity and taking Lemma \ref{LEMMAA} into account, $$\langle \psi| \sA_{n,t_1}(\Delta'_1)\psi \rangle = \lim_{N\to +\infty} \langle \psi| \sA_{n,t_1}(\Delta_1^N)\psi \rangle \leq \lim_{N\to +\infty}  \langle \psi| \sA_{n,t_2}(\Delta_2^N)\psi\rangle \leq  \langle \psi| \sA_{n,t_2}(\Delta'_2)\psi\rangle\:. $$ Notice that the limit of the right-most side exists because the sequence is non-decreasing as $\Delta_2^N \subset \Delta_2^{N+1} \subset \Delta_2'$ by construction.\\
Let us pass to prove the thesis for $\Delta_1$ compact. Since $\Sigma_{n,t_1}$ is a metric space and $\Delta_1$ compact, it is not difficult to construct a sequence 
of open sets $A_1 \supset A_2 \supset \cdots \supset \Delta_1$ such that 
 $$\Delta_1 = \bigcap_{j=1,2,\ldots} A_j\:.$$
Each  $A_j$ is the union of a finite (but arbitrarily large)  number of balls centered on some points of $\Delta_1$ with radius less  than  $\delta_j \to 0^+$.
 As a consequence  $$\Delta_2 = \left(\bigcap_{j=1,2, \ldots} J^+(A_j)\right) \cap \Sigma_{n,t_2}\:. $$
The inclusion $\subset$ immediately arises from the definitions, the other inclusion is less trivial. Let us prove it.
If $e$ belongs to the right-hand side of the identity above and, as said,  $A_j$ is the  finite union of balls of radius $\delta_j>0$ centered on some points of $\Delta_1$, we have that\footnote{This is valid if $\Sigma_{n,t_1}$ and $\Sigma_{n',t_2}$ are parallel as it is since we are assuming $n=n'$. However a similar argument is valid if $n\neq n'$, finding  $\mbox{dist}(e,  J^+(\Delta_1) \cap \Sigma_{n',t_2}) <\epsilon \delta_j$
 for some $\epsilon>0$ independent of $j$.} $\mbox{dist}(e, J^+(\Delta_1) \cap \Sigma_{n,t_2}) < \delta_j$  for every $\delta_j \to 0^+$. As a consequence $e$ is an accumulation point of $\Delta_2 = (J^+(\Delta_1) \cap \Sigma_{n,t_2}) \cap \Sigma_{n,t_2}$ which is compact, thus closed (the space being Hausdorff). Hence $e \in \Delta_2$. 
 Finally, taking advantage of the already proved result on open sets and internal continuity
$$\langle \psi | \sA_{n,t_2}(\Delta_2) \psi\rangle  = \inf_j \langle \psi | \sA_{n,t_2}(J^+(A_j) \cap \Sigma_{n,t_2} ) \psi\rangle
 \geq  \inf_j \langle \psi | \sA_{n,t_1}(A_j) \psi\rangle = \langle \psi | \sA_{n,t_1}(\Delta_1) \psi\rangle\:.$$
\end{proof} 

I am now in a position to  prove the main result of this section, that every $n\in  \sT_+$ defines a causal time evolution (according to (a) in  Definition \ref{LCR})  for every spatial localization probability measure  constructed out of the Terno POVM $\sA$ and every pure state $\psi \in {\cal H}$.\\

\begin{theorem}\label{TEO2}  Consider the spatial localization observable $\sA$. Take $n\in  \sT_+$ and $t_1,t_2 \in \bR$. Let  $\Delta_1 \subset \Sigma_{n,t_1}$ be a Lebesgue  set  and let $\Delta_2 := (J^+(\Delta_1) \cup J^-(\Delta_1))\cap \Sigma_{n,t_2}$ be the  corresponding   set in $\Sigma_{n,t_2}$. Then
\beq
\langle \psi | \sA_{n,t_1}(\Delta_1) \psi\rangle \leq  \langle \psi | \sA_{n,t_2}(\Delta_2) \psi\rangle\:, \quad \mbox{ $\forall \psi \in {\cal H}$ with $||\psi||=1$.}
\eeq  In other words, every $n\in  \sT_+$ defines a causal time evolution according to   (a) in  Definition \ref{LCR} for the family of spatial localization probability measures  $\mu^\psi(\cdot) := \langle \psi | \sA(\cdot) \psi\rangle$.
\end{theorem}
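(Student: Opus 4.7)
The plan is to extend Lemma~\ref{LEMMAB} in two independent steps: first from $\psi \in {\cal S}({\cal H})$ to arbitrary $\psi \in {\cal H}$, and then from open or compact $\Delta_1$ to general Lebesgue $\Delta_1$. Both extensions are standard approximation arguments; the first rests on density of ${\cal S}({\cal H})$ in ${\cal H}$, the second on inner regularity of the finite Borel probability measures induced by $\psi$.

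For the first step, fix an open or compact $\Delta_1$ with its associated $\Delta_2$. Since $\sA_{n,t_1}(\Delta_1)$ and $\sA_{n,t_2}(\Delta_2)$ are bounded operators (effects), the maps $\psi \mapsto \langle \psi | \sA_{n,t_i}(\Delta_i) \psi \rangle$ are norm-continuous on ${\cal H}$. Combined with density of ${\cal S}({\cal H})$ in ${\cal H}$ (Proposition~\ref{PROPS}), the inequality from Lemma~\ref{LEMMAB} passes to the limit $\psi_j \to \psi$ after normalisation, yielding the bound for all unit $\psi \in {\cal H}$ whenever $\Delta_1$ is open or compact.

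For the second step, fix any unit $\psi \in {\cal H}$ and consider the finite Borel probability measures $\mu^\psi_{n,t_i}(\cdot) := \langle \psi | \sA_{n,t_i}(\cdot) \psi \rangle$ on $\Sigma_{n,t_i} \cong \bR^3$. Each is Radon, hence inner-regular by compact sets. For a bounded Lebesgue $\Delta_1$, I would select compact $K_j \subset \Delta_1$ with $\mu^\psi_{n,t_1}(K_j) \to \mu^\psi_{n,t_1}(\Delta_1)$; each set $K_j' := (J^+(K_j) \cup J^-(K_j)) \cap \Sigma_{n,t_2}$ is compact (since $\Sigma_{n,t_2}$ is a Cauchy surface in the globally hyperbolic $\bM$) and satisfies $K_j' \subset \Delta_2$. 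Applying the compact case of the first step,
\[ \mu^\psi_{n,t_1}(K_j) \leq \mu^\psi_{n,t_2}(K_j') \leq \mu^\psi_{n,t_2}(\Delta_2), \]
and $j \to \infty$ yields the inequality for bounded $\Delta_1$. For unbounded Lebesgue $\Delta_1$, I would exhaust $\Delta_1$ with bounded pieces $\Delta_1 \cap B_k$ ($B_k$ an increasing family of balls filling $\Sigma_{n,t_1}$), note that the corresponding $\Delta_2^{(k)} := (J^+(\Delta_1\cap B_k) \cup J^-(\Delta_1\cap B_k)) \cap \Sigma_{n,t_2}$ exhaust $\Delta_2$, and conclude via $\sigma$-additivity of both measures.

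The main technical subtlety is the Lebesgue measurability of $\Delta_2$ itself: identifying the two rest spaces with $\bR^3$ via co-moving coordinates, $\Delta_2$ is the Minkowski sum $\Delta_1 + \overline{B_{|t_2-t_1|}}$, which for a general Lebesgue $\Delta_1$ need not be Lebesgue measurable (Minkowski sums of null sets with compacts can be arbitrarily bad). The theorem's formulation tacitly assumes $\Delta_2 \in \cL(\Sigma_{n,t_2})$ so that $\langle \psi | \sA_{n,t_2}(\Delta_2) \psi \rangle$ is well-defined; with this proviso the argument closes. Alternatively one may pass to the outer measure $\mu_2^*(\Delta_2) := \inf\{\mu^\psi_{n,t_2}(O) : O \supset \Delta_2,\ O \text{ open}\}$, for which the bound $\mu^\psi_{n,t_2}(K_j') \leq \mu_2^*(\Delta_2)$ is automatic from $K_j' \subset \Delta_2$, and this is all that is needed to close the estimate.
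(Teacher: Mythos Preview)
Your proof is correct and follows essentially the same strategy as the paper: inner regularity by compacts (via Lemma~\ref{LEMMAB}) combined with density of ${\cal S}({\cal H})$ in ${\cal H}$. The paper performs these two steps in the opposite order---first passing from compact to general Lebesgue $\Delta_1$ for $\psi\in{\cal S}({\cal H})$, then extending to all $\psi\in{\cal H}$ by density---and it spends a little more effort justifying that the probability measure $\mu^\psi_{n,t}$ is inner regular on the full Lebesgue $\sigma$-algebra (not just on Borel sets), via completion and absolute continuity with respect to Lebesgue measure; you should make this point explicit rather than simply asserting ``Radon, hence inner-regular''. Your bounded/unbounded split is harmless but unnecessary, since the measures are finite and inner regularity by compacts already applies to arbitrary Lebesgue sets. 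Finally, your observation about the Lebesgue measurability of $\Delta_2=\Delta_1+\overline{B_{|t_2-t_1|}}$ is well taken: the paper tacitly assumes this (as does Definition~\ref{LCR}(a)), and your outer-measure workaround is a clean way to close the argument without that assumption.
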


\begin{proof}  First of all,  notice that $\mu^{\psi}_{n,t}(\cdot) := \langle \psi| \sA_{n,t}(\cdot) \psi \rangle$, for $\psi \in {\cal S}({\cal H})$ is necessarily regular when restricted to  $\cB(\Sigma_{n,t})$, since $\Sigma_{n,t}$  is countable union of compacts with finite measure (Theorem 2.18 in \cite{Rudin}). As a consequence the completion
$\overline{\mu^{\psi}_{n,t}|_{\cB(\Sigma_{n,t})}}$ of $\mu^{\psi}_{n,t}|_{\cB(\Sigma_{nt})}$ 
 is regular as well (Prop. 1.59 in \cite{Cohn}). The  $\sigma$-algebra of the regular complete measure $\overline{\mu^{\psi}_{n,t}|_{\cB(\Sigma_{nt})}}$  includes the Lebesgue $\sigma$-algebra in particular, and the completion $\overline{\mu^{\psi}_{n,t}|_{\cB(\Sigma_{nt})}}$  restricted to 
 $\cL(\Sigma_{n,t})$ coincides 
  to $\mu^{\psi}_{n,t}$ itself. This can be seen as follows. The   $\sigma$-algebra of a completion $\overline{\mu}$ --  where $\mu : \cS(X) \to [0,+\infty]$ is a positive $\sigma$-additive measure -- can be constructed as the family of sets $E\cup Z$ where $E\in \cS(X)$ and $Z\subset F \in \cS(X)$ with $\mu(F)=0$. Obviously $\overline{\mu}(E\cup Z) := \mu(E)$.
From these properties we can write, $\overline{\mu^{\psi}_{n,t}|_{\cB(\Sigma_{nt})}}(G)=\mu^{\psi}_{n,t}(G)$  if $G\subset \cL(\Sigma_{n,t})$ since $G= E\cup Z$ where
 $E\in \cB(\Sigma_{n,t})$ and $Z \subset F \in \cB(\Sigma_{n,t})$ such that $F$ has zero Lebesgue measure and thus $\mu^{\psi}_{n,t}(F)=0$ 
because  $\mu^{\psi}_{n,t}$ is absolutely continuous with respect to the Lebesgue measure.
   We conclude that $\mu^{\psi}_{n,t}$ is regular on the Lebesgue $\sigma$-algebra because it is the restriction of a regular measure. In particular it is inner regular.
So, if $\Delta_1$ is Lebesgue-measurable,   for $\psi \in {\cal S}({\cal H})$ we can take advantage of Lemma \ref{LEMMAB} proving that 
$$\langle \psi | \sA_{n,t_1}(\Delta_1) \psi\rangle  = \sup\{\langle \psi | \sA_{n,t_1}(K) \psi\rangle \:|\: K\subset \Delta_1\:, \mbox{$K$ compact} \}$$
$$\leq  \sup\{\langle \psi | \sA_{n,t_2}(J^+(K) \cap \Sigma_{n,2}) \psi\rangle \:|\: K\subset \Delta_1\:, \mbox{$K$ compact} \} \leq \langle \psi | \sA_{n,t_1}(\Delta_2) \psi\rangle \:,$$
where we have also used the fact that $J^+(K) \cap \Sigma_{n,2} \subset J^+(\Delta_1) \cap \Sigma_{n, 2} = \Delta_2$.\\
The thesis is therefore true if $\psi \in {\cal S}({\cal H})$ with $||\psi||=1$. Evidently the last requirement can be dropped by bi-linearity of the scalar product. Since ${\cal S}({\cal H})$
is dense in ${\cal H}$ and the scalar product is continuous, the result extends to the whole Hilbert space and the proof is over.
\end{proof}

\begin{corollary} There is no state $\psi \in {\cal H}$ that  satisfies the hypotheses of the Hegerfeldt theorem (Theorem \ref{HGT}) for any  family of bounded balls in the rest space of any arbitrarily fixed $n\in  \sT_+ $.
\end{corollary}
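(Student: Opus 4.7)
The plan is to argue by contradiction, combining the Hegerfeldt theorem (Theorem \ref{HGT}) as a ``contrapositive engine'' with the causal time evolution established in Theorem \ref{TEO2}.

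Suppose, contrary to the claim, that there exist $n_e \in \sT_+$, $t_e \in \bR$, a point $e \in \Sigma_{n_e,t_e}$, constants $K_1 > 0$ and $K_2 \geq 2m$, and a unit vector $\psi \in \cH$ such that
$$\langle \psi | \sA_{n_e,t_e}(\Sigma_{n_e,t_e} \setminus B_r(e)) \psi \rangle \leq K_1 e^{-K_2 r} \qquad \forall r > 0.$$
Then the hypotheses of the Hegerfeldt theorem are satisfied by the family of spatial localization probability measures $\mu^\psi_{n,t}(\Delta) := \langle \psi | \sA_{n,t}(\Delta) \psi \rangle$ associated to Terno's observable and the state $\psi$. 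The conclusion of Theorem \ref{HGT} is that $n_e$ does not define a causal time evolution for $\mu^\psi$ in the sense of condition (a) of Definition \ref{LCR}.

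However, Theorem \ref{TEO2} asserts that every $n \in \sT_+$ does define a causal time evolution for the family $\mu^\psi$ built from $\sA$ and any unit vector $\psi \in \cH$; in particular this applies to the $n_e$ above. This contradicts the conclusion just extracted from Hegerfeldt's theorem, so no such state $\psi$ can exist. The only step that requires any care is noting that Hegerfeldt's hypothesis and the conclusion of Theorem \ref{TEO2} are formulated in the same language (probabilities associated to Lebesgue sets of $\Sigma_{n,t}$ through the POVM $\sA_{n,t}$), so the two statements can be directly juxtaposed; this is immediate from the setup of Definition \ref{LCR} and the remark that $\Sigma_{n_e,t_e} \setminus B_r(e)$ is a Lebesgue (in fact Borel) set. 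There is no genuine obstacle: the corollary is essentially the logical contrapositive of Hegerfeldt's theorem applied to the Terno observable, made available by the causal time evolution result of Theorem \ref{TEO2}.
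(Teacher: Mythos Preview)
Your proof is correct and follows exactly the same approach as the paper: the paper's proof consists of the single sentence ``The thesis of Hegerfeldt's theorem is incompatible with the result of the previous theorem,'' and your argument is simply a careful unpacking of that contrapositive reasoning.
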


\begin{proof} The thesis of Hegerfeldt's theorem is incompatible with the result of the previous theorem. 
\end{proof}

\section{Subtleties with the notion of position and  Castrigiano's causality requirement}
There is a crucial feature of the notion of spatial position by Terno: it uses a four current of probability that, in spite of being a four-vector,  depends on the reference frame $n$  as it is evident in (\ref{53}) when $\psi \in {\cal S}({\cal H})$. That is an unavoidable fact since the notion of energy-momentum current  has the same type of dependence: $J^\nu_n = n^\mu {T_{\mu}}^\nu$. This feature leads to a more articulated picture where one can define the probability to find a particle in $\Delta \subset \Sigma_{n',t'}$ still referring to the current associated to $n\neq n'$!  That is permitted because 
$$J^{\psi\mu}_{n}(x)  n'_\mu \geq 0$$
in view of Proposition \ref{PROPBAST1}, when $n'\in \sT_+$. In fact $J^{\psi\mu}_{n}(x)$ is causal and past  directed or vanishes producing the inequality above just  because $n'$ is timelike and future directed. So that, if $\psi \in {\cal S}({\cal H})$, one can define a spatial localization probability
\beq
\mu^{\psi,n}_{n',t'}(\Delta)  := \int_{\Delta}   J^{\psi\mu}_{n}(x)  n'_\mu d\Sigma_{n',t'}\:,\quad   \Delta \in \cL(\Sigma_{n',t'})\:, \quad -n'\cdot x = t'\:.
\eeq
The divergence theorem, exploiting the fact that $ J^{\psi\mu}_{n}(x)$ rapidly vanishes at spatial infinity and that $\partial_\mu J^{\psi\mu}_{n}(x)= \partial_\mu n^\nu T^{\psi\mu}_{\nu}(x)_n=0$, assures the correct normalization
$$\mu^{\psi,n}_{n',t'}(\Sigma_{n',t'})  := \int_{\Sigma_{n',t'}}   J^{\psi\mu}_{n}(x)  n'_\mu d\Sigma_{n',t'} =
 \int_{\Sigma_{n,t}}   J^{\psi\mu}_{n}(x)  n_\mu d\Sigma_{n,t}= \langle \psi|\sA_{n',t'}(\Sigma_{n,t})\psi \rangle= 1\:.$$
Physically speaking, $\mu^{\psi,n}_{n',t'}(\Delta)$ accounts for  the probability to find a particle in $\Delta \subset \Sigma_{n',t'}$ {\em using detectors which are at rest in $n$ but synchronized with $n'$}. There is no reason why this probability should coincide with $\mu^\psi_{n',t'}(\Delta)=\langle \psi|\sA_{n',t'}(\Delta)\psi \rangle$ as the corresponding energy densities do not.  This result opens a new perspective on the notion of spatial localization  which deserves to be investigated.

Mathematically speaking all that can be encapsulated into a new family of POVMs depending on both $n$ and $n'$ (and $t'$).\\

\begin{theorem} If $n,n' \in \sT_+$ and $t'\in \bR$, there is only one POVM with effects $\sM^n_{n',t}(\Delta)  \in \gB({\cal H}) $ for
$\Delta \in \cL (\Sigma_{n',t'})$ such that
\beq\label{MMM}
\langle \psi | \sM^n_{n',t}(\Delta) \psi \rangle = \int_{\Delta}   J^{\psi\mu}_{n}(x)  n'_\mu d\Sigma_{n',t'}(x) \:\:, \forall \psi \in {\cal S}({\cal H})\:.
\eeq Furthermore the following holds.
\begin{itemize} \item[(1)] It has the form, in terms of the Newton-Wigner  POVM $\sQ_{n',t'}$ on $\Sigma_{n',t'}$,
$$
\sM^n_{n',t'}(\Delta) =\frac{1}{2}\left( \sqrt{\frac{H_{n'}}{H_{n}}}\sQ_{n',t'}(\Delta)
 \sqrt{\frac{H_{n}}{H_{n'}}} +  \sqrt{\frac{H_{n}}{H_{n'}}} \sQ_{n',t'}(\Delta) \sqrt{\frac{H_{n'}}{H_{n}}}\right)
$$
\beq 
-\frac{n\cdot n}{2}  \sqrt{\frac{H_{n'}}{H_{n}}}\left(\eta^{\mu\nu}\frac{P_{n\mu}}{H_{n'}}  \sQ_{n',t'}(\Delta) \frac{P_{n\nu}}{H_{n'}}  + \frac{m}{H_{n'}} \sQ_{n',t'}(\Delta)  \frac{m}{H_{n'}} \right)\sqrt{\frac{H_{n'}}{H_{n}}}\:.\label{MPOVM}
\eeq
(Where the various everywhere-defined bounded  composite operators $H_n/H_{n'}$ etc are defined in terms of the joint spectral measure of $P^\mu$ and standard spectral calculus).
\item[(2)]
It reduces to the Terno POVM  for $n=n'$:
\beq
\sM^n_{n,t}(\Delta) = \sA_{n,t}(\Delta)\:, \mbox{if $n\in \sT_+, t\in \bR$ and $\Delta \in \cL(\Sigma_{n,t})$.}
\eeq
 \item[(3)] The $IO(1,3)_+$ covariance relations are valid,
\beq 
 U_{h}  \sM^n_{n',t'}(\Delta) U_{h}^{-1} = \sM^{\Lambda_h n}_{\Lambda_h n', t'_h}(h\Delta) \:, \quad \forall \Delta \in \cL(\Sigma_{n',t'})\:, \quad \forall h \in IO(1,3)_+\label{ACOVNWT2}\:.
\eeq
\end{itemize}
\end{theorem}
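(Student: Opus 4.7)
My plan parallels the derivation of (\ref{TPOVM2}) in Theorem \ref{TEO0}, adapted now to two distinct reference frames $n$ and $n'$. Uniqueness of $\sM^n_{n',t'}(\Delta)$ as a bounded operator satisfying (\ref{MMM}) is immediate from density: any two bounded candidates agree in diagonal matrix elements on ${\cal S}({\cal H})$, hence by polarization and continuity they agree on all of ${\cal H}$. For existence I would define $B(\Delta)$ by the right-hand side of (\ref{MPOVM}) and verify (\ref{MMM}) on ${\cal S}({\cal H})$. Observe first that in the momentum representation $H_n$ and $H_{n'}$ act as multiplication by $E_n(p)$ and $E_{n'}(p)$ respectively, so they commute, and the joint functional calculus turns $\sqrt{H_{n'}/H_n}$, $\sqrt{H_n/H_{n'}}$, $P_{n\mu}/H_{n'}$, $m/H_{n'}$ into bounded self-adjoint multiplicative operators (the ratio $E_{n'}(p)/E_n(p)$ has finite upper and lower bounds determined only by the rapidity between $n$ and $n'$).

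The crux is a momentum-space computation. Substituting (\ref{wavePHI}) into (\ref{TTT}) and (\ref{53}), using $\partial_\mu \Phi^\psi_n \leftrightarrow ip_\mu\cdot$, $n\cdot p=-E_n(p)$, $n'\cdot p=-E_{n'}(p)$, and Fubini--Tonelli on the Schwartz integrands, one finds for $\psi\in{\cal S}({\cal H})$
$$J^{\psi\mu}_n(x)\,n'_\mu=\int\!\!\int d\mu_m(p)\,d\mu_m(q)\,\frac{\overline{\psi(q)}\psi(p)\,e^{-i(q-p)\cdot x}}{(2\pi)^3}\cdot\frac{\mathcal{K}(p,q)}{\sqrt{E_n(p)E_n(q)}}\,,$$
where $\mathcal{K}(p,q)=\tfrac12[E_{n'}(q)E_n(p)+E_{n'}(p)E_n(q)]-\tfrac{n\cdot n'}{2}[p\cdot q+m^2]$. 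The first bracket originates from the $\partial_\mu\bar\Phi\,\partial_\nu\Phi$ piece of $T^\psi_{\mu\nu\,n}$ (using $(n\cdot p)(n'\cdot q)=E_n(p)E_{n'}(q)$), the second from the scalar $g_{\mu\nu}$--trace. Integrating over $\Delta$ against $d\Sigma_{n',t'}(x)$ and comparing with the Fourier--kernel structure (\ref{firstPVM}) of $\sQ_{n',t'}(\Delta)$ in the $n'$--frame (which carries the weight $\sqrt{E_{n'}(p)E_{n'}(q)}$), each of the three building blocks of $\mathcal{K}(p,q)/\sqrt{E_n(p)E_n(q)}$ matches exactly one term of (\ref{MPOVM}): the two cross-terms in the first bracket reproduce the symmetrized first line of (\ref{MPOVM}) after dressing $\sQ_{n',t'}(\Delta)$ from either side by $\sqrt{H_{n'}/H_n}$ and $\sqrt{H_n/H_{n'}}$, while the $p\cdot q$ and $m^2$ contributions reproduce the two operators in the second line through the multiplicative factors $P_{n\mu}/H_{n'}$ and $m/H_{n'}$, further sandwiched by $\sqrt{H_{n'}/H_n}$.

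With (\ref{MMM}) established for $B(\Delta)$ on ${\cal S}({\cal H})$, the POVM properties follow. Self-adjointness and boundedness are manifest from (\ref{MPOVM}). Positivity is inherited from Proposition \ref{PROPBAST1}: $J^\psi_n$ is causal past-directed or zero and $n'$ is timelike future-directed, so the integrand in (\ref{MMM}) is pointwise $\ge 0$, whence $\langle\psi|B(\Delta)\psi\rangle\ge 0$ on ${\cal S}({\cal H})$ and thus on ${\cal H}$ by continuity. Normalization $B(\Sigma_{n',t'})=I$ follows from the conservation $\partial_\mu J^{\psi\mu}_n=0$ (direct consequence of the Klein--Gordon equation for $\Phi^\psi_n$ and constancy of $n$) together with the divergence theorem, which equates the flux through $\Sigma_{n',t'}$ with that through $\Sigma_{n,t}$, the latter being $\langle\psi|\sA_{n,t}(\Sigma_{n,t})\psi\rangle=\|\psi\|^2$; rapid spatial decay of $\Phi^\psi_n$ for $\psi\in{\cal S}({\cal H})$ kills the side-surface contributions. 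Countable additivity of $B$ is inherited from that of $\sQ_{n',t'}$: $B(\Delta)$ has the form $\sum_j A_j\,\sQ_{n',t'}(\Delta)\,C_j$ with fixed bounded $A_j,C_j$, so the strong $\sigma$-additivity of $\sQ_{n',t'}$ transfers to weak $\sigma$-additivity of $B$.

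Items (2) and (3) are then direct. Setting $n=n'$ in (\ref{MPOVM}) collapses $\sqrt{H_{n'}/H_n}$ to $I$, reducing the first line to $\sQ_{n,t}(\Delta)$, while $-n\cdot n'=1$ reduces the second line to the corrective term of (\ref{TPOVM2}); this proves item (2). For item (3), I would either conjugate each factor of (\ref{MPOVM}) by $U_h$ using $U_h H_n U_h^{-1}=H_{\Lambda_h n}$, $U_h P_{n\mu} U_h^{-1}=(\Lambda_h^{-1})_\mu{}^\nu P_{\Lambda_h n,\nu}$, and (\ref{COVNW}) for $\sQ$, extending these identities to the bounded multiplicative dressings by the commuting functional calculus; or argue directly from (\ref{MMM}) by tracking the covariant transformation of $J^\psi_n$ and the invariance of the Lebesgue measure on the hyperplanes. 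The main technical obstacle of the whole plan is the momentum-space bookkeeping in the second paragraph: matching the three kinematic coefficients in $\mathcal{K}(p,q)$ against the specific symmetrized operator orderings of (\ref{MPOVM}) is the only step that goes beyond the analogous calculation in Theorem \ref{TEO0}.
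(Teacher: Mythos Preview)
Your proposal is correct and follows essentially the same route as the paper: define the candidate operator by the right-hand side of (\ref{MPOVM}), verify by a momentum-space computation that its diagonal matrix elements on ${\cal S}({\cal H})$ reproduce the integral in (\ref{MMM}) via the same kernel $\mathcal{K}(p,q)$, then read off positivity from the sign of $J^{\psi\mu}_n n'_\mu$, $\sigma$-additivity from that of $\sQ_{n',t'}$, and deduce (2) and (3) by specialization and by the covariance of $\sQ$ and the momentum operators. The only cosmetic difference is your normalization argument: you invoke the divergence theorem and conservation of $J^\psi_n$ (as the paper does in the paragraph preceding the theorem), whereas the paper's proof checks $\sM^n_{n',t'}(\Sigma_{n',t'})=I$ by direct algebraic inspection of (\ref{MPOVM}) using $\sQ_{n',t'}(\Sigma_{n',t'})=I$ and $\eta^{\mu\nu}P_{n\mu}P_{n\nu}+m^2I=0$.
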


\begin{proof} ({\em Initial statement and  (1)}). Let us call $F$ the operator defined by the right-hand side of (\ref{MPOVM}). It   is evidently everywhere defined and bounded on ${\cal H}$. By polarization and density of ${\cal S}({\cal H})$, it is completely determined by  the values 
$\langle \psi |F \psi \rangle$ when $\psi \in {\cal S}({\cal H})$. Let us prove that it satisfies (\ref{MMM}).
 Per direct inspection we have that, if $\psi\in {\cal S}(\cal H)$, taking (\ref{TTT}) and (\ref{wavePHI}) into account, the right-hand side  of (\ref{MMM}) can be written, with $-n'\cdot x= t'$
$$\int_{\sV_{m,+}}\sp\sp \sp \sp d\mu(p) \int_{\Delta}\sp\spa  d\Sigma_{n',t'}(x) \int_{V_{m,+}} \sp\sp\sp  d\mu(q) \frac{e^{-i(q-p)\cdot x}}{(2\pi)^3} \frac{E_n(p)E_{n'}(q)+ E_n(q) E_{n'}(p) - n\cdot n'(p^\alpha q_\alpha + m^2)}{2 \sqrt{ E_n(q) E_n(p)}} \overline{\psi(p)}\psi(q)$$
which, in turn,  coincides with $\langle \psi |F \psi\rangle$ when taking (\ref{firstPVM}) into account, as wanted. Notice that (\ref{MMM}) implies that  the everywhere defined extended operator $\sM^n_{n',t'}(\Delta)$
is positive as it is the continuous extension of a positive operator. The family of these operators, with  $n, n',t'$ fixed, is also weakly $\sigma$-additive in $\Delta$ because $\sQ_{n',t'}$ in the right-hand side of  (\ref{MPOVM}) is weakly $\sigma$-additive,  and the operators appearing as factors are bounded and everywhere defined. As the family $\sM^n_{n',t'}(\Delta)$, with $\Delta$ variable in $\cL(\Sigma_{n',t'})$,   is made of positive operators with $\sM^n_{n',t'}(\Sigma_{n',t'})=I$ (direct inspection), we conclude that the said family  (with $n$ fixed) is a (normalized) POVM on $\cL(\Sigma_{n',t'})$.\\
(2) It is obvious from (\ref{MPOVM}) and (\ref{TPOVM2}).\\
(3) The proof immediately arises from the analogous covariance properties of $\sQ_{n,t}$ and the basic covariance properties of $\sH_n$ and composite (bounded everywhere defined) operators $H_n/N_{n'}$, $m/H_{n}$, $P_{n'}^\mu/H_n$.\\
\end{proof}

\begin{remark} {\em For a given $n_0 \in \sT_+$,  the physical meaning of the family of POVMs $$\sM^{n_0}:= \{\sM^{n_0}_{n, t}\}_{n\in \sT_+, t\in \bR}$$ is the {\em notion of spatial position observable, referred to all reference frames $n\in \sT_+$ and every global  time $t\in \bR$ of each such reference frame,  when the used class of detectors is always co-moving with $n_0$}.}\\
\end{remark}

To conclude this work,  I  prove that for every given $n_0 \in \sT_+$,  the family of POVMs $\sM_{n_0}$ satisfies  Castrigiano's causality condition.\\

\begin{theorem}\label{LAST}
For given $n_0\in \sT_+$ and $\psi \in {\cal H}$, define the family of probability measures $\mu^{\psi, n_0}_{n,t}$
$$\mu^{\psi, n_0}_{n,t}(\Delta) := \langle \psi|\sM^{n_0}_{n,t}(\Delta) \psi\rangle\:, \quad n\in \sT_+, t\in \bR, \Delta \in \cL(\Sigma_{n,t})\:.$$
That family satisfies  Castrigiano's causality condition  (b)  in Definition \ref{LCR}. 
$$
\mu^{\psi, n_0}_{n,t}(\Delta) \leq \mu^{\psi, n_0}_{n',t'}(\Delta') \quad \forall n, n'\in  \sT_+ \:,  \forall  t, t'\in \bR\:, \forall \Delta \in \cL(\Sigma_{n,t}) 
$$
where  $\Delta' := \left(J^+(\Delta)  \cup J^-(\Delta) \right)\cap \Sigma_{n',t'}$.\\
In particular,  the  time evolution associated to every $n$ is causal according to (a) Definition \ref{LCR}.\\
\end{theorem}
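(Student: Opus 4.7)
The plan is to lift the divergence-theorem argument of Theorem \ref{TEO2} from one frame to two by exploiting the crucial structural feature emphasised at the start of this section: the four-current $J^\psi_{n_0}$ defined in (\ref{53}) is a \emph{single} vector field on $\bM$, depending only on $n_0$ and $\psi$ and \emph{not} on the pair $(n,n')$ entering Castrigiano's condition. By Proposition \ref{PROPBAST1} it is conserved and everywhere either zero or past-directed causal, and by (\ref{MMM}) it represents each $\mu^{\psi,n_0}_{n,t}(\Delta)$ as the flux of $J^\psi_{n_0}$ through $\Delta \subset \Sigma_{n,t}$ for \emph{every} $n \in \sT_+$ and $t \in \bR$. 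These are precisely the ingredients needed to compare such fluxes across slices of two arbitrary reference frames.

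\textbf{Core geometric step.} I first take $\psi \in \mathcal{S}(\mathcal{H})$ with $\|\psi\|=1$ and $\Delta \subset \Sigma_{n,t}$ a single open ball of finite radius. If $n=n'$ the result reduces to Theorem \ref{TEO2}, so I assume $n \neq n'$. The affine 2-plane $\Pi := \Sigma_{n,t} \cap \Sigma_{n',t'}$ splits $\Delta$, up to a $d\Sigma_{n,t}$-null set, into the two open pieces $\Delta_+ := \{x \in \Delta \,|\, -n'\cdot x < t'\}$ and $\Delta_- := \{x \in \Delta \,|\, -n'\cdot x > t'\}$. For $\Delta_+$ (whose points all lie strictly in the past of $\Sigma_{n',t'}$) the bounded spacetime region $B_+$ enclosed by $\Delta_+$, by $\Delta'_+ := J^+(\Delta_+) \cap \Sigma_{n',t'}$, and by the piecewise smooth null hypersurface $L_+ \subset \partial J^+(\Delta_+)$ lying between the two slices is well defined. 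Applying Stokes' theorem to the 3-form $\nu^\psi_{n_0}$ dual to $J^\psi_{n_0}$ on $B_+$, together with $\partial_\mu J^{\psi\mu}_{n_0}=0$, yields
$$\mu^{\psi,n_0}_{n',t'}(\Delta'_+) - \mu^{\psi,n_0}_{n,t}(\Delta_+) \; = \; \int_{L_+} \nu^\psi_{n_0} \; \geq \; 0,$$
where the non-negativity of the null-surface integral is exactly the sign analysis already carried out in (\ref{CONT}), valid here because $J^\psi_{n_0}$ is past-directed causal. The symmetric argument with $J^-$ in place of $J^+$ gives $\mu^{\psi,n_0}_{n',t'}(\Delta'_-) \geq \mu^{\psi,n_0}_{n,t}(\Delta_-)$, where $\Delta'_- := J^-(\Delta_-) \cap \Sigma_{n',t'}$. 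A short causality argument (any $y \in \Delta'_+ \cap \Delta'_-$ would produce points $x_+ \in \Delta_+$, $x_- \in \Delta_-$ with $x_- - x_+$ causal future-directed yet tangent to the spacelike hyperplane $\Sigma_{n,t}$, forcing $x_+ = x_-$ and contradicting the definitions) shows that $\Delta'_+ \cap \Delta'_- = \emptyset$. Additivity of $\mu^{\psi,n_0}_{n',t'}$ combined with $\Delta'_+ \cup \Delta'_- \subset \Delta'$ then gives $\mu^{\psi,n_0}_{n,t}(\Delta) \leq \mu^{\psi,n_0}_{n',t'}(\Delta')$.

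\textbf{Extension and main obstacle.} From a single open ball I propagate the inequality to finite unions of open balls using the piecewise smooth null-cone argument at the end of Lemma \ref{LEMMAA}, now run on each of the two halves $\Delta_\pm$; to arbitrary open sets via the Vitali-type covering employed in Lemma \ref{LEMMAB} (Theorem 1.26 of \cite{EG}), which is legitimate because $\mu^{\psi,n_0}_{n,t}$ is absolutely continuous with respect to Lebesgue measure on $\Sigma_{n,t}$, as follows from (\ref{MPOVM}) and Proposition \ref{PROP12}(1); to compact sets via nested open approximations; and to general Lebesgue sets via inner regularity of the associated completed measure. Finally I pass from $\mathcal{S}(\mathcal{H})$ to the whole $\mathcal{H}$ by density and the uniform boundedness of the effects $\sM^{n_0}_{n,t}(\Delta)$ and $\sM^{n_0}_{n',t'}(\Delta')$, exactly as in the closing paragraph of Theorem \ref{TEO2}. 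The principal new obstacle, absent in Theorem \ref{TEO2}, is the two-frame geometry: one must split $\Delta$ into $\Delta_\pm$ whenever $\Sigma_{n',t'}$ cuts through it, verify that the two transported images $\Delta'_\pm$ sit disjointly inside $\Delta'$, and justify the Stokes-theorem step on the non-cylindrical regions $B_\pm$, whose ``tops'' lie on a tilted slice of a \emph{different} reference frame from their ``bottoms''. Once the disjointness of $\Delta'_+$ and $\Delta'_-$ is established by the short causality argument above, and once one recalls that null cones issued from a bounded subset of $\Sigma_{n,t}$ have compact intersection with every spacelike Cauchy slice, the remaining geometric pieces follow the single-frame template closely.
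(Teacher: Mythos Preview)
Your proposal is correct and follows essentially the same strategy as the paper's own (sketch of) proof: lift the divergence-theorem argument behind Theorem \ref{TEO2} and its lemmata to the two-frame situation by using that $J^\psi_{n_0}$ is a single conserved, past-directed causal current representing all the measures $\mu^{\psi,n_0}_{n,t}$, and handle the case where $\Sigma_{n',t'}$ cuts through $\Delta$ by splitting $\Delta$ into two halves treated separately. You have in fact supplied more detail than the paper does---in particular the explicit definition of $\Delta_\pm$, the disjointness argument for $\Delta'_+$ and $\Delta'_-$, and the chain of extensions (finite unions $\to$ open $\to$ compact $\to$ Lebesgue $\to$ general $\psi$)---whereas the paper simply states that ``it is convenient to treat separately the two parts of $\Delta$'' and that the proof of Theorem \ref{TEO2} carries over.
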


\noindent {\em Sketch of proof}. Condition (a) in Definition \ref{LCR} is satisfied if condition (b) holds, so that  it  suffices to prove the validity of the latter.
The proof of Theorem \ref{TEO2} and its preparatory lemmata can be performed also for the considered case since the only relevant two facts, for $\psi \in {\cal S}({\cal H})$, are that (i)  the  values of $\mu^{\psi, n_0}_{n,t}(\Delta)$ and $\mu^{\psi, n_0}_{n',t'}(\Delta')$ -- where for the moment $n=n'=n_0$ -- are spatial boundary integrals of the conserved  four current $J^\psi_{n_0}$ and (ii) that $J^\psi_{n_0}$ is either zero or causal and past directed. These facts are valid also dropping the requirement $n=n'=n_0$. It does not matter if  the normal vectors $n$ and $n'$  to the two  hyperplanes containing respectively  $\Delta$ and $\Delta'$  are both parallel to the vector $n_0$ defining $J^\psi_{n_0}$ or not, so we can definitely drop the requirement $n=n'=n_0$. Indeed, in  proving  Theorem \ref{TEO2} the bases of the four-dimensional solid used to integrate the current were orthogonal to $n_0$ just as a contingent fact, due to   the very definition of the measures $\mu^\psi_{n,t}$ which is now relaxed. The only case where the above proof has to be slightly changed  is when the possible intersection of $\Sigma_{n,t}$ and $\Sigma_{n',t'}$ passes through $\Delta$. In that case it is convenient to treat separately the two parts of $\Delta$.  \hfill $\Box$

\section{Discussion} In this work, I rigorously  proved that, when referring to the only issue (I1) of the Introduction, a spatial notion of localization for a massive Klein Gordon particle is possible without problems with causality (with some caveat however, see below), avoiding the pathologies predicted by Hegerfeldt's theorem in particular. 
As is well known from long time, this latter obstruction prevents in particular the existence of  spatially localized  states. 
The crucal mathematical notion is here  the covariant family of POVMs $\sA$ proposed by Terno \cite{Terno} which has been analysed with a broad mathematical  detail, focusing  on its interplay with the popular Newton-Wigner notion of  spatial localization.
This analysis  showed that the notion of localization based on the POVM $\sA$ and the associated first moment in particular, keep many good properties of the Newton-Wigner localization notion while they drop  many problematic issues. 
 To what extent  this notion is compatible with the interpay of  causality and  post-measurement state (I2) was not the object of this work and it will be investigated elsewere.  
Terno's  notion seems  in good agreement with  Castrigiano's notion of causal evolution ((a) Definition \ref{LCR}).
The validity of the very Castrigiano  causality condition ((b) Definition \ref{LCR}) needs more care and a different, perhaps physically more subtle,  analysis than  the case of {\em causal systems} rigorously  treated by Castrigiano \cite{Castrigiano2}.  Terno's notion of spatial localization relies upon the notion of energy density and not upon the notion of density of charge. The former is associated to a conserved {\em tensor field}, the stress energy tensor $T_{\mu\nu}$, instead of a  vector field. 
As a matter of fact, the relevant probability density in the reference frame $n$  is the normalized energy density $T_{\mu\nu} n^\mu n^\nu$. This choice as the apparent drawback that probability densities of different reference frames result to  be incomparable, just because the densities $T_{\mu\nu} n^\mu n^\nu$ and $T_{\mu\nu} n'^\mu n'^\nu$ are not connected by the standard argument based on the conservation law $\partial_\mu T^{\mu\nu}=0$ and the Stokes-Poincar\'e theorem. That law permits to compare different boundary terms where only one normal vector is changed instead of one pair at a time: $n,n \to n',n'$. To test Castrigiano's  causality condition seems to be impossible along that way. However,  the physical interpretation turns out to be of some help at this juncture. The twice presence of $n$ can be relaxed to  a single occurence of a pair of different timelike future-oriented  unit vectors, $n,n'$. The fact that the density $T^{\mu\nu} n_\mu n_\nu'$ is still positive suggests  a new and different  operational interpretation of the  notion  of spatial position.   
To assert that the particle stays in $\Delta \subset \Sigma_{n',t'}$ one should not only specify the reference frame $n'$ and the instant of time $t'$, but one should also make explicit our choice of the rest frame $n$ of the employed  detectors (which actually are energy detectors). 
The relevant density therefore is $J_{n}^\mu n'_\mu \geq 0$, where $J_{n}^\mu := n^\nu T_{\nu}^\mu$.
This picture includes  the  apparently most natural choice is $n= n'$, but one  is also allowed to pick out  $n\neq n'$.  Keeping fixed $n$ and varying $n'$ produces   a new family of POVMs $\sM^n_{n',t'}$ when one varies $n'$ and $t'$. This family satisfies both requirements  (a) and (b) in Definition \ref{LCR}, in particular Castigiano's causality condition (b). It is not clear to the author if this approach is really physically  meaningful and the subject  certainly deserves further investigation and discussion.

Actually something can be said about the causal relation of $\sA_{n,t}(\Delta_1)$ and $\sA_{n',t'}(\Delta_2)$, where $\Delta_2 = (J^+(\Delta_1) \cup J^-(\Delta_1)) \cap \Sigma_{n',t'}$
and $n\neq n'$,  on the ground of a pure mathematical observation. However,  it is not clear if this reasoning   may lead to a proof of  Castrigano's causality condition, especially because there is no evident physical reason behind the following argument. If one assumes that $\psi \in {\cal D}({\cal H})$, and that $\Delta_1\subset \Sigma_{n,t_1}$ has the special form as in  Proposition \ref{PROPSHARP}, then the sharp inequality (\ref{sharp}) is valid. Therefore,  for continuity reasons, keeping fixed $\psi$,  $n$ and $t=t_1$ on the left-hand side of  (\ref{sharp}), that inequality must be still valid if one slightly changes $n'=n$ and $t'= t_2$,  and $\Delta_2$ accordingly. If the neighborhood of values  $(n',t')$ around $(n,t)$  where this inequality holds  were the entire  $\sT_+\times \bR$, one could  use an improvement of the argument already exploited in the main text to pass from the  special type of set $\Delta_1$ to a generic element of $\cL(\Sigma_{n,t})$, possibly  relaxing  $<$ to  $\leq$. The usual density argument of ${\cal D}({\cal H})$ in ${\cal H}$ would conclude the proof.  However, I do not think that the said neighborhood of $(n,t)$ covers the full set of possibilities of the choice of $(n',t')$. All that will be investigated elsewhere.

\section*{Acknowledgments} I am very grateful to D.P.L.Castrigiano  for various remarks, suggestions, and discussions about several issues appearing  in this paper. I thank  S.Delladio, N.Drago, C.Fewster, F.Finster,  S.Mazzucchi, P.Meda,  M.Sanch\'ez for helpful discussions.  I am finally grateful to a referee for very helpful comments a suggestions of various nature, including further relevant  references. This work has been written within the activities of INdAM-GNFM

\appendix
\section{Proof of some propositions}\label{APPPROOFS}

\noindent {\bf Proof of Proposition \ref{PROPS}}. The first two  statements are  evident per direct inspection.  The density property arises from the fact that   the Schwartz space $\cS(\bR^3)$ is dense in $L^2(\bR^3, d^3p)$. Therefore, if $\psi \in {\cal H}$, there is a sequence $\cS(\bR^3) \ni \psi_n$ with $$\int_{\bR^3} \left|\frac{\psi(\vec{p}_n)}{E_n(\vec{p}_n)} - \psi_n(\vec{p}_n)\right|^2 d^3p \to 0\quad \mbox{as $n\to +\infty$.}$$ However 
$\psi' :=\sqrt{ E_n} \psi \in \cS(\bR^3)$ as well, and  $\int_{\bR^3} \left|\psi(\vec{p}_n) - \psi'_n(\vec{p}_n)\right|^2 \frac{d^3p}{E_n(\vec{p}_n)} \to 0$. The sequence of $\psi'_n$ belongs to 
${\cal S}({\cal H})$ by definition and converges to $\psi$ in the topology of ${\cal H}$ so that the thesis is true. \hfill $\Box$\\

\noindent {\bf Proof of Proposition \ref{PA}}.
 The dense subspace  ${\cal S}({\cal H})$ stays in the domains of the considered operators, it is invariant and thereon the operators are symmetric.
The multiplicative action of the one-parameter groups generated by the said four operators leaves   ${\cal S}({\cal H})$ invariant, as it arises per direct inspection. As a consequence of a known corollary of the Stone theorem  (see, e.g., Corollary 7.26 in \cite{Moretti2}) the thesis follows. \hfill $\Box$\\

\noindent {\bf Proof of Proposition \ref{propX}}.
 First observe that $ N^0_{n,t}$ is nothing but $tI$ so that (1) and   (2) are trivial for it. 
Assuming $t=0$, let us focus again on the unitary map (\ref{mapV})
$$ S_n: L^2(\sV_{m,+}, \mu_m) \ni \psi(p) \mapsto \frac{\psi(E_n(p), \vec{p}_n)}{\sqrt{E_n(p)}} \in L^2(\bR^3, d^3p)\quad \mbox{such that 
$\cS(\bR^3) = S_n({\cal S}({\cal H}))$.}$$
 Per direct inspection one sees that $P'_{n \alpha} :=S_n P_{n \alpha} S_n^{-1} $ is still a multiplicative operator
 $\vec{p}_{nk} \cdot$ (for $k = 1,2,3$)
  in $ L^2(\bR^3, d^3p)$. Similarly, from (\ref{CONVNW2}), ${ N'}_{n,0}^{k}:=S_n  N^k_{n 0} S_n^{-1}$ is the (selfadjoint)  multiplicative operator $x^k\cdot $ in $L^2(\bR^3,d^3x)$, where $L^2(\bR^3,d^3x)$ and $L^2(\bR^3, d^3p)$ are connected to each other by the Fourier-Plancherel unitary transform. So that  these sets of operators are exactly the non-relativistic ones in $L^2(\bR^3,d^3p)$ and $L^2(\bR^3, d^3x)$. As a consequence, (1), (2), and  (3) are valid because they are valid for the non relativistic operators if replacing ${\cal S}({\cal H})$
for $\cS(\bR^3) = S_n({\cal S}({\cal H}))$ (e.g., see \cite{Moretti2}) 
 and the considered properties are invariant under unitary maps. If we switch on $t\neq 0$, since $ N^\alpha_{n,t} = U^{(n)-1}_t  N^\alpha_{n,t} U^{(n)}_t$
and $P_{n \alpha} =  U^{(n)-1}_t P_{n \alpha} U^{(n)}_t$  as a consequence of the analogs for the corresponding spectral measures, the found properties are still valid because 
 the evolutor $U_t^{(n)}$ is unitary and leaves ${\cal S}({\cal H})$ invariant. Let us pass to the proof of (5). From (\ref{COVNW}),  $D( N_{n,t}^\alpha) \supset {\cal S}({\cal H})$, and the definition (\ref{NWx}), we have
$$\langle  \psi'| U_h  N_{n,t}^\alpha U_h^{-1} \psi \rangle =   \int_{x\in \Sigma_{n,t}} x^\alpha d\langle \psi'| 
\sQ_{\Lambda_h n,t_h}(hx) \psi \rangle = \int_{hx\in \Sigma_{\Lambda_hn,t_h}} x^\alpha d\langle \psi'| 
\sQ_{\Lambda_h n,t_h}(hx) \psi \rangle$$
$$=\int_{hx\in \Sigma_{\Lambda_hn,t_h}} (h^{-1}hx)^\alpha d\langle \psi'| 
\sQ_{\Lambda_h n,t_h}(hx) \psi \rangle =  \int_{y\in \Sigma_{\Lambda_hn,t_h}} ((\Lambda^{-1}_h)(y-a_h))^\alpha d\langle \psi'| 
\sQ_{\Lambda_h n,t_h}(y) \psi \rangle$$
where $\psi'\in {\cal H}$ and $\psi \in {\cal S}({\cal H})$.
The last integral equals $$\langle \psi' |  (\Lambda^{-1}_h)^\alpha_\beta ( N^{\beta}_{\Lambda_h n, t_{h}} - a_h^\beta I)\psi \rangle\:,$$
which implies the thesis due to  arbitariness of $\psi' \in {\cal H}$. Only (4), i.e., the pair of identities in (\ref{EV2}), remain to be proved for $\alpha=k=1,2,3$.  The first identity 
$U^{(n)\dagger}_t  N_{n,0}^kU^{(n)}_t\psi  =   N_{n,t}^k\psi $ for  $\psi \in {\cal S}({\cal H})$ immediately arises from (\ref{55}). Let us pass to the second identity in (\ref{EV2}).
Define $f(\vec{p}_n) := (S_n\psi)(\vec{p}_n)$ where $S_n$ is the unitary map (\ref{mapV}). The operators  $P_{nk}$ and $H_n$ acts on the functions 
$f=f(\vec{p}_n)$ multiplicatively, respectively with $p_k$ and $\sqrt{\vec{p}^2+m^2}$, whereas $ N^k_{n,0}$ is represented by $i\frac{\partial}{\partial p_k}$, finally $U^{(n)}_t$ is the multiplicative operator with $e^{-it \sqrt{\vec{p}^2+m^2}}$. 
As a consequence, for $\psi,\psi' \in {\cal S}({\cal H})$ (writing $\vec{p}$ in place of $\vec{p}_n$)
$$\langle \psi'|  N^k_{n,t} \psi \rangle =\langle U^{(n)}_t\psi'|  N^k_{n,0} U^{(n)}_{t}\psi \rangle= \int_{\bR^3} d^3p e^{it \sqrt{\vec{p}^2+m^2}} \overline{f'(\vec{p})}
i\frac{\partial}{\partial p_k}e^{-it \sqrt{\vec{p}^2+m^2}} f(\vec{p})$$
where $f= S_n(\psi) \in \cS(\bR^3)$ and $f'= S_n(\psi') \in \cS(\bR^3)$. Using the fact that $f$ and $f'$ are Schwartz, the $t$-derivative of the integral above can be computed by passing the derivative under the sign of integral (by a straightforward use of Lebesgue's dominated convergence theorem) finding
 $$\frac{d}{dt}\langle \psi'|  N^k_{n,t} \psi \rangle =\langle U^{(n)}_t\psi'|  N^k_{n,0} U^{(n)}_{t}\psi \rangle=
i^2  \int_{\bR^3}\sp\sp  d^3p e^{it \sqrt{\vec{p}^2+m^2}} \overline{f'(\vec{p})}
\left[ \sqrt{\vec{p}^2+m^2}, \frac{\partial}{\partial p_k} \right]e^{-it \sqrt{\vec{p}^2+m^2}} f(\vec{p})
  $$
  $$=  \int_{\bR^3}\sp\sp  d^3p e^{it \sqrt{\vec{p}^2+m^2}} \overline{f'(\vec{p})}
 \frac{p_k}{\sqrt{\vec{p}^2 + m^2}}  e^{-it \sqrt{\vec{p}^2+m^2}} f(\vec{p})
 =  \int_{\bR^3}\sp\sp  d^3p \overline{f'(\vec{p})}
 \frac{p_k}{\sqrt{\vec{p}^2 + m^2}} f(\vec{p}) = \langle \psi'| H_n^{-1} P_{nk} \psi \rangle\:.$$
 As the final result does not depend on time, we can argue that
 $$\langle \psi'|  N^k_{n,t} \psi \rangle = \langle \psi'|  N^k_{n,0} \psi \rangle + t  \langle \psi'| H_n^{-1} P_{nk} \psi \rangle\:.$$
 Namely,
  $$\langle \psi'| ( N^k_{n,t} -  N^k_{n,0} - t   H_n^{-1} P_{nk}) \psi \rangle =0\:.$$
 Since $\psi'\in {\cal S}({\cal H})$ which is dense, the found result implies the thesis.
\hfill $\Box$\\

\noindent {\bf Proof of Corollary \ref{CORR}}.
 We shall write $P_k$  in place of  $P_{nk}$ and $H$ in place of $H_n$ for shortness. As $\psi \in {\cal S}({\cal H})$ which is invariant under $P_k$ and $H$,  no domain issues take place in the following. Due to (\ref{EV}), the thesis is equivalent to
 $$\sum_{k=1}^3 \langle \psi|H^{-1}P_{k} \psi\rangle^2< 1\:. $$
 To prove it, observe that $H^{-1} P_k$ is well defined and symmetric on ${\cal S}({\cal H})$, hence
 $$\langle \psi |(H^{-1} P_k) (H^{-1}P_k)\psi \rangle - \langle \psi|H^{-1} P_k \psi\rangle^2 = 
\langle \psi| (H^{-1} P_k - \langle \psi|  (H^{-1} P_k) \psi \rangle I)^2 \psi \rangle
  \geq 0$$ so that, since $(H^{-1} P_k) (H^{-1}P_k)\psi  = H^{-2}P^2_k\psi$ for $\psi \in {\cal S}({\cal H})$,
  $$\langle \psi |H^{-2} P^2_k\psi \rangle\geq  \langle \psi|H^{-1} P_k \psi\rangle^2\:.$$
  As a consequence
  $$1 = \langle \psi|\psi\rangle = \sum_{k=1}^3\langle \psi |H^{-2} P^2_k\psi \rangle+ m^2 \langle \psi |H^{-2}\psi \rangle \geq  \sum_{k=1}^3 \langle \psi|H^{-1} P_k \psi\rangle^2
  +  m^2 \langle \psi |H^{-2}\psi \rangle\:. $$
  Since $m^2\langle \psi|H^{-2} \psi\rangle = m^2|| H^{-1}\psi||^2> 0$  ($H^{-1}\psi=0$ is not possible if $\psi \neq 0$ because, as $H^{-1}: {\cal H}= Ran(H) \to D(H)$, it would imply $0=HH^{-1}\psi = \psi$), the inequality above implies the thesis. \hfill $\Box$\\

\noindent {\bf Proof of Eq.(\ref{limit})}. From (\ref{TPOVM}) and the definition of $\psi_j$, 
 $$\left\langle \psi_j \left|\left( \eta^{\mu\nu}\frac{P_{n\mu}}{H_n}  \sQ_{n,0}(B_R) \frac{P_{n\nu}}{H_n}  + \frac{m}{H_n} \sQ_{n,0}(B_R)  \frac{m}{H_n} \right) \right.\psi_j \right\rangle\:,$$
up to a non-vanishing  multiplicative constant, coincides with
$$I_j = \int_{\bR^6} d^3p d^3 q  \:\overline{\hat{\chi}(\vec{q} -j \vec{a})}\hat{\chi}(\vec{p}-j\vec{a}) f(|\vec{p}-\vec{q}\:|) \frac{p\cdot q + m^2}{E_n(p)E_n(q)}$$
where  $\hat{\chi}$ is a Schwartz function on $\bR^3$ and  $f$ is the Fourier transform (up to a constant factor) of the characteritic function of $B_R$,
$$f(|\vec{p}-\vec{q}|) = \int_0^R \frac{\sin (r |\vec{p}-\vec{q}|)}{|\vec{p}-\vec{q}|}  r dr= \frac{\sin (R |\vec{p}-\vec{q}|)  - R|\vec{p}-\vec{q}| \cos (R|\vec{p}-\vec{q}|)}{|\vec{p}-\vec{q}|^3}\:.$$
Since 
$$\frac{p\cdot q + m^2}{E_n(p)E_n(q)} = \frac{|\vec{p}-\vec{q}|^2((\vec{p}+\vec{q})^2 - (E_n(p)+E_n(q))^2)}{2 E_n(p)E_n(q)(E_n(p)+E_n(q))^2}\:,$$
we have 
$$I_j = \int_{\bR^6} d^3p d^3 q  \:\overline{\hat{\chi}(\vec{q} -j \vec{a})}\hat{\chi}(\vec{p}-j\vec{a})\frac{\sin (R |\vec{p}-\vec{q}|)  - R|\vec{p}-\vec{q}| \cos (R|\vec{p}-\vec{q}|)}{2|\vec{p}-\vec{q}|  E_n(p)E_n(q)}  \frac{((\vec{p}+\vec{q})^2 - (E_n(p)+E_n(q))^2)}{(E_n(p)+E_n(q))^2}\:. $$
Using the fact that the last factor, $\cos u$, and  $u^{-1} \sin u$ are bounded, we have that, for some $C\geq 0$,
$$|I_j| \leq C \int_{\bR^6} d^3p d^3 q \frac{ \:|\overline{\hat{\chi}(\vec{q} -j \vec{a})}||\hat{\chi}(\vec{p}-j\vec{a})|}{ E_n(p)E_n(q)}  
\leq \left(\int_{\bR^3} d^3q\frac{|\hat{\chi}(\vec{q} -j \vec{a})|^{4/2}}{E_n(q)^4}  \right)^{1/4}
\left(\int_{\bR^3} d^3q|\hat{\chi}(\vec{q} -j \vec{a})|^{1/2 \cdot 4/3} \right)^{3/4}$$
$$\left(\int_{\bR^3} d^3p\frac{|\hat{\chi}(\vec{p} -j \vec{a})|^{4/2}}{E_n(p)^4}  \right)^{1/4}
\left(\int_{\bR^3} d^3p|\hat{\chi}(\vec{p} -j \vec{a})|^{1/2 \cdot 4/3} \right)^{3/4}$$
where we have used H\"older's inequality in the last passage. As a matter of fact, since the Lebesgue measure is translationally invariant, there is  $K\geq 0$ such that, uniformly in $j$,
$$|I_j| \leq K \left(\int_{\bR^3} d^3p\frac{|\hat{\chi}(\vec{p} -j \vec{a})|^{2}}{E_n(p)^4}  \right)^{1/2}\:.$$
The integrand is $j$-uniformly bounded by the integrable function $\frac{K'}{E_n(p)^4}$ for some constant $K'\geq 0$ and the integrand vanishes pointwise as $j\to +\infty$ as $\hat{\chi} \in \cS(\bR^3)$. Lebesgue's  dominated convergence theorem implies that $I_j \to 0$ as $j\to +\infty$.  \hfill $\Box$\\

\noindent {\bf Proof of Proposition} \ref{PROPSHARP}. We start where the proof of Lemma \ref{LEMMAA} ends, with the further hypothesis that $\psi \in {\cal D}({\cal H})$. We first consider the case of $\Delta_1$ made of a single ball.   Since $-J^v\geq 0$ is continuous, the integral in (\ref{intinT}) vanishes if and only if $J^v=0$ everywhere on $L$. This is the only possibility for having $\langle \psi | \sA_{n,t_1}(\Delta_1) \psi\rangle = \langle \psi | \sA_{n,t_2}(\Delta_2) \psi\rangle$. Let us prove that $J^v=0$ everywhere in $L$ is not permitted and this fact will conclude the proof. Let us assume that $J^v=0$ on $L$ so that $J_n^\psi$ vanishes or is lightlike on $L$  because  $-J^uJ^v + h(\vec{J}, \vec{J}) \leq 0$ and the only remaining component is $J^u$.
 From Proposition \ref{PROPBAST1} we know that $\Phi^\psi_n(x)=0$ if $x\in L$. Making explicit the form of $\Phi^\psi_n$ on $L$, in terms our coordinate system, we have that
 $$\Phi^{\psi}_n(t, r, \theta, \phi ) =\int_{\sV_{m,+}}  \frac{\psi(p)e^{i |\vec{p}_n|r \cos \alpha - i E_n(\vec{p}_n)t}}{(2\pi)^{3/2}\sqrt{E_n(\vec{p}_n)}} \frac{d^3p}{E_n(\vec{p}_n)} \:,
 $$
 where $$\cos \alpha = \sin\theta \sin \theta_p \cos(\phi-\phi_p) + \cos\theta \cos\theta_p$$ and $\theta_p,\phi_p$ are the polar angles of  $\vec{p}_n$.
Passing to lightlike coordinates and noticing that $L$ is described by $v=0$, we have in particular that it must be
 $$0= \Phi^{\psi}_n(u, v=0, \theta, \phi ) =\int_{\sV_{m,+}}  \frac{\psi(p)e^{i |\vec{p}_n|\frac{u}{2} \cos \alpha - i E_n(\vec{p}_n) \frac{u}{2}}}{(2\pi)^{3/2}\sqrt{E_n(\vec{p}_n)}} \frac{d^3p}{E_n(\vec{p}_n)} \quad u \in [a,b]\:, \theta \in [0,\pi]\:, \phi \in [-\pi,\pi]$$
 where $a<b$ are determined by $t_2-t_1$ and the radius of $\Delta_1$.
Since $\psi$ is continuous with compact support (here the condition $\psi \in {\cal D}({\cal H})$ is used), by a standard argument based on the Cauchy-Riemann identities and the Lebesgue dominated convergence theorem  it is easy to prove that the function in the right-hand side can be analytically extended to complex values of $u$ in the whole complex plane.
 As this function vanishes in the real segment $[a,b]$, it must vanish everywhere in $u \in [0,+\infty)$. \\
We observe for future convenience that the same argument can be used to prove that the integral is an analytic function in the variables $\theta$ and $\phi$ and that if the function vanishes in an open interval in the domain of $\theta$ or in an analogous  open interval  in the domain of  $\phi$, then it must vanish for all the permitted values of these variables, respectively,  $\theta \in  [0,\pi]$ 
and $\phi \in [-\pi,\pi]$. To assert that $\Phi^\psi_n=0$ on the whole conical surface described by $u\in [0,+\infty)$, $\theta \in [0,\pi]$, $\phi\in [-\pi,\pi]$
 it is therefore sufficient that $\Phi^\psi_n=0$ on an open set on that conical surface.\\
The conclusion is that the smooth solution $\Phi^\psi_n$  of the massive Klein-Gordon equation in  $\bM$  vanishes on the whole conical surface defined  by prolonging $\partial J^+(D_1)$ for times $<t_1$ up to the tip of the cone. As is known \cite{Hoermander, Friedlander}, the {\em characteristic Cauchy problem} (also known as the  {\em Goursat problem}) is well-posed inside a Lorentzian cone and thus the only possible solution inside the volume of the cone is $\Phi^\psi_n=0$. In other words our wavefunction, defined in the whole $\bM$ must vanishes in the volume of the cone. In particular, $\Phi_n^\psi(t_1,\cdot)$ and $i\partial_t \Phi_n^\psi(t_1,\cdot) = (\overline{-\Delta +m^2I})^{1/2} \Phi_n^\psi(t_1,\cdot)=0$ in the open ball $\Delta_1$. Theorem \ref{teorem11} implies that it vanishes on the whole $\Sigma_{n,t_1}$.  Inverting (\ref{wavePHI}), we have $\psi=0$ that is not possible since $||\psi||=1$ by hypothesis. The hypothesis $J^v=0$ everywhere on $L$ is untenable and this fact removes the possibility of having $=$ in (\ref{INEQJ}) proving the thesis for the considered case.\\
Let us pass to consider the case of  $\Delta_1 = \Delta^{(1)}_1 \cup \Delta^{(2)}_2$ with the two sets being a pair of  non-empty finite-radius  open balls. 
We can always assume that each ball does not include the other but they can have non-empty intersection.
We have 
\beq \langle \psi | \sA_{n,t_2}(\Delta_2) \psi\rangle - \langle \psi | \sA_{n,t_1}(\Delta_1) \psi\rangle = \int_{L_{12}} \nu^\psi_n\label{DIFF}\eeq
where $L_{12}$ is the part of  $\partial J^+(\Delta^{(1)}_1 \cup \Delta^{(2)})$ which stays between the parallel planes $\Sigma_{n,t_1}$ and $\Sigma_{n,t_2}$.
As before the integral is non-negative because we can apply the previous argument to each portion of conical surface
forming $L_{12}$ and respectively generated by  $\Delta_1^{(1)}$ and $\Delta_2^{(2)}$, taking advantage of two different polar coordinate systems.
 However, the fact that the integral is strictly positive needs a little more care. As before, on account of  Proposition \ref{PROPBAST1},  the value of  integral is zero if and only if $\Phi^\psi_n$ everywhere\footnote{The singular regions of the set $\partial J^+(\Delta^{(1)}_1 \cup \Delta^{(2)})$ where the set ceases to be an embedded submanifold are reached by continuity of $\Phi^\psi_n$.} vanishes on $L_{12}$. We can focus attention on the complete conical surface $\Gamma_1$  which completes  $\partial J^+(\Delta^{(1)}_{1})$ in its past till the tip, centering a system of polar coordinates on its center. It is clear that the intersection of  $\Gamma_1 \cap L_{12}$ includes  an open set (in the relative topology of $\Gamma_1$)  where $\Phi^\psi_n$ vanishes because  it vanishes on the whole $L_{12}$. Using the analyticity argument exploited above, we conclude that $\Phi^\psi_n$ vanishes on the whole $\Gamma_1$, so that it also vanishes in the interior of the cone in view of the characteristic  Cauchy problem  as before, and finally $\Phi^\psi_n=0$ everywhere in $\bM$ due to  Theorem \ref{teorem11} reaching a contradiction $\psi=0$. Hence the right-hand side of (\ref{DIFF}) is strictly positive and the proof for the examined case is over.\\
 To conclude the proof it is sufficient to observe what follows  in the case  $\Delta_1$ is a finite union of distinct finite-radius open balls $\Delta^{(j)}_{1}$, 
 $j=1,\ldots, N$. We can always assume that  no  ball of the family is a subset of another ball of the family.
  Since $N$ is finite,   the region of  $\partial J^+(\Delta_1)$ 
 between $t_1$ and $t_2$  necessarily includes  an open portion of some  $\partial J^+(\Delta^{(j)}_1)$. Working in the conical completion $\Gamma_j$ of  $\partial J^+(\Delta^{(j)}_1)$, we can use the above argument achieving the thesis.

\end{document}